\newcommand*{\circled}[1]{\lower.7ex\hbox{\tikz\draw (0pt, 0pt)%
		circle (.5em) node {\makebox[1em][c]{\small #1}};}}
\newtheorem{myexample}{Example}
\begin{document}

\title{
PM-LSH: a fast and accurate in-memory framework
for high-dimensional approximate NN and closest pair search
}

\author{
Bolong Zheng \and
Xi Zhao \and
Lianggui Weng \and
Nguyen Quoc Viet Hung \and
Hang Liu \and
Christian S. Jensen
}


\institute{
Bolong Zheng, Xi Zhao and Lianggui Weng \at Huazhong University of Science and Technology, Wuhan, China\\
\email{\{bolongzheng,zhaoxi,liangguiweng\}@hust.edu.cn}
\and
Nguyen Quoc Viet Hung \at Griffith University, Gold Coast, Australia\\
\email{quocviethung.nguyen@griffith.edu.au}
\and
Hang Liu \at Stevens Institute of Technology, Hoboken, USA\\
\email{hang.liu@stevens.edu}
\and
Christian S. Jensen \at Aalborg Universigy, Aalborg, Denmark\\
\email{csj@cs.aau.dk}
}

\date{Received: date / Accepted: date}

\maketitle

\begin{abstract}
Nearest neighbor (NN) search is inherently computationally expensive in high-dimensional spaces due to the curse of dimensionality. As a well-known solution, locality-sensitive hashing (LSH) is able to answer $c$-approximate NN ($c$-ANN) queries in sublinear time with constant probability.
Existing LSH methods focus mainly on building hash bucket-based indexing such that the candidate points can be retrieved quickly. However, existing coarse-grained structures fail to offer accurate distance estimation for candidate points, which translates into additional computational overhead when having to examine unnecessary points. This in turn reduces the performance of query processing.
In contrast, we propose a fast and accurate in-memory LSH framework, called PM-LSH, that aims to compute $c$-ANN queries on large-scale, high-dimensional datasets.
First, we adopt a simple yet effective PM-tree to index the data points.
Second, we develop a tunable confidence interval to achieve accurate distance estimation and guarantee high result quality.
Third, we propose an efficient algorithm on top of the PM-tree to improve the performance of computing $c$-ANN queries.

In addition, we extend PM-LSH to support closest pair (CP) search in high-dimensional spaces. We again adopt the PM-tree to organize the points in a low-dimensional space, and we propose a branch and bound algorithm together with a radius pruning technique to improve the performance of computing $c$-approximate closest pair ($c$-ACP) queries.

Extensive experiments with real-world data offer evidence that PM-LSH is capable of outperforming existing proposals with respect to both efficiency and accuracy for both NN and CP search.
\end{abstract}
\section{Introduction}
Nearest neighbor (NN) querying in high-dimensional spaces is classic functionality that is used in a wide variety of important applications, such as sequence matching \cite{DBLP:journals/scientometrics/AbdulhayogluT18}, recommendation \cite{DBLP:conf/www/DasDGR07}, similar-item retrieval \cite{DBLP:conf/iccv/KulisG09}, and de-duplication \cite{DBLP:conf/edbt/NarangB11}, to name but a few. 
Let $\mathcal{D}$ be a set of points in $d$-dimensional space $\mathbb{R}^d$. Given a query point $q$, an NN query returns a point $o^*$ in $\mathcal{D}$ such that its Euclidean distance to $q$ is the minimum among all points in $\mathcal{D}$.

While the exact NN query in low-dimensional space already has efficient solutions \cite{DBLP:conf/sigmod/BeckmannKSS90, DBLP:conf/icde/ChenGLJC15}, providing an efficient solution for large, high-dimensional datasets remains a challenge, as both the query time and the space cost may increase exponentially with respect to the dimensionality. This phenomenon is called the ``curse of dimensionality.'' Fortunately, it frequently suffices to find an approximate nearest neighbor (ANN). 
Given an approximation ratio $c$ ($c>1$) and a query point $q$, a $c$-ANN query returns a point $o$ whose distance to $q$ is at most $cr^*$, where $r^*$ is the distance between $q$ and its exact NN.

A widely adopted locality-sensitive hashing (LSH) method enables computing $c$-ANN queries in sublinear time with constant probability. Generally, LSH maps the points in the dataset to buckets in hash tables by using a set of predefined hash functions that are designed to be locality-sensitive so that close points are hashed to the same bucket with high probability. A query is answered by examining the points that are hashed to the same bucket as the query point, or to similar buckets. Based on their main ideas, we classify the mainstream LSH methods into three categories: 1) \underline{P}robing \underline{S}equence-based (PS) approaches \cite{DBLP:conf/vldb/LvJWCL07,DBLP:journals/pvldb/LvJWCL17,DBLP:conf/sigmod/LiYZXCLNC18}; 2) \underline{R}adius \underline{E}nlargement-based (RE) approaches \cite{DBLP:conf/sigmod/TaoYSK09,DBLP:conf/sigmod/GanFFN12,DBLP:journals/pvldb/HuangFZFN15}; and 3) \underline{M}etric \underline{I}ndexing-based (MI) approaches \cite{DBLP:journals/pvldb/SunWQZL14}. PS approaches use a carefully derived probing sequence to examine multiple hash buckets that are likely to contain the nearest neighbor of a query. RE approaches process a sequence of range queries by enlarging the query range repeatedly until a qualified point is found. In MI approaches, the points are transformed into a low-dimensional, so-called projected space. The coordinates of a point in the projected space are the point's hash values. MI approaches then use a metric index to organize the points such that the distance between two points in the projected space can be used to approximate the distance between them in the original space.

When evaluating the performance of LSH methods, many pertinent performance metrics for $c$-ANN search exist, including efficiency, accuracy, memory consumption, and preprocessing overhead. Among these, both \textit{efficiency} and \textit{accuracy} are important metrics since a desirable algorithm should return results as soon as possible with a quality that is as high as possible, while the memory consumption and preprocessing overhead must be tolerable in the setting of a commodity machine. The performance of LSH depends on two aspects: 1) the estimation of distances between the query point and candidate points; and 2) the probing order of buckets/points. It is proved \cite{DBLP:journals/pvldb/SunWQZL14} that the ratio of the projected distance to the original distance between any two points follows a $\chi^2$ distribution. Therefore, if we are able to estimate the distance between two points accurately, we are able to find high-quality candidates. In addition, a well-designed index structure is required to quickly locate high-quality candidates.

However, the existing LSH methods suffer from either inaccurate distance estimation or unnecessary point probing overhead. For instance, SRS \cite{DBLP:journals/pvldb/SunWQZL14} is the state-of-the-art algorithm that uses an R-tree to index the points in the projected space. By searching the R-tree, SRS is able to iteratively return the next nearest point to $q$. The problem is that finding the next exact NN in an R-tree generally causes additional computational overhead, while the next NN is not necessarily the best next candidate in the original space. Next, Multi-Probe \cite{DBLP:conf/vldb/LvJWCL07} iteratively identifies the next hash bucket to be examined that has the least distance to $q$. However, most of the points in the identified buckets have to be probed due to poor estimation of the distance between $q$ and the candidate point. Finally, QALSH \cite{DBLP:journals/pvldb/HuangFZFN15} shares the same issue as Multi-Probe, and it uses a large number of hash functions that may incur high space consumption.

We propose a fast and accurate in-memory framework, called PM-LSH, for computing $c$-ANN queries on large-scale, high-dimensional datasets. 
The framework consists of three components, namely data partitioning, distance estimation, and point probing. First, we adopt the simple yet effective PM-tree \cite{DBLP:conf/dasfaa/SkopalPS05} to index the points in the projected space. Second, in order to improve the distance estimation accuracy, we exploit the strong relationship between the original and projected distance of any two points, and we develop a tunable confidence interval on the projected distance w.r.t. a given original distance. Third, we propose an efficient algorithm to search the PM-tree with a sequence of range queries with increasingly large radius. PM-LSH is able to achieve both high efficiency and high accuracy when compared with existing LSH methods.

We extend the PM-LSH technique to solve another classical problem, approximate closest pair (CP) search in high dimensional spaces.
Like NN search,  CP search is used in a wide range of settings, such as unsupervised classification or clustering \cite{DBLP:journals/paa/PirbonyehRPNM19}, user pattern similarity search \cite{DBLP:journals/thms/ZhouWJ18}, and geographic information systems \cite{DBLP:journals/geoinformatica/GutierrezS13}, to name but a few.
For a given approximation ratio $c$ ($c>1$) and a dataset $\mathcal{D}$, a $c$-approximate closest pair ($c$-ACP) query returns a point pair $(o_1,o_2)$ with distance at most $cr^*$, where $r^*$ is the distance of the exact closest pair in $\mathcal{D}$.
Early studies mainly adopt space partitioning indexing techniques to solve exact CP queries in two or three dimensions \cite{DBLP:conf/sigmod/CorralMTV00,DBLP:conf/sigmod/HjaltasonS98,DBLP:conf/ssd/ShanZS03,DBLP:journals/dke/CorralMTV04,DBLP:journals/tkde/ShinML03,DBLP:journals/tkde/KimP10}. However, these methods cannot be extended directly to support high-dimensional CP queries efficiently due to the curse of dimensionality. Therefore, improved indexes are proposed to address the effects of dimensionality \cite{kurasawa2011finding,DBLP:conf/sisap/FredrikssonB13,DBLP:conf/sisap/PearsonS14,DBLP:journals/vldb/GaoCLYC15}. Nonetheless, when faced with hundreds or thousands of dimensions, the performance of these methods still degenerates to nearly brute-force performance. 
Thus, another direction is to use dimension reduction methods to solve $c$-ACP, such as LSH or random projection.
For instance, the LSB-tree \cite{DBLP:journals/tods/TaoYSK10} uses a compound hash function to project points into a low-dimensional space and adopts the Z-curve to transform projected points into one-dimensional values that are indexed by a B-tree. 
The candidate point pairs are generated from points with the same Z-values. To improve the query accuracy, $L=O(\sqrt{n})$ B-trees are built, which yields a large space consumption. 
Next, ACP-P \cite{DBLP:conf/pakdd/CaiRZ18} projects the points directly into a one-dimensional space. The points with close distances in the projected space are considered as candidate point pairs. However, the distance estimation is inaccurate and leads to unnecessary candidate verification. 

To compute approximate CP queries, we still employ the PM-tree to index the points in the projected space, which provides an accurate distance estimation for point pairs. Next, we adopt a branch and bound method combined with a radius pruning technique to improve the query efficiency, which enables generation of sufficient candidate pairs with only a small space consumption.
We also note that our method is tunable and enables different trade-offs between query accuracy and query efficiency.

The major contributions are summarized as follows:
\begin{itemize}
\item We present a unified interpretation of the existing mainstream LSH methods and thoroughly analyze the competitors in relation to our method.
\item We propose an accurate and fast method called PM-LSH for c-ANN querying of large-scale, high-dimensional datasets. First, we use the PM-tree to index the points in the projected space. Second, we develop a tunable confidence interval for distance estimation. Third, we propose a c-ANN query algorithm that uses the PM-tree.
\item We extend the PM-LSH to support CP queries. First, we still employ the PM-tree to index the points in the projected space. Next, we propose a branch and bound algorithm together with a radius pruning technique for computing $c$-ACP queries.
\item We conduct an extensive performance study using real datasets that covers the state-of-the-art algorithms, which indicates that PM-LSH is efficient as well as accurate in terms of both the overall ratio and recall for both NN and CP search.
\end{itemize}

The paper extends its conference version \cite{DBLP:journals/pvldb/ZhengZWHLJ20} in several respects. Key extensions include (1) the extension of PM-LSH to support CP queries, (2) the coverage of related work on high-dimensional CP search, and (3) the paper's report on the experimental evaluations of the corresponding proposals. In addition, other parts of the paper have been revised when compared to the conference version.

The rest of the paper is organized as follows. Section \ref{sec:problem} presents the problem setting and preliminaries. Section \ref{sec:competitors} introduces a unified LSH framework, followed by our PM-LSH framework in Section \ref{sec:ourmethod}. Sections \ref{sec:nn} and \ref{sec:cp} introduce the NN and CP query processing based on PM-LSH, respectively. Section \ref{sec:experiments} covers experimental studies that offer insight into the performance of the proposed PM-LSH and the main competitors for both NN and CP search. Section \ref{sec:relatedwork} reviews related work. Finally, Section \ref{sec:conclusion} concludes the paper.

\section{Preliminaries}\label{sec:problem}
We proceed to present the problem definitions of approximate nearest neighbor (NN) and closest pair (CP) search, and the basic idea of LSH. Frequently used notation is summarized in Table \ref{tb:notation}.

\begin{table}
	\caption{Summary of Notations}
	\label{tb:notation}
	\centering
	\begin{tabular}{cl}
		\toprule
		\textbf{Notation} & \textbf{Definition}\\
		\midrule
		$\mathcal{D}$ & Dataset of points in $\mathbb{R}^d$ \\
		$n = |\mathcal{D}|$ & Dataset cardinality \\
		$d$ & Dataset dimensionality \\
		$o$ & A point in $\mathcal{D}$ \\
		$o'$ & A point $o$ in the projected space \\
		$c$ & Approximation ratio \\
		$h(o)$, $h^*(o)$ & Hash functions \\
		$m$ & The number of hash functions\\
		$T$ & The number of candidate points or pairs\\
		$M$ & The node capacity of the PM-tree \\
		\bottomrule
	\end{tabular}
\end{table}

\subsection{Problem Definition}
Let $\mathcal{D}$ be a set of points in $d$-dimensional space $\mathbb{R}^d$ with cardinality $|\mathcal{D}|=n$. Let $\|o_1,o_2\|$ denote the Euclidean distance between points $o_1, o_2 \in \mathcal{D}$.
We define approximate nearest neighbor and closest pair queries in turn.

\begin{figure*}
	\centering
	\begin{minipage}[c]{0.3\textwidth}
		\centering
		\subfigure[Original Space]{\includegraphics[width=\linewidth]{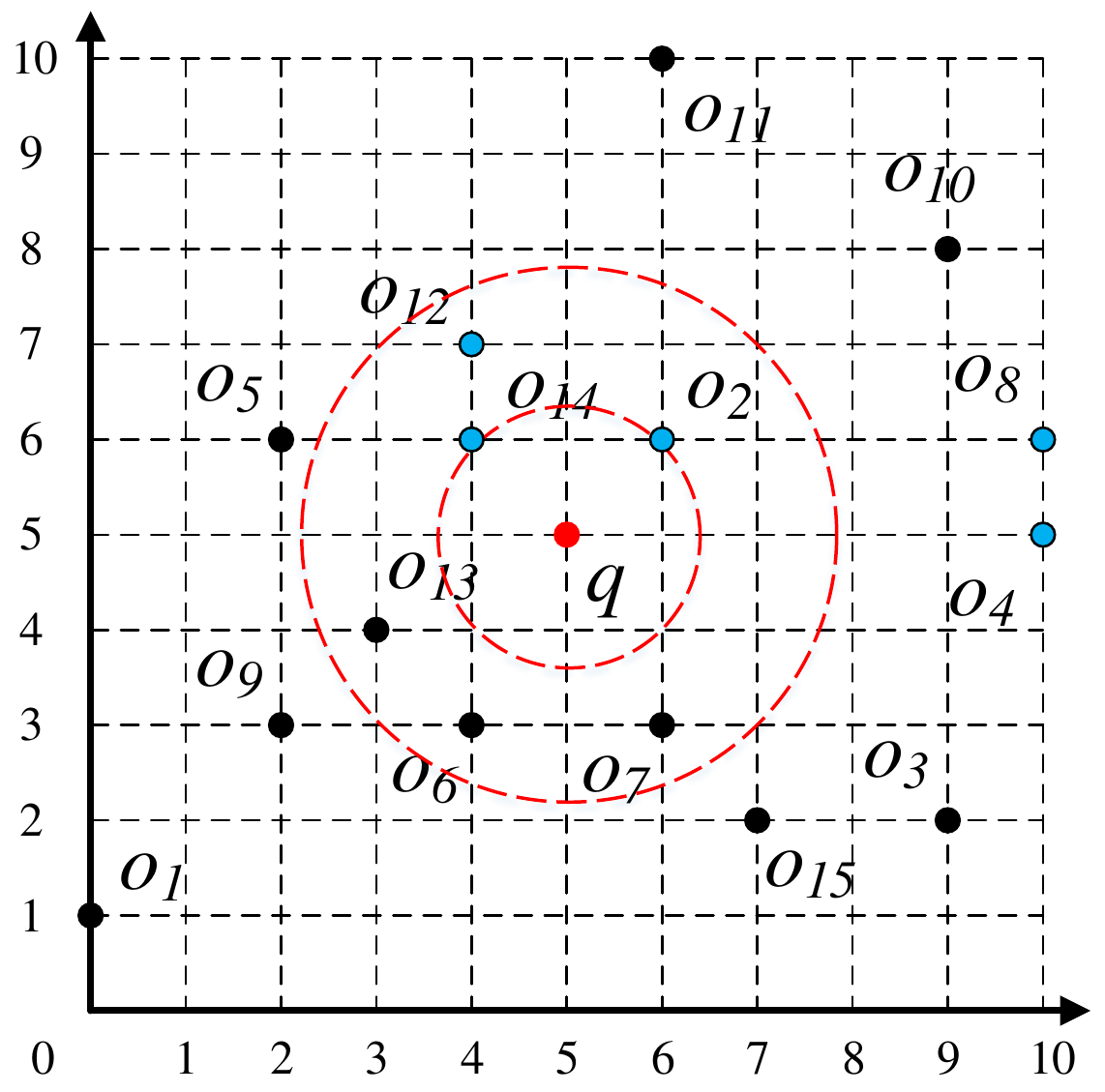}\label{fig:running1}}
	\end{minipage}\hspace{.1in}
	\begin{minipage}[c]{0.27\textwidth}
		\centering
		\subfigure[Projected Space]{\includegraphics[width=\linewidth]{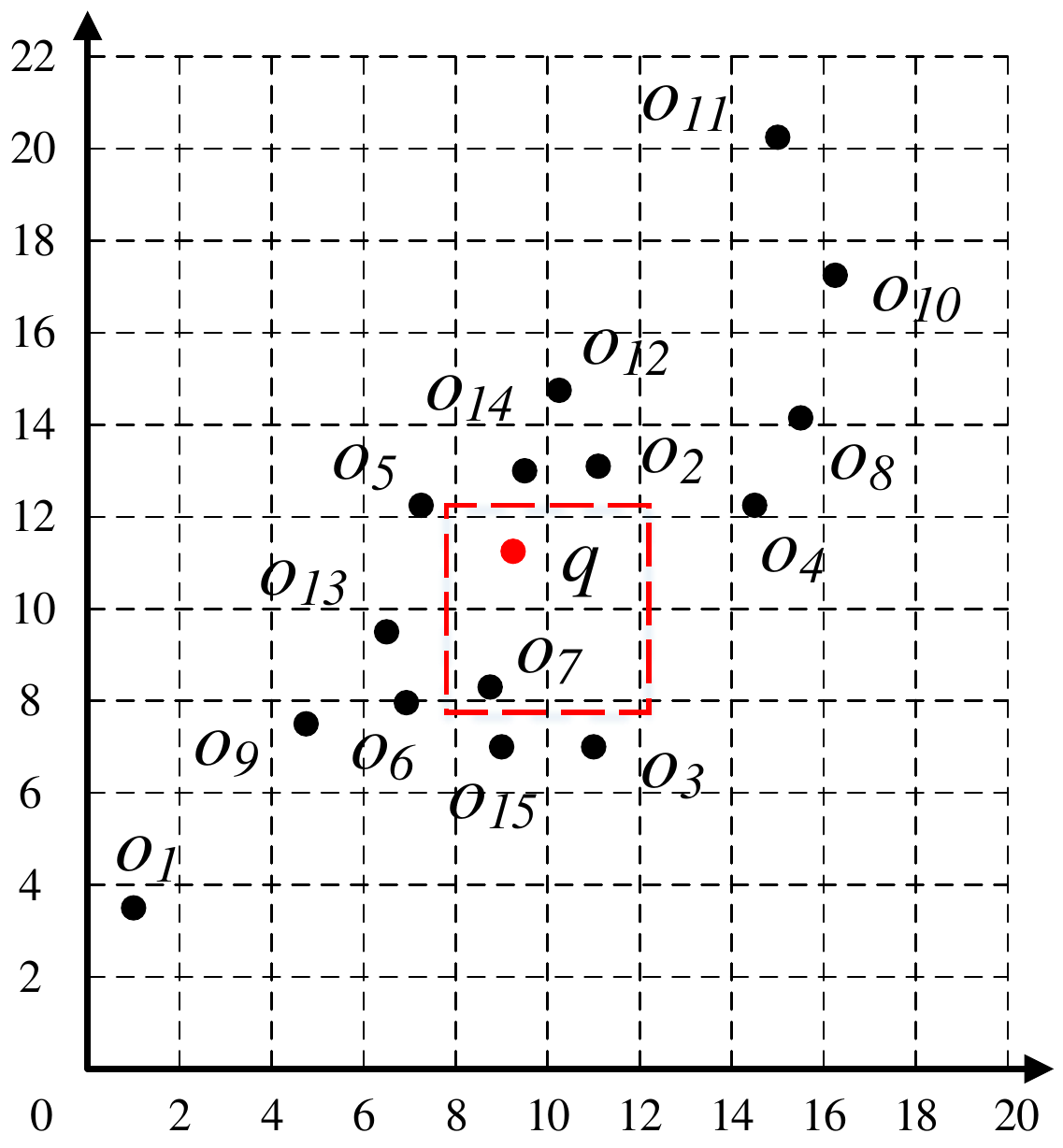}\label{fig:running2}}
	\end{minipage}\hspace{.1in}
	\begin{minipage}[c]{0.33\textwidth}
		\subfigure[Points and Hash Values]{\includegraphics[width=\linewidth]{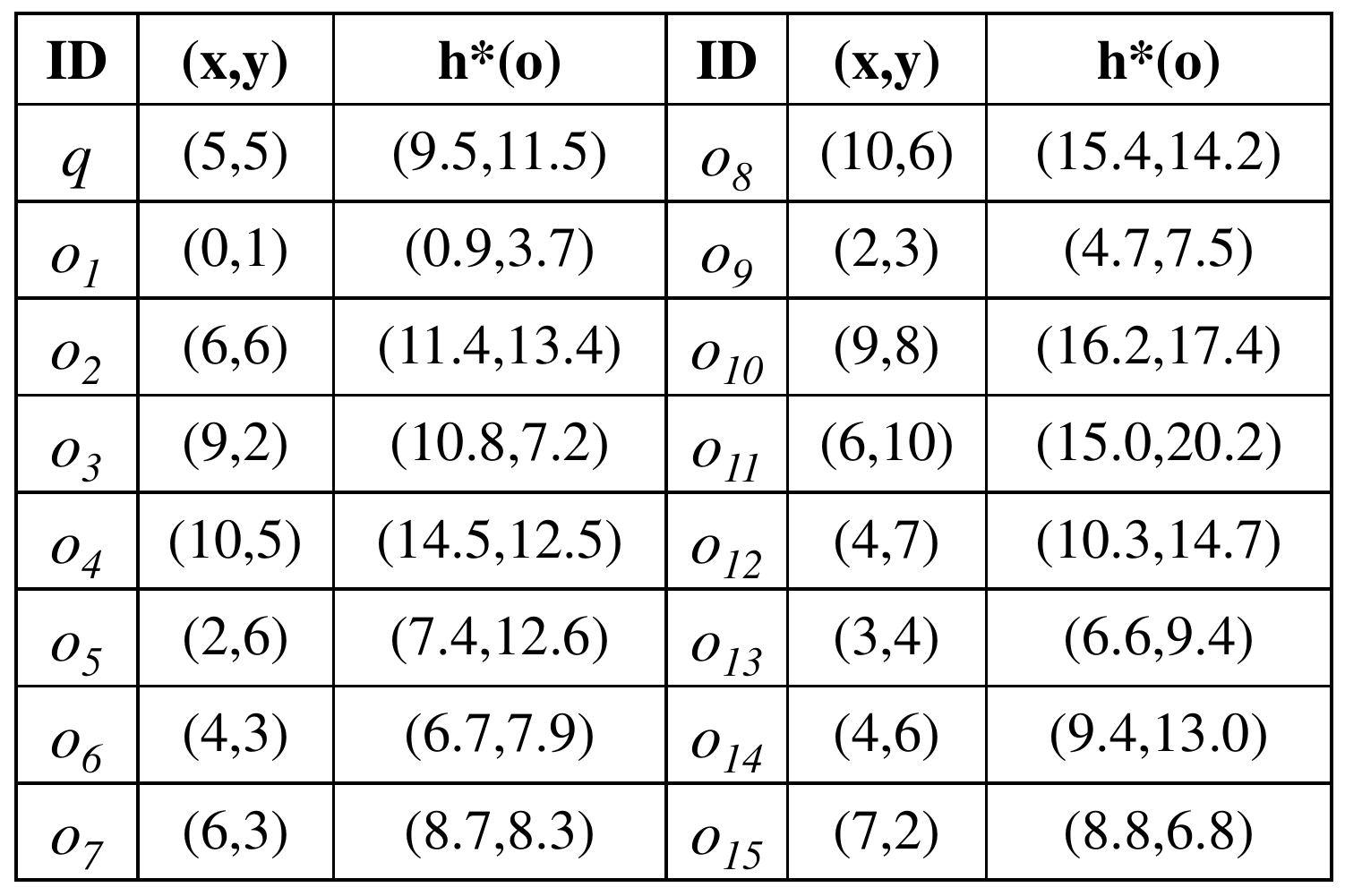}\label{fig:running3}}
	\end{minipage}
	\caption{Running Example with $h_1(o)=\lfloor \frac{\vec{a_1} \cdot \vec{o}}{4} \rfloor$, $h_2(o)=\lfloor \frac{\vec{a_2} \cdot \vec{o}+2}{4} \rfloor$ and $\vec{a_1}=[1.0, 0.9]$, $\vec{a_2}=[0.2, 1.7]$}\label{fig:running}
\end{figure*}

\begin{definition}[$c$-ANN Query]\label{df:cannquery}
Assume a query point $q$ and an approximation ratio $c>1$, and let $o^*$ be the exact nearest neighbor of $q$ in $\mathcal{D}$. A $c$-approximate nearest neighbor query returns a point $o \in \mathcal{D}$ such that $\|q,o\| \leq c \cdot \|q,o^*\|$.
\end{definition}

We generalize the $c$-ANN query to the $(c,k)$-ANN query that returns $k$ approximate nearest points.

\begin{definition}[$(c,k)$-ANN Query]\label{df:ckannquery}
Assume we have a query point $q$, an approximation ratio $c>1$, and a positive integer $k$. Let $o^*_i$ be the $i$-th exact nearest neighbor of $q$ in $\mathcal{D}$. A $(c,k)$-approximate nearest neighbor query returns a sequence of $k$ points $\langle o_1,o_2,\dots,o_k\rangle$ such that for each $o_i$, we have $\|q,o_i\| \leq c \cdot \|q,o^*_i\|$, $i \in [1,k]$.
\end{definition}

\begin{definition}[$c$-ACP Query]\label{df:acpquery}
Assume we have an approximate ratio $c>1$, and let $(o_1^*,o_2^*)$ be the exact closest pair in $\mathcal{D}$. A c-approximate closest pair query returns a point pair $(o_1,o_2) \in \mathcal{D} \times \mathcal{D}$ such that $\|o_1,o_2\| \leq c \cdot \|o_1^*,o_2^*\|$.
\end{definition}

We generalize the $c$-ACP query to the $(c,k)$-ACP query that returns $k$ approximate closest pairs.

\begin{definition}[$(c,k)$-ACP Query]\label{df:ckacpquery}
Assume we have an approximate ratio $c>1$, and a positive integer $k$. Let $(o_{i,1}^*,o_{i,2}^*)$ be the $i$-th exact closest pair in $\mathcal{D}$. A $(c,k)$-approximate closest pair query returns a sequence of $k$ point pairs $\langle (o_{1,1},o_{1,2}),(o_{2,1},o_{2,2}),\dots,(o_{k,1},o_{k,2}) \rangle$ such that for each $(o_{i,1},o_{i,2})$, we have $\|o_{i,1},o_{i,2}\| \leq c \cdot \|o_{i,1}^*,o_{i,2}^*\|$, $i \in [1,k]$.
\end{definition}

\begin{myexample}
As shown in Fig. \ref{fig:running1}, the exact NNs of query $q$ are $o_2$ and $o_{14}$ with distance $\sqrt{2}$. For a $2$-$\mathit{ANN}$ query, any point whose distance to $q$ is within $2\sqrt{2}$ can be considered as a result, i.e., any object in the set $\{o_2, o_{14}, o_{12}, o_{13}, o_6, o_7\}$.

The exact CPs are $(o_4, o_8)$ and $(o_{12}, o_{14})$ with distance $1$. For a $2$-ACP query, any point pair whose distance is within $2$ can be considered as a result, i.e., any pair in the set $\{( o_{6}, o_{7} )$,$( o_{4}, o_{8} )$,$( o_{6}, o_{9} )$,$( o_{6}, o_{13} )$,$( o_{9}, o_{13} )$,\\$( o_{2}, o_{14} )$,$( o_{5}, o_{14} )$,$( o_{12},o_{14} )$,$( o_{3}, o_{15} )$,$( o_{7}, o_{15} )\}$.
\end{myexample}

\subsection{Basic Locality Sensitive Hashing}
We first introduce the LSH scheme, and then explain how to answer the $(r,c)$-ball cover and $c$-ANN queries using the basic LSH \cite{e2lsh,DBLP:conf/compgeom/DatarIIM04}.

\textbf{Hash Family}.
Given a distance $r$, an approximation ratio $c > 1$, two probability values $p_1$ and $p_2$, where $p_1 > p_2$, a family $\mathcal{H}=\{h:\mathbb{R}^d \to U \}$ is called $(r, cr, p_1, p_2)$-locality sensitive, if for any $o_1, o_2 \in \mathbb{R}^d$, it satisfies both of the following conditions:
\begin{enumerate}
	\item If $\|o_1, o_2\| \leq r$ then $Pr[h(o_1)=h(o_2)] \geq p_1$
	\item If $\|o_1, o_2\| \geq cr$ then $Pr[h(o_1)=h(o_2)] \leq p_2$
\end{enumerate}
A well-adopted hash function is formally defined as follows:
\begin{equation}\label{eq:hashfunction}
h(o) = \lfloor \frac{\vec{a} \cdot \vec{o} + b}{w} \rfloor,
\end{equation}
where $\vec{o}$ is the vector representation of a point $o \in \mathbb{R}^d$, $\vec{a}$ is a $d$-dimensional vector where each dimension is drawn independently from a $p$-stable distribution \cite{DBLP:conf/compgeom/DatarIIM04}, $b$ is a real number uniformly and randomly drawn from $[0,w)$, and $w$ is a user-specified constant. The 2-stable distribution is the normal distribution.

Formally, let $\tau = \|o_1,o_2\|$ and let $f(\cdot)$ denote the normal probability distribution function (pdf). We then have:
\begin{equation}\label{eq:collidprob}
	p(\tau)=Pr[h(o_1)=h(o_2)]=\int_{0}^{w} \frac{1}{\tau} \cdot f(\frac{t}{\tau}) \cdot (1-\frac{t}{w})~dt
\end{equation}
The intuition behind Eq. \ref{eq:collidprob} is that, given a fixed $w$, the collision probability of two hash values $h(o_1)$ and $h(o_2)$ grows as the distance $\|o_1,o_2\|$ decreases. Therefore, $h(\cdot)$ in Eq. \ref{eq:hashfunction} is $(r,cr,p_1,p_2)$-sensitive with $p_1=p(r)$ and $p_2=p(cr)$. 

Before we consider how to answer the $c$-ANN query, we define an $(r,c)$-ball cover query that can be answered directly by an $(r,cr,p_1,p_2)$-sensitive hash family.
\begin{definition}[$(r,c)$-BC Query]
Assume a query point $q$, a distance threshold $r$, and an approximation ratio $c > 1$. Let $B(q, r)$ denote a ball centered at $q$ with radius $r$. An $(r,c)$-ball cover query returns the following result:
\begin{enumerate}
	\item If $B(q, r)$ covers at least one point in $\mathcal{D}$, it returns a point in $B(q, cr)$;
	\item If $B(q, cr)$ covers no points in $\mathcal{D}$, it returns nothing.
\end{enumerate}
\end{definition}

E2LSH \cite{e2lsh} is a seminal solution that forms $L$ hash tables and randomly chooses $m$ hash functions for each hash table. By concatenating the $m$ hash functions, a compound hash function $G(o)=(h_1(o),\dots,h_m(o))$ is formed in each hash table, and each point $o \in \mathcal{D}$ is stored in a hash bucket based on $G(o)$. Given a query point $q$, E2LSH computes $G(q)$ and enumerates the points in the corresponding hash bucket. In all $L$ hash tables, it examines at most  $3L$ points and returns a point $o$ if $\|q,o\| \leq cr$. By setting $m=\log_{1/p_2}n$ and $L=1/p_1^k$, the $(r,c)$-BC query can be answered correctly with at least constant probability.

\textbf{From $(r,c)$-BC to $c$-ANN}.
It is easy to see that the ball cover query can be considered as a decision version of the approximate NN query. Processing a sequence of $(r,c)$-BC queries with $r=1,c,c^2,\dots,x$, once a point is returned, we take it as a result of the ANN query. Interestingly \cite{DBLP:conf/stoc/IndykM98}, the ANN query can be answered with approximation ratio $c^2$, i.e., $c^2$-ANN.

\begin{myexample}
In the example in Fig. \ref{fig:running}, we choose $m=2$ hash functions $h_1(o)=\lfloor \frac{\vec{a_1} \cdot \vec{o}}{4} \rfloor$, $h_2(o)=\lfloor \frac{\vec{a_2} \cdot \vec{o}+2}{4} \rfloor$ with $\vec{a_1}=[1.0, 0.9]$, $\vec{a_2}=[0.2, 1.7]$, $b_1=0$, $b_2=2$, and $w=4$. For simplicity, we only construct $L=1$ hash table. Figs. \ref{fig:running2} and \ref{fig:running3} show the coordinates of the objects in the projected space. To answer a $(1,2)$-BC query with $r=1$ and $c=2$, we first compute $G(q)=(h_1(q),h_2(q))=(2,2)$. Then we search the hash bucket $(2,2)$ that is indicated by a red rectangle; the $(1,2)$-BC query returns $o_7$. As $o_{14}$ is the exact NN with $\|q,o_{14}\|=\sqrt{2}$ and
$\|q,o_7\|=\sqrt{5} < 4 \times \sqrt{2}$, we have that $o_7$ is a result of the $4$-ANN query for $q$.
\end{myexample}
\section{A Unified Interpretation of LSH} \label{sec:competitors}
We proceed to introduce the main competitors and give a unified interpretation.

\subsection{Main Competitors}
\textbf{Probing Sequence (PS)}.
The representative PS methods include Multi-Probe \cite{DBLP:conf/vldb/LvJWCL07,DBLP:journals/pvldb/LvJWCL17} and GQR \cite{DBLP:conf/sigmod/LiYZXCLNC18} that use a carefully derived probing sequence to examine multiple hash buckets that are likely to contain the nearest neighbors of a query point. Unlike the basic LSH that builds $L$ hash tables and checks only one hash bucket in each hash table, PS probes multiple nearby buckets in order to achieve higher recall with fewer hash tables. Given a query point $q$, PS adopts a ``generate-to-probe" paradigm that iteratively generates the next hash bucket to be examined with the least distance to $q$ in the remaining buckets. 

\textbf{Radius Enlargement (RE)}.
This category mainly includes the LSB-Tree \cite{DBLP:conf/sigmod/TaoYSK09}, C2LSH \cite{DBLP:conf/sigmod/GanFFN12}, and QALSH \cite{DBLP:journals/pvldb/HuangFZFN15}. These do not build multiple hash tables based on different radii. Generally, RE methods build a hash table like the basic LSH and processes a sequence of $(r,c)$-BC queries by enlarging $r=1,c,c^2,\dots,x$ when a $c$-ANN query is issued. Suppose $r_i = c^i$ and $r_0 = 1$. It has been shown \cite{DBLP:conf/sigmod/GanFFN12} that $h^{r_i}(\cdot)=\lfloor \frac{h(\cdot)}{r_i} \rfloor$ is $(r_i,cr_i,p_1,p_2)$-sensitive. 
Instead of building multiple hash tables with corresponding hash functions $h^{r_i}(\cdot)$ to handle $(r_i,cr_i)$-BC queries, RE methods adopt the smart idea of ``virtual rehashing'' to avoid consuming unnecessary space.
For the $(1,c)$-BC query, RE probes the hash bucket $h(q)$. For the remaining $(r_i,cr_i)$-BC queries, RE probes $r_i^m$ hash buckets near $h(q)$ in the original hash table in the $i$-th iteration. Note that among these $r_i^m$ buckets, $r_{i-1}^m$ buckets were already examined in the previous iteration. Interestingly, it is easy to see that the $r_i^m$ hash buckets in the original hash table correspond to the hash bucket $h^{r_i}(q)$ in the hash table w.r.t. $h^{r_i}(\cdot)$.

\textbf{Metric Indexing (MI)}.
SRS \cite{DBLP:journals/pvldb/SunWQZL14} is the state-of-the-art algorithm that projects the points from the original $d$-dimensional space into a lower $m$-dimensional projected space by using $m$ hash functions. It utilizes an R-tree to index the points based on their hash values and uses the Euclidean distance between two points in the projected space to approximate their distance in the original space. The intuition is that the points close to the query point $q$ in the projected space are also likely close to $q$ in the original space. SRS repeatedly calls an incSearch function that utilizes the R-tree to return the next nearest point to $q$ in the projected space. 

\subsection{A Way of Probing}
We proceed to introduce a unified interpretation of existing LSH methods as shown in Fig. \ref{fig:framework}, which consists of three components, namely data partitioning, distance estimation, and point probing.

\begin{figure}
	\centering
	\includegraphics[width=.4\textwidth]{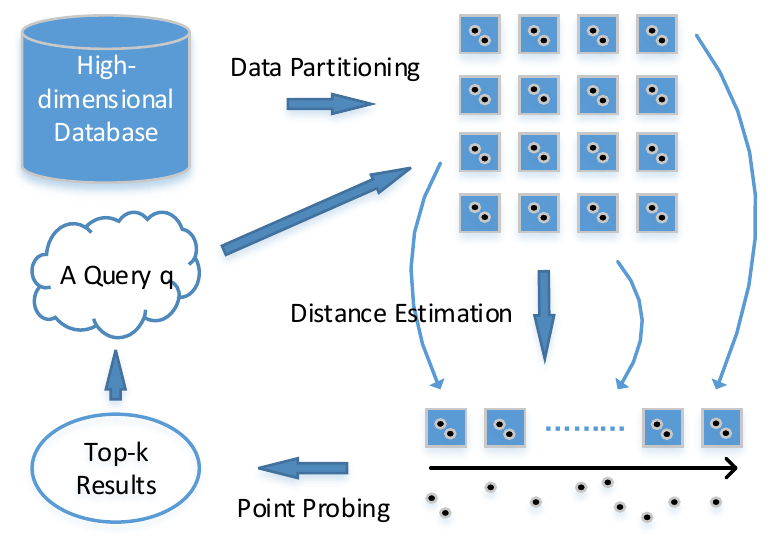}
	\caption{Unified LSH Framework}\label{fig:framework}
\end{figure}

Generally, we adopt a random projection $h(o)$ as the locality sensitive hash functions:
\begin{equation}
h^*(o) = \vec{a} \cdot \vec{o}
\end{equation}
By using $h^*(o)$, the points in the original space are mapped into a projected space, as shown in Figs. \ref{fig:running1} and \ref{fig:running2}. Let $o'=[h^*_1(o),\dots,h^*_m(o)]$ denote point $o$ in the projected space. For any two points $o_1$ and $o_2$, let $r=\|o_1,o_2\|$ and $r'=\|o_1', o_2'\|$ denote the distance between $o_1$ and $o_2$ in the original and in the projected space, respectively. In addition, we let $\rho(o_1,o_2)$ denote an $m$-dimensional vector, where each dimension is the hash value difference between points $o_1$ and $o_2$, i.e., $\rho_i=h_i^*(o_1)-h_i^*(o_2)=o_1'[i]-o_2'[i]$. Therefore, we have $r'=\sqrt{\sum_{i=1}^{m} {\rho_i^2}}$.

Based on a property of a $2$-stable distribution, for any $d$ real numbers $o[1],\dots,o[d]$, independent and identically distributed (i.i.d.) random variables $X_1,\dots,X_d$ (corresponding to $\vec{a}$) following the $2$-stable distribution, $\sum_i o[i] \cdot X_i$ has the same distribution as the variable $(\sum_{i=1}^{d} o[i]^2)^{1/2} \cdot X$, where $X$ is a random variable with distribution $N(0,1)$.
For any two points $o_1$ and $o_2$, since $\rho=h^*(o_1)-h^*(o_2)=\vec{a} \cdot (\vec{o_1} - \vec{o_2})$, we know that $\rho$ is a random variable with distribution $r \cdot X$. In other words, $\rho$ has distribution $N(0, r^2)$, i.e., $\frac{\rho}{r} \sim N(0,1)$. 

\begin{lemma}\label{lemma:X2}
${r'^2}/{r^2}$ follows the distribution $\chi^2(m)$.
\end{lemma}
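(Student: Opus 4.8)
The plan is to reduce the statement directly to the definition of the chi-squared distribution. The key algebraic observation is that the projected squared distance decomposes coordinate-wise, so that
\begin{equation}
\frac{r'^2}{r^2} = \frac{\sum_{i=1}^{m} \rho_i^2}{r^2} = \sum_{i=1}^{m} \left(\frac{\rho_i}{r}\right)^2,
\end{equation}
using $r' = \sqrt{\sum_{i=1}^{m} \rho_i^2}$ from the preceding discussion. Thus it suffices to recognize the right-hand side as a sum of squares of $m$ standard normal variables.

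First I would invoke the fact already established in the text that, for each coordinate $i$, the normalized hash difference satisfies $\rho_i/r \sim N(0,1)$; this follows from the $2$-stable property applied to $\rho_i = \vec{a_i} \cdot (\vec{o_1} - \vec{o_2})$. Second, I would argue that the variables $\rho_1/r, \ldots, \rho_m/r$ are mutually independent: each $\rho_i$ depends only on the random projection vector $\vec{a_i}$ used by the $i$-th hash function, and these vectors are drawn independently across the $m$ hash functions. Consequently $\{\rho_i/r\}_{i=1}^{m}$ is a collection of $m$ i.i.d.\ $N(0,1)$ random variables. Finally, by the very definition of the chi-squared distribution---the law of a sum of squares of $m$ independent standard normals having $m$ degrees of freedom---the quantity $r'^2/r^2$ follows $\chi^2(m)$.

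The only step requiring any care is the independence argument, since marginal normality of each $\rho_i$ alone would not suffice to guarantee a chi-squared sum. However, this is immediate in our setting: because the hash functions employ separately and independently generated projection vectors $\vec{a_i}$, the differences $\rho_i$ inherit this independence, and no joint-distribution computation is needed. I therefore expect the proof to be brief---essentially a one-line reduction to the definition of $\chi^2(m)$ once the independence of the coordinates is noted.
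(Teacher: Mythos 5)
Your proof is correct and follows essentially the same route as the paper's: write $r'^2/r^2 = \sum_{i=1}^{m}(\rho_i/r)^2$, note each $\rho_i/r \sim N(0,1)$ by $2$-stability, and invoke the definition of $\chi^2(m)$ as a sum of squares of $m$ i.i.d.\ standard normals. The only difference is that you explicitly justify independence via the independently drawn projection vectors $\vec{a_i}$, a step the paper's proof leaves implicit.
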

\begin{proof}
If $Y_1,\dots,Y_m$ are i.i.d. variables with $N(0,1)$ then $\sum_{i=1}^{m} Y_i^2$ follows the $\chi^2$ distribution with $m$ degrees of freedom. Given $m$ hash functions $h_1^*(\cdot),\dots,h_m^*(\cdot)$, for any $o_1$ and $o_2$, we have $\rho_1,\dots,\rho_m$. Thus, ${r'^2}/{r^2}$ follows the distribution $\chi^2(m)$. 
\end{proof}

\textbf{Data Partitioning}.
After mapping the points into the projected space by using hash functions, the existing LSH methods adopt the ``divide-and-conquer'' paradigm that partitions the projected space into subspaces. When a query is issued, the regions that are likely to contain the results are probed, and finally the results of these regions are combined and returned. Generally, there are two kinds of data partitioning approaches in the existing LSH methods:

(1) Interval based Partitioning. The basic LSH constructs hash buckets based on $G(o)$, and each bucket can be viewed as an $m$-dimensional hypercube with equal side lengths $w$. Most of the LSH methods belong to this class, including Multi-Probe, LSB-Tree, C2LSH, and QALSH. Specifically, an LSB-Tree assigns each hypercube a Z-order value and stores the values in a B-tree. In contrast, QALSH does not physically build hypercubes, but stores the values of $h^*(o)$ in a B$^+$-tree. When a query arrives, the length-$w$ intervals are formed virtually on the B$^+$-tree.
	
(2) Metric Space Partitioning. SRS uses an R-tree to index all the points $o'$ in the projected space such that incremental $k$NN search is supported. For in-memory processing, it is also able to use a Cover Tree. In our proposed PM-LSH, we partition the projected space using a PM-tree so that efficient range querying can be supported.

\textbf{Distance Estimation}.
In order to accurately estimate distances, two aspects are considered, i.e., the distance estimator and the estimation granularity.

(1) Distance Estimator.
As $\rho$ follows distribution $N(0,r^2)$, for any $o_1$ and $o_2$, $\rho(o_1,o_2)=[\rho_1,\dots,\rho_m]$.
We estimate the value of $r^2$ by using $r'^2$ as follows.
\begin{lemma}\label{lemma:estimator}
$\hat{r}^2=\frac{r'^2}{m}$ is an unbiased estimator of $r^2$.
\end{lemma}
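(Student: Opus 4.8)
The plan is to verify the defining property of an unbiased estimator directly, namely that $E[\hat{r}^2] = r^2$. Since $\hat{r}^2 = r'^2/m$ with $m$ a fixed constant, this reduces to computing $E[r'^2]$ and checking that it equals $m r^2$. First I would invoke Lemma \ref{lemma:X2}, which states that $r'^2/r^2$ follows the $\chi^2(m)$ distribution. A standard fact about the chi-squared distribution is that its mean equals its number of degrees of freedom, so $E[r'^2/r^2] = m$. Treating $r^2$ as a deterministic quantity (the fixed original distance between $o_1$ and $o_2$) and pulling it out of the expectation, I obtain $E[r'^2] = m r^2$, and hence $E[\hat{r}^2] = E[r'^2/m] = r^2$, which is exactly the unbiasedness claim.

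As an alternative route that avoids quoting the chi-squared mean, I would argue directly from the distributional facts established just before the lemma. Recall that $r'^2 = \sum_{i=1}^m \rho_i^2$, where each $\rho_i$ has distribution $N(0, r^2)$. For a zero-mean normal variable, the second moment coincides with the variance, so $E[\rho_i^2] = r^2$ for every $i$. By linearity of expectation,
\begin{equation}
E[r'^2] = \sum_{i=1}^m E[\rho_i^2] = m r^2,
\end{equation}
and dividing by $m$ again yields $E[\hat{r}^2] = r^2$. This second derivation has the advantage of being self-contained, relying only on the previously established fact that $\rho_i/r \sim N(0,1)$ rather than on the named moment of the $\chi^2$ family.

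Honestly, there is no substantive obstacle in this proof; it is a one-line consequence of the chi-squared mean (or equivalently of the second moment of a centered Gaussian). The only point requiring mild care is the conceptual distinction between what is random and what is fixed: $r$ is the true original distance and is a constant, while $r'$ is a random quantity depending on the random projection vectors $\vec{a}$, so the expectation is taken over the randomness of the hash functions. As long as this is stated clearly, pulling $r^2$ through the expectation is legitimate. I would therefore present the argument via Lemma \ref{lemma:X2} for brevity, noting that the estimate is exact in expectation and independent of $m$, which is the property that makes it a natural basis for the confidence-interval construction used later in the paper.
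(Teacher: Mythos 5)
Your proof is correct and matches the paper's own argument: the paper likewise computes $E[r'^2]=\sum_{i=1}^{m}E[\rho_i^2]=mr^2$ via linearity of expectation and $\rho_i\sim N(0,r^2)$, then divides by $m$, which is exactly your second derivation (and your first route via the mean of $\chi^2(m)$ is just a repackaging of the same fact). The only difference is that the paper additionally offers a maximum-likelihood derivation of the same estimator as a supplementary remark, which is not needed for unbiasedness.
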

\begin{proof}
Let $\hat{r}^2$ be the estimated value of $r^2$.
We compute the expectation of $r'^2$ as follows.
$$E[r'^2]=E[\sum_{i=1}^{m} {\rho_i^2}]=\sum_{i=1}^{m} E[{\rho_i^2}]=mr^2$$
Therefore, we have $E[\hat{r}^2]=E[r'^2]/m=r^2$.

We also provide an interesting alternative proof that uses maximum likelihood estimation (MLE) \cite{DBLP:books/lib/HarrisS98}.
MLE is a procedure for finding the value of one or more parameters for a given statistic that maximizes the known likelihood distribution.
As $Pr[\rho = \rho_i]=\frac{1}{\sqrt{2\pi}r } \exp( -\frac{\rho_i^2}{2r^2})$, the probability that the hash value difference $\rho(o_1,o_2)$ between $o_1$ and $o_2$ equals $[\rho_1,$ $\dots,\rho_m]$ is computed as follows.
$$
\begin{aligned}
Pr[\rho(o_1,o_2)&=[\rho_1,\dots,\rho_m]] \\
&=f(\rho_1,\dots,\rho_m | \mu = 0, \sigma = r) \\
&=\prod_{i=1}^{m}(\frac{1}{\sqrt{2\pi} r})^m \exp( -\frac{\sum_{i=1}^m \rho_i^2}{2r^2})\\
\end{aligned}
$$
The objective of the maximum likelihood is to find an $r$ such that the above probability is maximized.
Given $\ln f = -\frac{1}{2}m \ln (2\pi) - m \ln r - \frac{\sum \rho_i^2}{2r^2}$ and $\frac{\partial(\ln f)}{\partial r} = - \frac{m}{r}+\frac{\sum \rho_i^2}{r^3}=0$, we obtain $\hat{r}^2=\frac{\sum_{i=1}^m \rho_i^2}{m}=\frac{r'^2}{m}$.
\end{proof}

\begin{figure}
	\centering
	\subfigure[Recall]{\includegraphics[width=.49\linewidth]{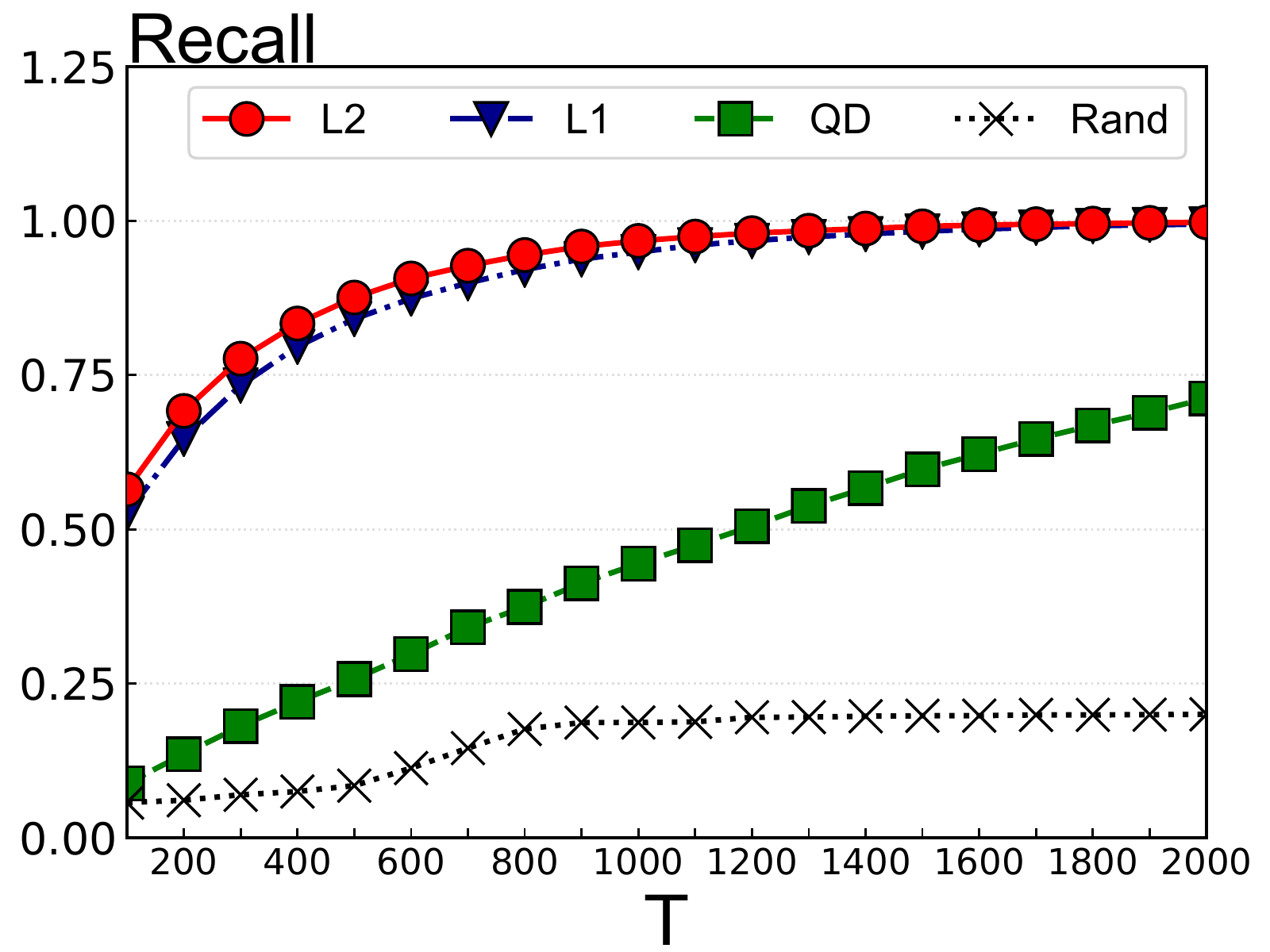}\label{fig:de1}}
	\subfigure[Overall Ratio]{\includegraphics[width=.49\linewidth]{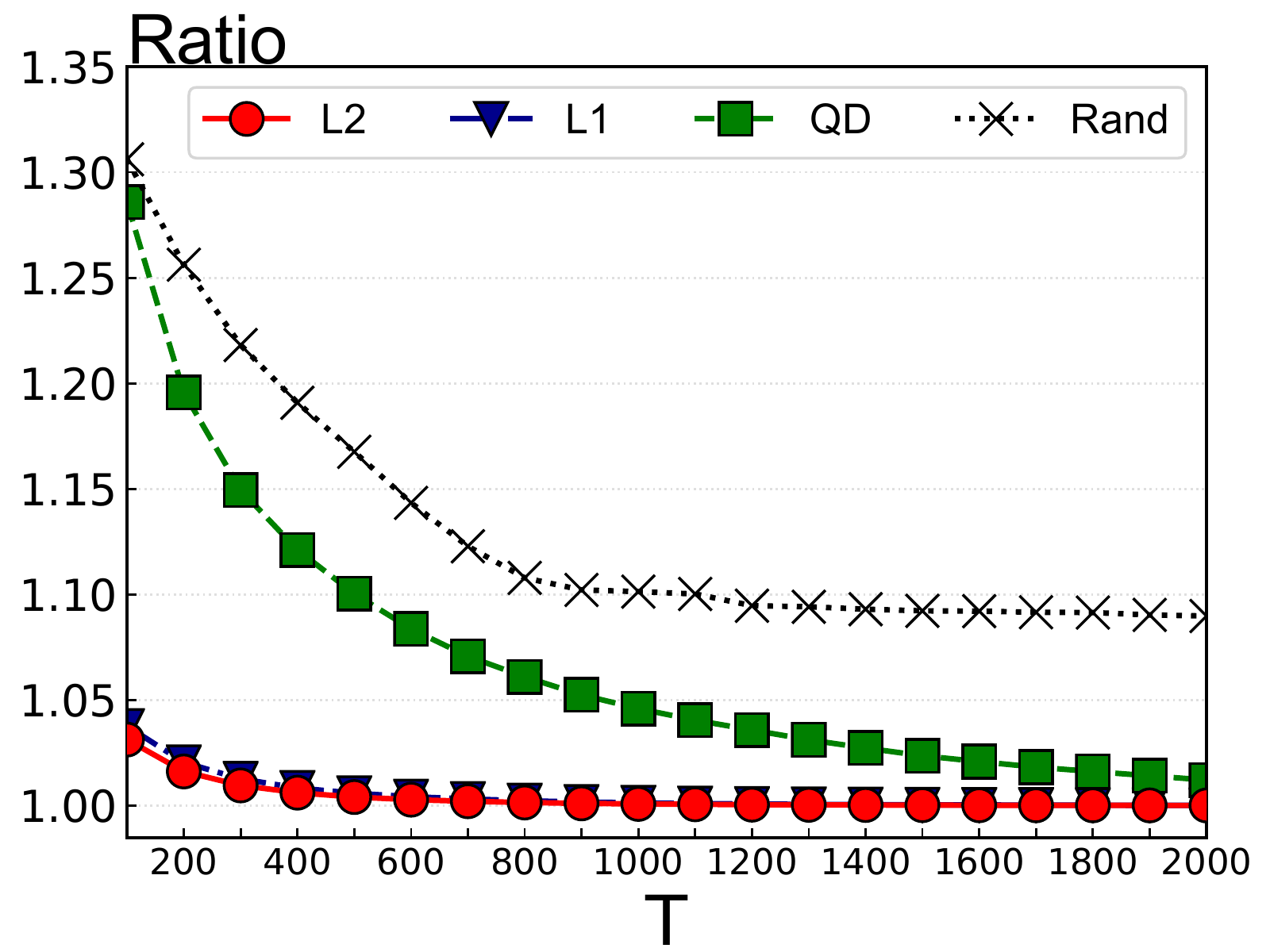}\label{fig:de2}}
	\caption{Comparison on Recall and Overall Ratio}\label{fig:de}
\end{figure}

To evaluate the performance of our estimator in Lemma \ref{lemma:estimator}, i.e., $L_2=r'$ (the same as our estimator when $m$ is fixed), we compare it with three other distance estimators: $L_1$, QD \cite{DBLP:conf/sigmod/LiYZXCLNC18}, and Rand (assign a random value). 
We randomly sample a small dataset that contains 10K points from the \textit{Trevi} dataset \cite{DBLP:journals/tkde/LiZSWLZL20} and choose 100 points as query points. For each query point $q$, we first compute its exact 100NNs. With $m=15$ hash functions, we compute the distances in the projected space between $q$ and all the points based on different estimators. Then, we choose the top-$T$ points with smallest estimated distances ($T$ varies from 100 to 2,000). For each $q$, we compare its exact 100NNs with the 100NNs from the $T$ points.
Finally, we compute the average \textit{recall} and \textit{overall ratio} (discussed in Section \ref{sec:experiments}) of these estimators. As shown in Fig. \ref{fig:de}, we can see that our estimator has the best performance in terms of both the recall and overall ratio. 

(2) Estimation Granularity.
The distance estimation methods may use different granularities:
\begin{itemize}
\item Bucket to Bucket. The hash bucket based indexing methods, such as Multi-Probe, LSB-tree, and C2LSH, store points in hash buckets. When a query is issued, we first find its corresponding bucket and then decide which additional buckets to probe. Therefore, the quality of the distance estimation between buckets is affected by the bucket side length $w$.
\item Point to Bucket. QALSH is an improved version of C2LSH that stores points by a B$^+$-tree instead of using a hash table. When a query $q$ arrives, the length-$w$ intervals are conceptually built on the B$^+$-tree with $q$ as the center. So the distance estimation can be considered as between point $q$ and bucket intervals.
\item Point to Point. SRS uses the projected Euclidean distance between two points to estimate their original distance. This offers a finer precision than the previous two methods due to the fine granularity. Our PM-LSH also adopts this method.
\end{itemize}

\textbf{Point Probing}.
Suppose we probe $T$ points.
In the hash bucket based indexing methods, we directly probe the points in the bucket, where the time cost is $O(T)$.
The second approach is QALSH that searches the points in a B$^+$-tree, where the time cost is $O(\log n + T)$.
Unlike the previous two approaches, SRS indexes the points with an R-tree and iteratively finds the next NN in the projected space. The time cost is $O(\log n \cdot T)$. Our PM-LSH can be considered as a combination of the second and third approaches in that we build a PM-tree in the projected space and execute range queries to retrieve points.
\section{The PM-LSH Framework} \label{sec:ourmethod}
We proceed to present the details of the PM-LSH framework. As mentioned previously, the RE methods quickly probe the points stored in the hash buckets by enlarging the search radius, but suffer from inaccurate distance estimation due to a coarse-grained index structure, which translates into computational overhead when having to examine unnecessary points. 
In contrast, the MI methods index the points with an R-tree and iteratively return the next nearest point to $q$ in the projected space. However, finding the next exact NN in an R-tree is also computationally costly, and the next NN is not necessarily the best next candidate in the original space. To achieve the best of both worlds, PM-LSH combines the ideas of the RE and MI methods. To achieve both efficiency and accuracy, we adopt the PM-tree instead of the R-tree to index the points in the projected space and execute a sequence of range queries with increasingly large radius.

Next, we briefly describe how to construct a PM-tree. Then, we analyze the cost models of the PM-tree and the R-tree to understand how the PM-tree performs better than the R-tree for the relevant range query workload. Finally, we present the details of the algorithms.

\begin{figure}
	\centering
	\subfigure[Space Partitioning]{\includegraphics[width=.5\linewidth]{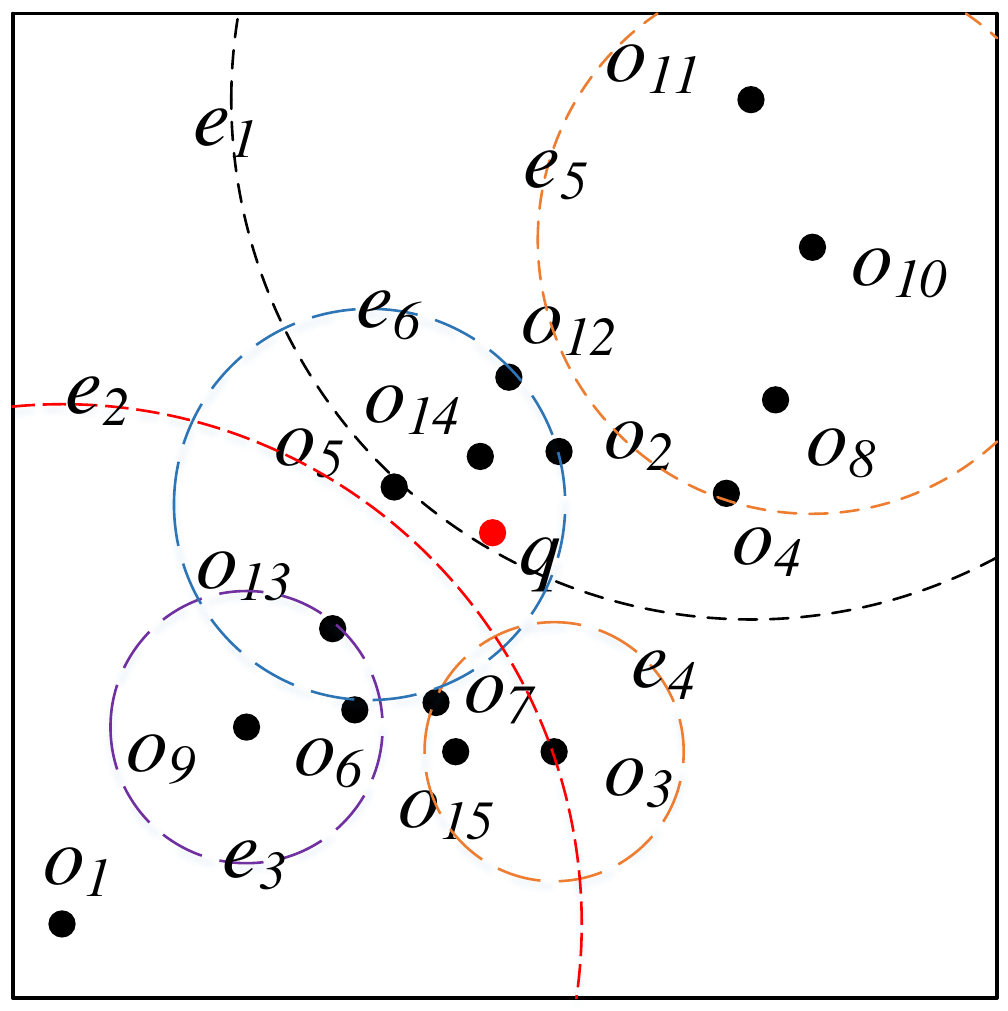}\label{fig:spacepartition}}
	\subfigure[PM-tree]{\includegraphics[width=.98\linewidth]{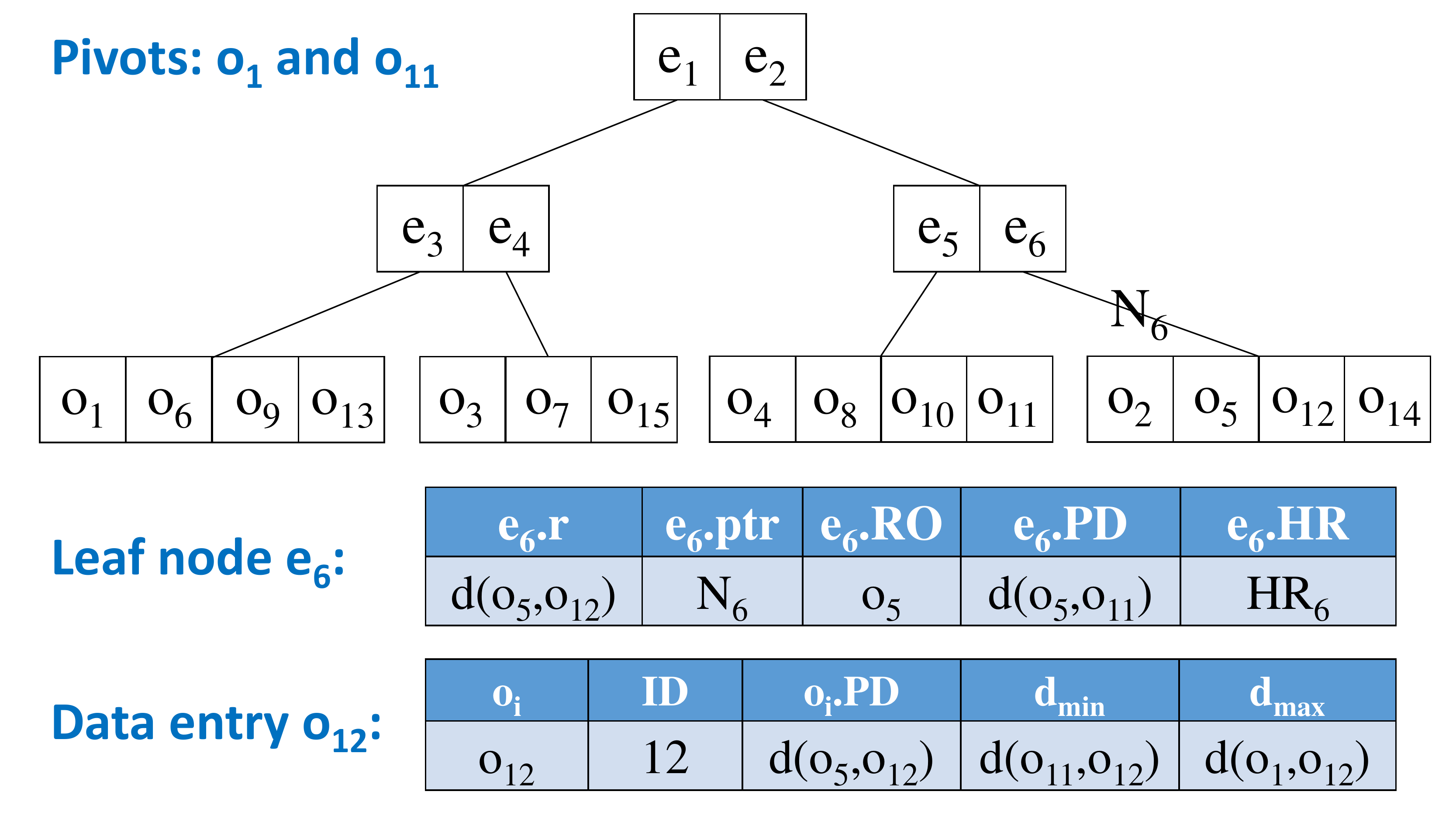}\label{fig:pmtree}}
	\caption{The Structure of PM-LSH}\label{fig:pm}
\end{figure}

\subsection{Building a PM-tree in the Projected Space}
In the projected space, each $o'_i$ w.r.t. $o_i \in \mathcal{D}$ is an $m$ dimensional vector. For the paper to be self-contained, we briefly explain how to build a PM-tree on the $o'_i$s. Interested readers may refer to \cite{DBLP:conf/dasfaa/SkopalPS05} for more details on the PM-tree.

\textbf{Selecting Pivots}.
The PM-tree combines the M-tree with pivot mapping. Methods for selecting an optimal set of pivots have been studied extensively.
For each set of pivots, a PM-tree region is the intersection of the M-tree hyper-spherical region and hyper-rings caused by the pivots.
We choose a set of pivots with the aim of making the overall volume of the corresponding PM-tree region the smallest.

\textbf{PM-tree Structure}.
The structure of a PM-tree is shown in Fig. \ref{fig:pm}.
Being an extension of the M-tree, it retains all the information of the M-tree. 
For each node ${e}$, it stores the covered radius $\sf{e.r}$, a pointer to its covered sub-tree $\sf{e.ptr}$, the center of the covered hyper-sphere $\sf{e.RO}$, the distance $\sf{e.PD}$ between $\sf{e.RO}$ and its parent node, and the smallest interval $\sf{e.HR}$ covering the distances between the pivots and each of the point stored in leaves.
For a data entry $o$, it stores the point data, the ID of the point $o$, the distance $\sf{o.PD}$ between $o$ and its parent entry, and the minimum and the maximum distances to pivots.

\textbf{Range Query Processing}.
A range query, denoted by $\textbf{range}(q,r)$, returns all points that are located in $B(q,r)$. The nodes in the PM-tree are traversed in a depth-first fashion. When a node is accessed, we verify its pruning condition by using the triangle inequality. When a data entry is accessed, we insert the corresponding point into the result set if it is in $B(q,r)$.

\begin{myexample}\label{example:pmtree}
As shown in Fig. \ref{fig:pm}, we choose $o_1$ and $o_{11}$ as pivots, and partition the space by using the ball partitioning, as shown in Fig. \ref{fig:spacepartition}. The nodes $e_1,e_2,\cdots, e_6$ contain the points inside a hyper-sphere, whose center and radius are saved as the part of an entry. When a range query $\textbf{range}(q,2)$ is issued, we check pruning conditions when accessing the nodes. Only $e_4$ and $e_6$ are checked. Finally, we return $\{o_{14}\}$ as the result.
\end{myexample}

\subsection{Cost Models of the PM-tree Versus the R-tree}

To compare the performance of the PM-tree and the R-tree, we adopt a node-based cost model \cite{DBLP:conf/pods/CiacciaPZ98} to examine how the PM-tree performs compared to the R-tree from a theoretical point of view. 

In this cost model, a concept called distance distribution of a dataset $\mathcal{D}$ is computed as follows.
\begin{equation}
F(x)=Pr[\|o_i,o_j\|\leq x],
\end{equation}
where $o_i,o_j \in \mathcal{D}$. 
In addition, for each dataset used in our experiments, we compute its ``homogeneity of viewpoints" (HV), which is shown in Table \ref{tb:datasets}. 
HV evaluates the homogeneity of the distance distributions of the data points. Let $F_{o}(x)$ denote the distribution of the distances between all points to point $o$. Given two points $o_1$ and $o_2$, a higher HV means that $o_1$ and $o_2$ are more likely to have similar distance distributions $F_{o_1}(x)$ and $F_{o_2}(x)$. The HV values of all the datasets are no smaller than 0.9, which enables us to approximate their distance distributions when estimating the cost models of the two trees.

\begin{table*}[t]
	\caption{Computation Cost (CC) of PM-tree and R-tree}
	\label{tb:cc}
	\centering
	\begin{tabular}{cccccccc}
		\toprule
		\textbf{Datasets}& Audio& Cifar& MNIST& Trevi &NUS &GIST &Deep\\
		\midrule
		\midrule
		\textbf{PM-tree} & 38,182 & 35,210 & 56,670 & 34,281 & 201,448 & 739,720 & 964,451\\
		\textbf{R-tree} & 40,565 & 54,869 & 59,043 & 63,884 & 252,187 & 889,974 & 1,017,604\\ 
		\textbf{Reduction}  & 6$\%$ & 36$\%$ & 4$\%$ & 46$\%$ & 20$\%$ & 17$\%$ & 5$\%$ \\
		\bottomrule
	\end{tabular}
\end{table*}

\textbf{Cost Model of the PM-tree}.
Consider a range query \textbf{range}$(q, r_q)$. Assume that a PM-tree has $s$ pivot points $p_1,\cdots,p_s$. A node $e$ is accessed iff the following conditions are satisfied:

\begin{equation}\label{eq:cond_pm}
\begin{cases}
\|q,\textsf{e.RO}\| \leq \textsf{e.r}+r_q\\
\land_{i=1}^{s}\{\|q,p_i\|-r_q \leq \textsf{e.HR}[i].max\}\\
\land_{i=1}^{s}\{\|q,p_i\|+r_q \geq \textsf{e.HR}[i].min\}
\end{cases}
\end{equation}

Therefore, the probability of $e$ being accessed can be computed as follows.

\begin{equation}
\begin{split}
Pr[e]=&F(\textsf{e.r}+r_q)\cdot\prod_{i=1}^s [F(\textsf{e.HR}[i].max+r_q)\\
&-F(\textsf{e.HR}[i].min-r_q)]
\end{split}
\end{equation}
Assume that there are $N$ nodes in the PM-tree. The number of distance computations (computation cost) is estimated by considering the probability that a node is accessed multiplied by its number of entries $N(e)$, thus obtaining the number of distance computations as follows.
\begin{equation}\label{eq:ccpmtree}
\textbf{CC}(\textbf{range}(q, r_q))=\sum_{i=1}^N N(e_i) \cdot Pr[e_i]
\end{equation}

\textbf{Cost Model of the R-tree}.
For each node $e$ of an $m$-dimensional R-tree, we denote its minimum bounding rectangle as $\textbf{MBR}(e)=[l_1,u_1]\times\cdots\times[l_m,u_m]$. Given a range query \textbf{range}$(q, r_q)$, the condition of $e$ being accessed is that $B(q,r_q)$ intersects with $\textbf{MBR}(e)$. Since it is hard to quantify the probability that a ball intersects with a high-dimensional rectangle, we substitute an isochoric hyper-cube for the ball. Specifically, an $m$-dimensional ball with radius $r_q$ is substituted by a hyper-cube with the length of sides $l=\sqrt[m]{\frac{2\pi^{m/2}}{m\Gamma(m/2)}}r_q$ \cite{DBLP:conf/stoc/IndykM98}. We also denote the data distribution of dataset $\mathcal{D}$ on the $i$-th dimension as follows.
\begin{equation}
G_i(x)=Pr[X_i\leq x],
\end{equation}
where $X_i$ is the $i$-th dimension of a random point in $\mathcal{D}$. 
Similarly, we let $N$ be the number of nodes in the R-tree and let $N(e_i)$ be the number of entries in node $e_i$. We obtain the number of distance computations as follows (details are omitted for brevity).
\begin{equation}\label{eq:ccrtree}
\textbf{CC}(\textbf{range}(q, r_q))=\sum_{i=1}^N N(e_i) \cdot\prod_{i=1}^m [G_i(u_i+l)-G_i(l_i-l)]
\end{equation}

\textbf{Comparison of the PM-tree and the R-tree}.
In order to compare the computation costs for the two trees, we construct PM-trees and R-trees for the points in all the datasets (introduced in Table \ref{tb:datasets}) after transforming them into the projected space. We choose $m=15$ hash functions and set the maximum number of entries per node to 16. For each dataset, we choose the same range $r$ to estimate the cost of computing a range query. The value of $r$ is chosen to return approximately the nearest $8\%$ of all points, since these points usually suffice to return a $c$-ANN result. The estimated computation costs are computed based on Eqs. \ref{eq:ccpmtree} and \ref{eq:ccrtree}, and the results are presented in Table \ref{tb:cc}. We can see that using the PM-tree reduces the number of distance computations by about $5\%-46\%$ for the different datasets. This observation offers evidence that the PM-tree has better performance than the R-tree in our setting.

\subsection{Tunable Confidence Interval}
Based on Lemma \ref{lemma:estimator}, we further estimate the confidence interval of $r'$ between $o_1$ and $o_2$ for a given $r=\|o_1,o_2\|$.
\begin{lemma}\label{lemma:range}
	Given two points $o_1$ and $o_2$, we have:
	\begin{itemize}	
		\item \textbf{P1}: The probability that $r'<r\sqrt{\chi^2_{1-\alpha}(m)}$ is $\alpha$
		\item \textbf{P2}: The probability that $r'>r\sqrt{\chi^2_{\alpha}(m)}$ is $\alpha$
	\end{itemize}
	Here, $\chi^2_{\alpha}(m)$ is the upper quantile of a $\chi^2$ distribution with $m$ degrees of freedom, where
	$$
	\int_{\chi^2_{\alpha}(m)}^{+\infty} f(x;m)dx=\alpha,
	$$
	and $f(x;m)$ is the probability density function of a $\chi^2$ distribution with $m$ degrees of freedom.
\end{lemma}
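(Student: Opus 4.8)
The plan is to reduce both statements to the single distributional fact already established in Lemma \ref{lemma:X2}, namely that the random variable $W := r'^2/r^2$ follows a $\chi^2(m)$ distribution. Once this reduction is made, P1 and P2 become direct restatements of the defining property of the upper quantile $\chi^2_{\alpha}(m)$, so essentially no new probabilistic content is required.

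First I would rewrite each event in terms of $W$. For P1, since $r$, $r'$, and $\sqrt{\chi^2_{1-\alpha}(m)}$ are all nonnegative, the inequality $r' < r\sqrt{\chi^2_{1-\alpha}(m)}$ is equivalent to $r'^2 < r^2\,\chi^2_{1-\alpha}(m)$, i.e.\ $W < \chi^2_{1-\alpha}(m)$. Likewise, the event in P2 is equivalent to $W > \chi^2_{\alpha}(m)$. The squaring step is the only place where one must be slightly careful, but it is harmless precisely because every quantity involved is nonnegative, so the map $x \mapsto x^2$ is monotone on the relevant range and preserves the inequalities.

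Next I would invoke the definition of the upper quantile directly. By definition, $\int_{\chi^2_{\alpha}(m)}^{+\infty} f(x;m)\,dx = \alpha$ is precisely the statement $Pr[W > \chi^2_{\alpha}(m)] = \alpha$, which together with the reduction above gives P2 immediately. For P1, the same definition applied with parameter $1-\alpha$ yields $Pr[W > \chi^2_{1-\alpha}(m)] = 1-\alpha$; passing to the complementary event then gives $Pr[W < \chi^2_{1-\alpha}(m)] = 1 - (1-\alpha) = \alpha$, which is exactly P1. Since $W$ is a continuous random variable, the boundary event $W = \chi^2_{1-\alpha}(m)$ carries zero probability, so the strict and non-strict versions of the inequality agree and no ambiguity arises.

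I do not anticipate any genuine obstacle here: the entire substance of the lemma is inherited from Lemma \ref{lemma:X2}, and what remains is bookkeeping about quantiles and complements. The only point warranting an explicit sentence of care is the complement computation for P1, where the $1-\alpha$ quantile must be correctly paired with the complementary probability in order to recover the advertised value $\alpha$.
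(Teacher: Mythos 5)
Your proof is correct and follows essentially the same route as the paper's: both reduce the claim to the fact from Lemma \ref{lemma:X2} that $r'^2/r^2 \sim \chi^2(m)$ and then read off P1 and P2 from the definition of the upper quantile, with a complement argument pairing $\chi^2_{1-\alpha}(m)$ with tail probability $1-\alpha$. The only difference is presentational (the paper phrases the argument as constructing a two-sided confidence interval $[u,v]$, whereas you verify each tail probability directly), and your explicit remarks on the monotonicity of squaring and the zero-probability boundary event are harmless additions.
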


\begin{proof}
	From Lemmas \ref{lemma:X2} and \ref{lemma:estimator}, we know $\frac{r'^2}{r^2}\sim \chi^2(m)$. Constructing a confidence interval $I=[u,v]$ for $\frac{r'^2}{r^2}$ requires that the probability that $\frac{r'^2}{r^2}$ falls into $I$ is $1-2\alpha$ for any given $\alpha$. A standard approach is to select $u$ and $v$ that make $Pr[\frac{r'^2}{r^2}<u]=\alpha$, i.e., $Pr[\frac{r'^2}{r^2}>u]=1-\alpha$, and $Pr[\frac{r'^2}{r^2}>v]=\alpha$. Further, $\int_{u}^{+\infty} f(x;m)dx=1-\alpha$ and $\int_{v}^{+\infty} f(x;m)dx=\alpha$. According to the definition of upper quantile, we have $u=\chi^2_{1-\alpha}(m)$ and $v=\chi^2_{\alpha}(m)$. The confidence interval and its corresponding probability are shown in Fig. \ref{fig:ci}.
\end{proof}

\begin{figure}
	\centering
	\includegraphics[width=.3\textwidth]{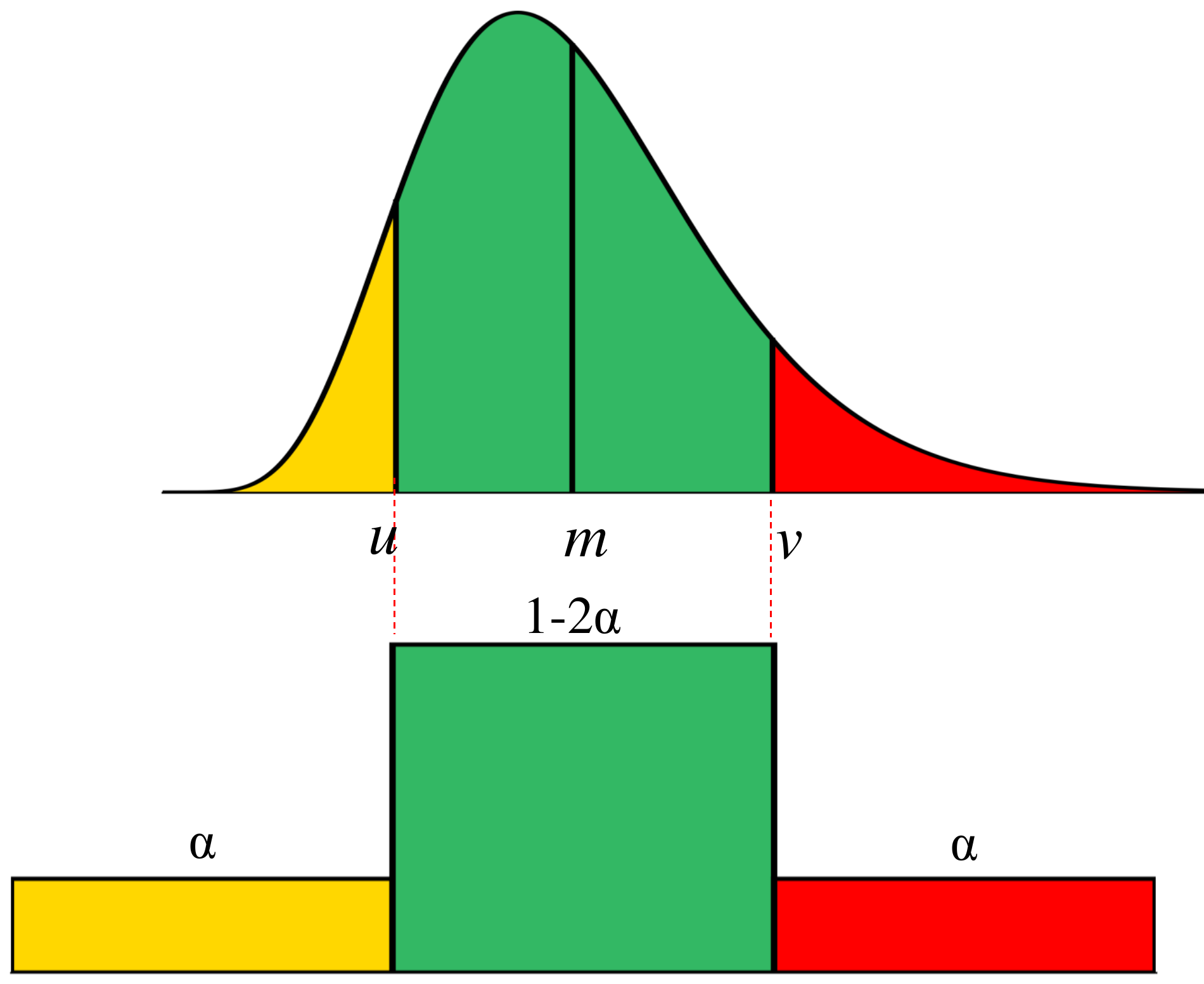}
	\caption{Confidence Interval}\label{fig:ci}
\end{figure}

Lemma \ref{lemma:range} establishes a strong relationship between an original distance and the confidence interval of the corresponding projected distance, which we use to answer $(r,c)$-BC and $c$-ANN queries. 

\section{Nearest Neighbor Query Processing}\label{sec:nn}
We proceed to cover the nearest neighbor query processing based on PM-LSH. First, we present the details of $(r,c)$-BC query processing. Then, we extend the coverage to include $(c,k)$-ANN query processing.

\subsection{The $(r,c)$-BC Query}
An $(r,c)$-BC query can be computed directly by Algorithm \ref{algo:rcbc}. Given a query $q$ and $m$ hash functions, we compute the hash value $q'=(h_1^*(q),\dots,h_m^*(q))$ and use the PM-tree to answer a range query $\textbf{range}(q',tr)$, where $t$ is a parameter that guarantees that a point inside $B(q,r)$ in the original space will fall into $B(q',tr)$ in the projected space with a constant probability. Then we collect the result of the range query into a candidate set $C$. 

According to Lemma \ref{lemma:e1e2}, introduced in Section \ref{sec:analysis}, the correctness of the $(r,c)$-BC query can be guaranteed. In other words, by properly choosing a parameter $\beta$, we examine a sufficient number of $\beta n$ candidate points such that the following two situations hold with constant probability.

\begin{itemize}	
	\item
	If the total number of points in $C$ exceeds $\beta n$, there must be at least one point from $C$ in $B(q,cr)$.
	\item
	If no point from $C$ is in $B(q,cr)$, there exists no point in $\mathcal{D}$ is in $B(q,r)$.
\end{itemize}
Therefore, we can correctly answer an $(r,c)$-BC query by processing a range query using the PM-tree. In Section \ref{sec:analysis}, we consider how to set parameters $t$ and $\beta$.

\begin{algorithm}[t]
	\caption{$(r,c)$-BC Query}\label{algo:rcbc}
	\KwIn{A query point $q$ and parameters $\beta$, $n$, $t$, $c$, $r$}
	\KwOut{A point $p$ in $B(q,cr)$ or nothing}
	Compute $q'=(h_1^*(q),\dots,h_m^*(q))$;\\
	Initialize a candidate set $C \gets$ the results of a range query $q'$ with radius $t \cdot r$ on the PM-tree;\\
	\If{$|C| \geq \beta n + 1$}{\Return{$p$} in $C$ that is closest to $q$;}
	\Else{
		\If{$|\{p~|~p \in C \land {\|p,q\|}\leq c \cdot r\}| \geq 1$}{\Return{$p$} in $C$ that is closest to $q$;}
		\Else{\Return{$\varnothing$;}}
	}
\end{algorithm}

\subsection{The $(c,k)$-ANN Query}
Answering a $c$-ANN query is more complicated than answering an $(r,c)$-BC query since we do not know the distance $\|q,o^*\|$ in advance. In order to answer a $(c,k)$-ANN query with a constant probability, we must ensure that we access enough points, i.e., at least $\beta n$ points. Therefore, we have to enlarge the search radius in the projected space when fewer than $\beta n$ points are found until $k$ points inside $B(q,cr)$ have been obtained.

The details of computing a $(c,k)$-ANN query can be found in Algorithm \ref{algo:ckann}. 
Most of the steps are similar to ones in Algorithm \ref{algo:rcbc}. The difference is that when both termination conditions (Lines 4 and 8) are violated, another range query with a larger radius is required.

\textbf{Selecting the Radius $r$ of a Range Query}.
As executing multiple range queries is time consuming, it is attractive to reduce the number of iterations in the while-loop. Intuitively, we hope to find a ``magic'' $r_{min}$ such that the process terminates quickly. An ideal $r_{min}$ must yield a number of points inside $B(q',tr_{min})$ that exceeds $\beta n+k$ such that Algorithm \ref{algo:ckann} is able to terminate after processing the range query $B(q',tr_{min})$. In addition, to avoid returning a large number of unnecessary points, which also is costly, the number of points inside $B(q',tr_{min}/c)$ should be below $\beta n+k$. Otherwise, a range query $B(q',tr_{min}/c)$ with a smaller radius is able to return enough points.

As the $r_{min}$ can be selected from a relatively large range, we design a selection scheme as follows. 
Suppose that we have obtained the distance distribution $F(x)$ of all datasets. Due to a good HV value, the distance distribution of a query point can be estimated for the dataset. Then we can find a suitable $r$ that satisfies $n \cdot F(r)=\beta n+k$, which implies that $\beta n+k$ points locate in $B(q,r)$ on average. However, to avoid the case where the number of points in $B(q,r)$ exceeds $\beta n+k$, we choose an $r_{min}$ that is slightly smaller than $r$. 
As the choice of $r_{min}$ is not unique and the selection range is relatively large, and since the performance is not strongly dependent on the specific choice, the effect of the estimation is expected to be small.

\begin{myexample}\label{example:cann}
	Setting $\beta n = 4$, we need to retrieve at least $5$ points for a $(2,1)$-ANN query. Initially, we set $r_{min}=r'=2$. As explained in Example \ref{example:pmtree}, $o_{14}$ is returned.
	As the number of returned points is below $5$, we set $r'=4$. In this round, only the subtree of $e_5$ can be discarded, and we check the points in $e_3$, $e_4$, and $e_6$ and obtain $\{o_2,o_5,o_7,o_{12},o_{13},o_{14}\}$.
	The number of returned points is $6$, and the process terminates. Finally, we return the $(2,1)$-ANN result $o_{14}$.
\end{myexample}

\begin{algorithm}[t]
	\caption{$(c,k)$-ANN Query}\label{algo:ckann}
	\KwIn{A query point $q$, and parameters $r_{min}$, $\beta$, $n$, $t$, $c$, $k$}
	\KwOut{$k$ points}
	Initialize a candidate set $C \gets \varnothing$ and $r \gets r_{min}$;\\
	Compute $q'=(h_1^*(q),\dots,h_m^*(q))$;\\
	\While{true}{
		\If{$|\{p~|~p \in C \land {\|p,q\|}\leq c \cdot r\}| \geq k$}{\Return{top-$k$} points closest to $q$ in $C$;}
		Initialize a range query $q'$ with radius $t \cdot r$ in the PM-tree;\\
		\While{$|C| < \beta n + k$}{
			Find a node in $B(q',t \cdot r)$ on the PM-tree;
			$C \gets\; C\cup\{$the points in the node$\}$;\\
		}
		\If{$|C| \geq \beta n + k$}{\Return{top-$k$} points closest to $q$ in $C$;}
		$r \gets c \cdot r$;\\
	}
\end{algorithm}

\subsection{Analysis}\label{sec:analysis}
\textbf{Quality Guarantee}.
In Algorithms \ref{algo:rcbc} and \ref{algo:ckann}, we execute a range query on the PM-tree with a radius $tr$ in the projected space. Therefore, we have to compare the projected distances of candidate points to $q$ with $tr$. Specifically, two types of points need to be discussed: true positives (the points inside $B(q,r)$) and false positives (the points outside $B(q,cr)$).

\begin{lemma}\label{lemma:e1e2}
	Given a query $q$, we set probabilities $\alpha_1$ and $\alpha_2$, and parameter $t$ such that they satisfy Eq. \ref{eq:paramsetting}:
	
	\begin{equation}\label{eq:paramsetting}
	\begin{cases}
	t^2=\chi^2_{\alpha_1}(m)\\
	t^2=c^2\chi^2_{1-\alpha_2}(m)
	\end{cases}
	\end{equation}
	
	We then have:
	\begin{itemize}	
		\item \textbf{E1:} 
		If a point $o$ exists in $B(q,r)$, its projected distance to $q$ is smaller than $tr$.
		\item \textbf{E2:} 
		There are fewer than $\beta n$ $(\beta>\alpha_2)$ points outside $B(q,cr)$ whose projected distances to $q$ are smaller than $tr$.
	\end{itemize}
	
	The probability that E1 occurs is at least $1-\alpha_1$, and the probability that E2 occurs is at least $1-\frac{\alpha_2}{\beta}$.
\end{lemma}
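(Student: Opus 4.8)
The plan is to treat E1 and E2 separately, reducing each to the tail bounds P1 and P2 of Lemma \ref{lemma:range}, with each of the two defining equations for $t$ in Eq. \ref{eq:paramsetting} supplying exactly the scaling needed for one of the two events. Throughout, for a fixed point $o$ I write $r_o = \|q,o\|$ for its original distance to $q$ and $r'_o$ for its projected distance. By Lemma \ref{lemma:X2} the ratio $r_o'^2/r_o^2$ follows $\chi^2(m)$, so P1 and P2 apply verbatim with the generic distance $r$ replaced by $r_o$, and this is the only probabilistic input I will need.

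First I would establish E1. Fix a point $o \in B(q,r)$, so $r_o \le r$. Since $t = \sqrt{\chi^2_{\alpha_1}(m)}$ by the first equation of Eq. \ref{eq:paramsetting}, applying P2 with $\alpha = \alpha_1$ to $o$ gives $Pr[r'_o > r_o t] = \alpha_1$, hence $Pr[r'_o \le r_o t] \ge 1-\alpha_1$. Because $r_o \le r$ and $t>0$, we have $r_o t \le tr$, so the event $\{r'_o \le r_o t\}$ is contained in $\{r'_o \le tr\}$, and therefore $Pr[r'_o \le tr] \ge 1-\alpha_1$; the continuity of the $\chi^2$ distribution lets me pass freely between the strict and non-strict inequalities in the statement. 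This is exactly the claim that a point in $B(q,r)$ has projected distance below $tr$ with probability at least $1-\alpha_1$.

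Next I would handle E2, beginning with the per-point false-positive probability. Fix any point $o$ with $r_o \ge cr$. The second equation of Eq. \ref{eq:paramsetting} gives $t = c\sqrt{\chi^2_{1-\alpha_2}(m)}$, so $tr = cr\sqrt{\chi^2_{1-\alpha_2}(m)} \le r_o\sqrt{\chi^2_{1-\alpha_2}(m)}$. Applying P1 with $\alpha = \alpha_2$ then yields $Pr[r'_o < tr] \le Pr[r'_o < r_o\sqrt{\chi^2_{1-\alpha_2}(m)}] = \alpha_2$. Introducing, for each of the at most $n$ points $o_i$ outside $B(q,cr)$, an indicator $Y_i = \mathbf{1}[r'_{o_i} < tr]$, linearity of expectation gives $E[N] \le \alpha_2 n$ for the total count $N = \sum_i Y_i$ of false positives whose projected distance falls below $tr$.

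Finally I would convert this expectation bound into the stated probability via Markov's inequality: since $\beta > \alpha_2$, we get $Pr[N \ge \beta n] \le E[N]/(\beta n) \le \alpha_2/\beta$, and hence $Pr[N < \beta n] \ge 1 - \alpha_2/\beta$, which is E2. The main obstacle is precisely this last step. The per-point probabilities are immediate, but E2 asserts a bound on a sum of indicators that are \emph{not} independent across points (they all share the same random projection), so I cannot invoke a Chernoff-type concentration argument; Markov's inequality is what lets me control the aggregate count using only its expectation, and I must be careful to bound the number of candidate false positives by $n$ so that the $\alpha_2/\beta$ bound comes out clean regardless of how many points actually lie outside $B(q,cr)$.
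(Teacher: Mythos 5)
Your proof is correct and follows essentially the same route as the paper's: both establish the per-point bounds for E1 and E2 by instantiating Lemma \ref{lemma:range} with the two settings of $t$ from Eq. \ref{eq:paramsetting} and monotonicity in $r_o$, and both convert the per-point false-positive probability $\alpha_2$ into the E2 bound via Markov's inequality. Your version merely spells out the indicator/expectation step (and the reason Markov, not Chernoff, is the right tool) that the paper leaves implicit.
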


\begin{proof}
	Given a point $o\in B(q,r)$, let $r_o=\|o,q\| \leq r$ and $r_o'=\|o',q'\|$ be the original and projected distances to $q$, respectively. By setting $t = \sqrt{\chi^2_{\alpha_1} (m)}$, according to the Lemma \ref{lemma:range}, we have $Pr[r_o' > r_o \sqrt{\chi^2_{\alpha_1} (m)}]= Pr[r_o' > tr_o] =\alpha_1$. Since $r_o \leq r$, $Pr[r_o' > tr]$ is at most $\alpha_1$. Therefore, we know that $Pr[E1] = Pr[r_o' \leq tr] > 1-\alpha_1$.
	Likewise, given a point $o \notin B(q,cr)$, let $r_o=\|o,q\| > cr$ and $r_o'=\|o',q'\|$ be the original and projected distances to $q$, respectively. By setting $t = c\sqrt{\chi^2_{1-\alpha_2}(m)}$, according to the Lemma \ref{lemma:range}, we have $Pr[r_o'<r_o\sqrt{\chi^2_{1-\alpha_2}(m)}]=Pr[r_o'< t\frac{r_o}{c}]=\alpha_2$. Since $\frac{r_o}{c} > r$, $Pr[r_o' < tr]$ is at most $\alpha_2$. Therefore, by using Markov's inequality, we have $Pr[E2] >1-\frac{\alpha_2}{\beta}$.
\end{proof}

Note that if $E1$ and $E2$ hold at the same time, then Algorithm \ref{algo:rcbc} computes the $(r,c)$-BC query correctly.

\begin{lemma}\label{lemma:rcbc}
	Algorithm \ref{algo:rcbc} answers an $(r,c)$-BC query with at least a constant probability.
\end{lemma}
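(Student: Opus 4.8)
The plan is to prove correctness \emph{conditional} on the two events \textbf{E1} and \textbf{E2} of Lemma \ref{lemma:e1e2} holding, and then to turn this into an unconditional guarantee by lower-bounding the probability that both events occur. The starting observation is that the candidate set $C$ produced by the range query is precisely the set of points whose projected distance to $q$ falls below $tr$, so \textbf{E1} and \textbf{E2} translate directly into membership statements about $C$. I would then check the two required behaviors of an $(r,c)$-BC query in turn, keeping track of which branch of Algorithm \ref{algo:rcbc} is taken.

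First, suppose $B(q,r)$ contains a point $o$. Under \textbf{E1} the projected distance from $o$ to $q$ is below $tr$, so $o\in C$, and since $\|q,o\|\le r\le cr$ we have $o\in C\cap B(q,cr)$. Consequently the point of $C$ closest to $q$ is at distance at most $\|q,o\|\le cr$ and hence lies in $B(q,cr)$. If the test $|C|\ge\beta n+1$ succeeds, the algorithm returns this closest point and is correct; if it fails, the inner test that checks for a point of $C$ within distance $cr$ of $q$ still detects $o$ and returns the closest point, which again lies in $B(q,cr)$. Thus requirement (1) is met whenever \textbf{E1} holds. Second, suppose $B(q,cr)$ is empty, so that every data point lies outside $B(q,cr)$. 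Then \textbf{E2} bounds the number of points with projected distance below $tr$ — that is, $|C|$ — by fewer than $\beta n$, so the branch $|C|\ge\beta n+1$ is not taken; the inner test searches $C$ for a point within distance $cr$ of $q$ and finds none, so the algorithm returns $\varnothing$. Thus requirement (2) is met whenever \textbf{E2} holds, which reproduces the note preceding the lemma that \textbf{E1} and \textbf{E2} together imply correctness.

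It then remains to obtain the constant-probability bound. Since the instance type is not known in advance, I would bound the success probability uniformly by the probability that both events hold, and Lemma \ref{lemma:e1e2} with a union bound gives $Pr[E1 \wedge E2]\ge 1-\alpha_1-\alpha_2/\beta$. The main obstacle is that $\alpha_1$ and $\alpha_2$ cannot be chosen freely: Eq. \ref{eq:paramsetting} couples them to $t$, $c$, and $m$ through $t^2=\chi^2_{\alpha_1}(m)=c^2\chi^2_{1-\alpha_2}(m)$, so one cannot simply drive both failure probabilities to zero. The argument would instead fix $m$ (equivalently $t$) together with one of the quantiles, read off the induced values of $\alpha_1$ and $\alpha_2$, and then take $\beta>\alpha_2$ large enough that $\alpha_1+\alpha_2/\beta$ stays below a fixed threshold strictly less than $1$, yielding a success probability bounded away from zero independently of $n$. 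Verifying that a feasible parameter tuple exists for every admissible approximation ratio $c$ is the only genuinely quantitative point, and it is where I would expect the work to concentrate.
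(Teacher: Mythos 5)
Your proof is correct and takes essentially the same approach as the paper's: establish correctness of Algorithm \ref{algo:rcbc} conditional on $E1$ and $E2$ via a case analysis on which branch the algorithm takes, then lower-bound $Pr[E1 \wedge E2]$ by a constant using Lemma \ref{lemma:e1e2} and a union bound. The one quantitative point you leave open is settled in the paper immediately: with $m=O(1)$, fixing $\alpha_1$ to a constant (e.g., $1/e$) makes $\alpha_2$ a constant via Eq. \ref{eq:paramsetting}, and choosing $\beta=2\alpha_2$ gives $\alpha_1+\alpha_2/\beta = 1/e + 1/2 < 1$, so the success probability is at least $1/2-1/e$ for every admissible $c$.
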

\begin{proof}
	Let $m=O(1)$. If $\alpha_1$ is a constant, $\alpha_2$ is also a constant due to Eq. \ref{eq:paramsetting}. By setting $\beta=2\alpha_2$, the lower bound probabilities of $E1$ and $E2$, i.e., $1-\alpha_1$ and $1-\frac{\alpha_2}{\beta}$, will also be constant. Therefore, we can guarantee that $E1$ and $E2$ hold at the same time with at least a constant probability. Thus, if we access at least $\beta n+1$ points with projected distances to $q$ smaller than $tR$, due to $E2$, there are at most $\beta n$ points outside $B(q,cr)$, and we thus obtain at least one point inside $B(q,cr)$. On the other hand, if we access no more than $\beta n+1$ points with projected distances to $q$ smaller than $tR$, the correctness of $E2$ is not guaranteed. Therefore, it is safe to return either no points or the points whose distances to $q$ are at most $cr$ for an $(r,c)$-BC query.
\end{proof}

As a typical setting in the LSH methods, we choose parameters that satisfy $Pr[E1]=1-1/e$ and $Pr[E2]=1/2$. Note that we can also choose parameters that achieve more accurate results. In our setting, we have $\alpha_1=1/e$ and $t=\sqrt{\chi^2_{\alpha_1}(m)}$. Based on Eq. \ref{eq:paramsetting}, both $\alpha_2$ and $\beta$ can be determined easily.

\begin{theorem}\label{the:cknn}
	Algorithm \ref{algo:ckann} returns a $c^2$-ANN with probability at least $1/2-1/e$.
\end{theorem}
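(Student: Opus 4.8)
The plan is to reduce the correctness of Algorithm \ref{algo:ckann} to the single-radius guarantee of Lemma \ref{lemma:e1e2}, following the classical Indyk--Motwani reduction from $(r,c)$-ball cover queries to $c^2$-ANN, but adapted to the radius-enlargement loop of the algorithm. I focus on $k=1$; let $o^*$ be the exact NN and $r^* = \|q,o^*\|$. Since the radius in the while-loop runs through the geometric sequence $r_{min}, c\,r_{min}, c^2 r_{min}, \dots$, I first pin down the \emph{critical radius} $R$, defined as the smallest value $r_{min}c^j$ with $R \geq r^*$. By minimality $R/c < r^*$, hence $R < c\,r^*$, and this is the inequality that will ultimately convert a ``$cR$'' bound into the desired ``$c^2 r^*$'' bound.

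Next I would invoke Lemma \ref{lemma:e1e2} only at this single radius $R$, producing events $E1$ and $E2$. With the stated parameter choice ($\alpha_1 = 1/e$ and $\beta = 2\alpha_2$, so that $Pr[E1] \geq 1 - 1/e$ and $Pr[E2] \geq 1/2$), a union bound gives $Pr[E1 \cap E2] \geq 1 - Pr[\overline{E1}] - Pr[\overline{E2}] \geq 1 - 1/e - 1/2 = 1/2 - 1/e$, which is exactly the claimed success probability. The randomness here is over the hash functions, whereas $R$ is a deterministic function of the fixed query and dataset, so applying the lemma at $R$ is legitimate. The entire task then reduces to the deterministic claim: conditioned on $E1 \cap E2$, Algorithm \ref{algo:ckann} returns a point whose distance to $q$ is at most $c^2 r^*$.

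The core of the argument is a case analysis on how and when the loop terminates. The key structural observation is that under $E1 \cap E2$ the algorithm never issues a range query at radius exceeding $R$: at radius $R$ the inner loop either accumulates $\beta n + 1$ points (terminating via Line~10 at a radius $\leq R$) or exhausts $B(q', tR)$, in which case $E1$ forces $o^* \in C$ and the Line~4 test at the next radius $cR$ fires immediately, returning the closest point of $C$, whose distance to $q$ is at most $r^* \leq c^2 r^*$. For a Line~4 termination at any radius $r \leq R$ the returned point lies within $c\,r \leq cR < c^2 r^*$. The most delicate case is a Line~10 termination at a radius $r < R$: here I would use that every point collected so far has projected distance below $t\,r \leq tR$, so the $\geq \beta n + 1$ collected points form a subset of $B(q', tR)$; $E2$ at radius $R$ then guarantees that fewer than $\beta n$ of them lie outside $B(q,cR)$, so at least one lies inside, and the returned closest point again satisfies $\|q,\cdot\| \leq cR < c^2 r^*$.

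The main obstacle, and the step I would be most careful about, is precisely this last point: showing that $E2$ stated at the \emph{single} radius $R$ still controls Line~10 terminations that happen at smaller radii, since this is what keeps the success probability at $1/2 - 1/e$ rather than degrading over the many iterations of the loop. The monotonicity fact ``points retrieved at radius $r \leq R$ all lie in $B(q', tR)$'' is exactly what makes one copy of $E2$ suffice. The extension from $k=1$ to general $k$ is then routine: one raises the threshold to $\beta n + k$, applies the same critical-radius argument to the $k$-th nearest neighbor, and observes that the top-$k$ points of $C$ are each dominated by the corresponding $o_i^*$, leaving the probability bound unchanged.
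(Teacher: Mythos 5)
Your proof is correct, and it follows the same high-level strategy as the paper's proof---condition on $E1\cap E2$ with the union-bound probability $1-1/e-1/2=1/2-1/e$, then run a deterministic critical-radius case analysis yielding the $c^2$ bound---but the two arguments anchor that analysis at different radii, and the difference is technically meaningful. The paper defines $C(r)$ as the set of points with projected distance below $tr$ and pivots on $r_{opt}$, the first radius in the sequence at which $\lvert C(r_{opt})\rvert\geq 1+\beta n$; it then applies $E1$ at $r_{opt}/c$ and $E2$ at $r_{opt}$. Since $r_{opt}$ depends on the hash functions, this amounts to invoking the probabilistic guarantees of Lemma \ref{lemma:e1e2} at a \emph{random} radius, which strictly speaking needs exactly the justification you supply: your $R$ (the smallest radius in the sequence with $R\geq r^*$) is a deterministic function of the query and dataset, the lemma is invoked once at that fixed radius, and the monotonicity observation---points collected at any radius $r\leq R$ lie in $B(q',tR)$, so one copy of $E2$ at $R$ controls every early Line~10 termination---closes the gap without any union bound over iterations. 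Your version also explicitly handles the Line~4 termination branch (including the boundary case where the range query at $R$ is exhausted and Line~4 fires at $cR$, returning the exact NN because $o^*\in C$), which the paper's proof glosses over by implicitly assuming termination occurs via the point-count condition at $r_{opt}$. What the paper's formulation buys is brevity and a direct mirror of the algorithm's stopping rule; what yours buys is a cleaner probability space and a complete accounting of all termination paths, at the cost of a longer case analysis. Both arguments extract the $c^2$ factor from the same inequality (your $R<cr^*$, the paper's $r_{opt}<cr^*$ in its second case), so the approximation guarantee and the probability $1/2-1/e$ coincide.
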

\begin{proof}
	Due to Lemma \ref{lemma:rcbc}, we find that $E1$ and $E2$ can hold at same time with probability at least $1/2-1/e$ in our setting. Now we show that when $E1$ and $E2$ hold, the output of Algorithm \ref{algo:ckann} is $c^2$-approximate. We denote the set of points whose projected distances to $q$ are smaller than $tr$ as $C(r)$. When enlarging $r$ according to the sequence $1,c,c^2,\cdots$, there must exist a radius $r_{opt}$ that makes $\lvert C(r_{opt})\rvert \geq 1+\beta n$ and $\lvert C(r_{opt}/c)\rvert< 1+\beta n$ hold. Then, if $r^*=||o^*,q||\leq r_{opt}/c$, its projected distance to $q$ is smaller than $tr_{opt}/c$ according to $E1$, and we must have found it in $C(r_{opt})$ because $C(r_{opt})\supset C(r_{opt}/c)$. As a result, Algorithm \ref{algo:ckann} returns the exact NN. If $r=||o^*,q|| > r_{opt}/c$, according to $E2$, there is at least one point in $C(r_{opt})$ whose distance to $q$ is at most $cr_{opt}$. Therefore, we return a point whose distance to $q$ is smaller than $c^2r^*$.
\end{proof}

\noindent
\textbf{Algorithm Analysis of PM-LSH}.
In PM-LSH, if we choose a large $m$, it will be costly to process a sequence of range queries in the projected space. So we consider $m$ as a constant and fix its value at 15 in all experiments.

\begin{theorem}
	PM-LSH has space cost $O(n)$ and time cost $O(\log n+\beta n)$, where $\beta$ is much smaller than 1.
\end{theorem}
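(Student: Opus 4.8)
The plan is to bound the space cost and the time cost separately, treating the number of hash functions $m$ (fixed at $15$), the number of pivots $s$, and the node capacity $M$ as constants throughout, as justified by the preceding \textbf{Algorithm Analysis} remark. For the space cost, I would account for the three objects the framework stores: the $m$ random projection vectors defining $h_1^*,\dots,h_m^*$, the $n$ projected points $o'\in\mathbb{R}^m$, and the PM-tree built over them. A balanced PM-tree over $n$ points with node capacity $M$ has $O(n/M)$ internal nodes and $O(n)$ entries, and each node stores only a constant amount of auxiliary data, namely $\textsf{e.r}$, $\textsf{e.PD}$, the center $\textsf{e.RO}\in\mathbb{R}^m$, and the $s$ pivot-distance intervals $\textsf{e.HR}$. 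Hence every node and entry occupies $O(m+s)=O(1)$ space, and summing over the whole tree gives $O(n)$; the $m$ projection vectors add only $O(md)=O(d)$, which is independent of $n$.

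For the time cost, I would decompose Algorithm~\ref{algo:ckann} into computing the query projection $q'=(h_1^*(q),\dots,h_m^*(q))$, which costs $O(md)=O(d)$ and is independent of $n$, and the sequence of range queries on the PM-tree. The key observation is that the candidate set $C$ accumulates monotonically across iterations of the while-loop and that the algorithm halts as soon as $|C|\ge\beta n+k$; therefore the total number of points ever examined, summed over all range queries, is at most $\beta n+k=O(\beta n)$ once the small constant $k$ is absorbed. This bounds the candidate-examination work by $O(\beta n)$.

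The step I expect to be the main obstacle is separating the tree-traversal overhead from the candidate-examination cost and controlling the number of radius-enlargement iterations. Since the PM-tree is balanced with height $O(\log n)$, a single depth-first range traversal spends $O(\log n)$ descending to and pruning internal nodes in addition to the work spent on the entries it examines. If the while-loop ran for $I$ iterations, the descent cost would be $O(I\log n)$, so obtaining the stated $O(\log n+\beta n)$ bound requires arguing that $I=O(1)$. I would ground this in the radius-selection scheme: because each dataset has a high homogeneity-of-viewpoints value, the distance distribution $F(x)$ estimated from the data lets us pick $r_{min}$ with $n\cdot F(r_{min})\approx\beta n+k$, so the first range query already returns enough points and the loop terminates in a constant number of rounds in the typical case. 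Combining the $O(\log n)$ descent with the $O(\beta n)$ examination then yields the total time $O(\log n+\beta n)$ with $\beta\ll 1$; the delicacy is that the $I=O(1)$ claim rests on this distribution-based heuristic rather than a worst-case guarantee, so I would state it as the expected behavior under a well-chosen $r_{min}$.
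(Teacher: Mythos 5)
Your proposal is correct and follows essentially the same decomposition as the paper: space is $O(n)$ because each of the $n$ PM-tree items costs $m+O(1)=O(1)$ space, and query time splits into $O(\log n)$ for locating candidates in the tree plus $O(\beta n)$ for verifying their true distances with $d$ treated as a constant. In fact you are more careful than the paper on the one delicate point — the paper simply asserts the $O(\log n)$ search cost without discussing the number of radius-enlargement iterations, whereas you correctly flag that the bound implicitly requires a constant number of iterations, justified by the $r_{min}$ selection scheme.
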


\begin{proof}
	The space consumption is due mainly to the PM-tree, which has $n$ items. Each item consumes $m+O(1)$ space, so the overall space consumption is $O(n)$ as $m=O(1)$. The query time cost comes from two parts: 1) finding candidate points in the PM-tree; and 2) verifying the real distances of candidate points to $q$. The former has cost $O(\log n)$, and the latter has cost $O(\beta n)$ when $d$ is considered as a constant. Therefore, the total query time is $O(\log n+\beta n)$.
\end{proof}

\section{Closest Pair Query Processing}\label{sec:cp}
We proceed to cover closest-pair query processing based on PM-LSH. First, we propose a branch and bound algorithm that processes the nodes in the PM-tree in best-first manner. Due to the low efficiency of the branch and bound algorithm, we develop a radius filtering method to improve the query efficiency while sacrificing only slightly the accuracy of the candidate pairs found in the projected space.

\subsection{Branch and Bound Algorithm}
A straightforward method is to employ a branch and bound search strategy on the PM-tree. First, we aim to find $T$ point pairs in the PM-tree with the smallest distances in the projected space. Next, we verify their distances in the original space. Finally, we report $k$ closest pairs as the result. 

For any two nodes $e_1$ and $e_2$, we denote the minimum distance of any point pair $(o_1,o_2) \in e_1 \times e_2$ by $\textsf{Mindist}(e_1,e_2)$, which is computed as follows.
\begin{equation}
\begin{split}
&\textsf{Mindist}(e_1,e_2)=\\ 
&\max
\begin{cases}
\max_{i} LB(p_i),\\
\|\mathsf{e_1.RO},\mathsf{e_2.RO}\|-\mathsf{e_1.r}-\mathsf{e_2.r}\\
\end{cases}
\end{split}
\end{equation}

\begin{figure}[t]
	\centering
	\includegraphics[width=.4\textwidth]{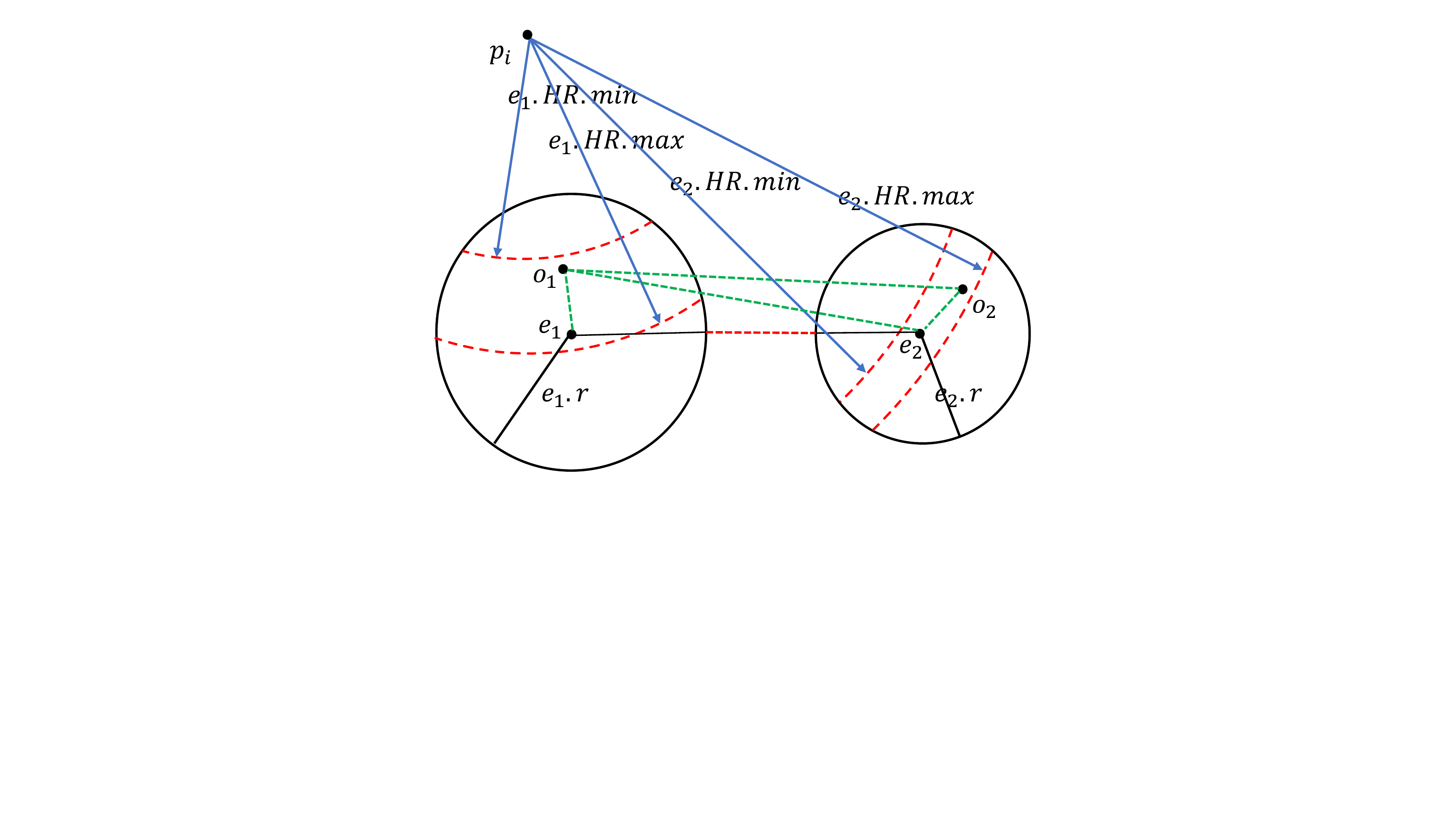}
	\caption{Illustration of Computing $\textsf{Mindist}$} \label{fig:Mindist}
\end{figure}

For the first term, we define a pivot-based lower bound $LB(p_i)$ of the minimum distance between $e_1$ and $e_2$ w.r.t. $p_i$, where $p_i$ is the $i$-th global pivot.
In Fig. \ref{fig:Mindist}, we have two points $o_1 \in e_1$ and $o_2 \in e_2$.
According to the property of the PM-tree, we know that $\|o_1,p_i\|$ is in the range $I_1$:
$$
I_1 = [\mathsf{e_1.HR}[i].min,\mathsf{e_1.HR}[i].max]
$$
Likewise, $\|o_2,p_i\|$ is in the range $I_2$:
$$
I_2 = [\mathsf{e_2.HR}[i].min,\mathsf{e_2.HR}[i].max]
$$
We compute $LB(p_i)$ based on the triangle inequality. 
Since $\|o_1,o_2\| \geq |\|o_1,p_i\| - \|o_2,p_i\||$, if $I_1$ overlaps $I_2$, we have $LB(p_i)=0$. Otherwise, $LB(p_i)$ is the distance between $I_1$ and $I_2$.
In Fig. \ref{fig:Mindist}, we have $LB(p_i) = \mathsf{e_2.HR}.min-\mathsf{e_1.HR}.max$.

For the second term, we estimate the minimum distance between $e_1$ and $e_2$ using their centers. We compute $\|o_1,o_2\|$ with $\mathsf{e_2.RO}$ as follows.
$$
\|o_1,o_2\|\geq \|o_1,\mathsf{e_2.RO}\|-\|\mathsf{e_2.RO},o_2\|
$$
We continue to compute $\|o_1,\mathsf{e_2.RO}\|$ with $\mathsf{e_1.RO}$ as follows.
$$
\|o_1,\mathsf{e_2.RO}\|\geq \|\mathsf{e_1.RO},\mathsf{e_2.RO}\|-\|\mathsf{e_1.RO},o_1\|
$$
Combined with the fact that $\|\mathsf{e_1.RO},o_1\|\leq \mathsf{e_1.r}$ and $\|\mathsf{e_2.RO},o_2\|\leq \mathsf{e_2.r}$, we obtain the second term.

Let $d_T$ be the current $T$-th smallest distance in the projected space. 
We access the node pairs in best-first manner according to the ascending $\textsf{Mindist}$ order. When the $\textsf{Mindist}$ of the next node pair to process exceeds $d_T$, the search terminates, and $T$ point pairs are returned for verification.

The details of Algorithm \ref{alg:branch_and_bound} are explained as follows.
\begin{enumerate}
\item We initialize a point pair candidate set $C$ of size $|C|=T$. We apply a self-join on each leaf node in the PM-tree and update $C$ and $d_T$ accordingly.
\item We maintain a priority queue $PQ$ to store node pairs in ascending $\textsf{Mindist}$ order. We initialize $PQ$ by inserting $(e_r,e_r)$, where $e_r$ is the root of the PM-tree.
\item We pop the top element $ ( e_{1},e_{2}) $ from $PQ$. If we have $ \textsf{Mindist}(e_1,e_2) > d_T $, the procedure stops; otherwise, we continue to examine $ (e_{1},e_{2}) $. The PM-tree is a balanced tree, and we only consider node pairs at the same level. Therefore, if $e_1$ and $e_2$ are leaf nodes, we compute the distance of each point pair in $e_1\times e_2$ and update $C$ and $d_T$ accordingly. If $e_1$ and $e_2$ are non-leaf nodes, for each child node $e_{1}'$ of $e_1$ and each child node $e_{2}'$ of $e_2$, we insert $( e_{1}',e_{2}')$ into $PQ$. This process terminates when $PQ$ is empty if it did not terminate earlier.
\item We verify the original distance of each point pair in $C$ and return top-$k$ point pairs.
\end{enumerate}

\begin{algorithm}[t]
	\caption{Branch and Bound Algorithm}\label{alg:branch_and_bound}
	\KwIn{A dataset $\mathcal{D}$, a PM-tree $\mathcal{T}$ indexing the projected data and parameters $T$, $n$, $k$}
	\KwOut{$k$ point pairs}
	Apply a self-join on each leaf node in $\mathcal{T}$ and store $k$ found pairs with the smallest distance in the projected space;\\
	$d_T$ $\gets$ maximum distance of pairs in $C$;\\ 
	Initialize a priority queue $PQ$ to store node pairs in ascending \textsf{Mindist} order;\\
	$PQ\gets (\mathcal{T}.root,\mathcal{T}.root)$;\\
		\While{$PQ$ is not empty}{
			$(e_1,e_2)\gets PQ.Pop$;\\
			\If{\textsf{Mindist}$ (e_1,e_2)>d_T $}{\textbf{Break};\\}
			\ForEach{child node $e_1'$ of $e_1$}
			{
				\If{$e_2'$ is a leaf node}{
					\ForEach{point pair $( o_{1}',o_{2}')$ in $e_1'\times e_2'$}
					{
						Compute $\|o_{1}',o_{2}'\|$ and update $C$ and $d_T$ accordingly;\\
					}
				}
				\Else{
					\ForEach{child node $e_2'$ of $e_2$}
					{
						Insert $(e_1',e_2')$ into $PQ$;\\
					}
				}	
			}	
		}
	
	Verify the original distance of each point pair in $C$;\\

	\textbf{Return} Top-$k$ results from the verified pairs;\\
\end{algorithm}

\begin{myexample}
	In Fig. \ref{fig:pm}, for a $(2,2)$-ACP query, we set $T=3$. First, we apply a self-join to all leaf nodes $e_3$, $e_4$, $e_5$, and $e_6$, obtaining the top-$3$ result $(o_7,o_{15})$, $(o_2,o_{14})$, and $(o_{6},o_{13})$ with $d_T=1.70$. 
	Then, we consider pairs of points in different leaf nodes. We initialize $PQ$ with $(e_r,e_r)$. As $\textsf{Mindist}(e_r,e_r)=0<d_T$, we continue to insert $(e_1,e_1)$, $(e_2,e_2)$, and $(e_1,e_2)$ into $PQ$. Next, $(e_1,e_1)$ and $(e_2,e_2)$ are examined. For $e_1$'s child nodes $e_3$ and $e_4$, since $(e_3,e_3)$ and $(e_4,e_4)$ have been examined, we only need to insert $(e_3,e_4)$ into $PQ$. After employing a similar operation for $e_2$, the node pairs in $PQ$ are $\langle (e_1,e_2),$ $(e_5,e_6),(e_3,e_4) \rangle$. 
	This process proceeds until we examine $(e_4,e_6)$, since \textsf{Mindist}$(e_4,e_6)=2.91>d_T$. We return the top-3 pairs $(o_7,o_{15})$, $(o_2,o_{14})$, and $(o_{6},o_{13})$ in the projected space. We verify their distances in the original space and return $(o_7,o_{15})$ and $(o_{6},o_{13})$ as the result.
\end{myexample}

\subsection{Limitations of the Branch and Bound Algorithm}
In the branch and bound algorithm, the search procedure terminates when $\textsf{Mindist}>d_T$, where $\textsf{Mindist}$ is a lower bound distance on unexamined pairs. However, this bound is often so loose that the algorithm efficiency suffers. Specifically, due to the property of the PM-tree, the ranges covered by two nodes at the same level overlap with high probability. No matter how small the overlap is, $\textsf{Mindist}(e_1,e_2)=0$.

To understand this issue better, we conduct an experiment on dataset \textit{Audio} to count the number of node pairs with $\textsf{Mindist}=0$. We employ the branch and bound algorithm to search the PM-tree, and we count the number of node pairs with $\textsf{Mindist}=0$ among all verified node pairs. We find that more than 70$\%$ of the node pairs have $\textsf{Mindist}=0$, which indicates that most node pairs overlap.

This phenomenon may be explained by the fact that PM-trees are built so that structured clusters are achieved for the subtrees of each node. The differences between nodes are not considered during construction, due to the high computational cost. Therefore, when points are located in a dense region, the tree nodes constructed for this region are likely to overlap substantially due to their limited node capacity.

Consequently, we have to examine about 90\% of all pairs in the branch and bound algorithm when using a PM-tree with $m=15$, which makes the algorithm degenerate to nearly a brute-force nested loop algorithm. We observe that we can lower $m$ to reduce the cost of finding exact closest pairs in the projected space. However, a small $m$ may lead to an inaccurate confidence interval when estimating the correlation between original and projected distances. As a result, we have to verify more candidate pairs to achieve a high recall.

\subsection{Improvement with Radius Filtering}
To fewer pairs, we provide a radius filtering method. The idea is to compute an upper bound distance of the $k$-th best point pair in the original space. We then estimate a candidate distance in the projected space based on the upper bound and use this distance to prune unnecessary node pairs. 

Specifically, we still apply a self-join on each individual leaf node in the PM-tree. Let $ub$ denote the upper bound distance in the original space. We verify the original distances of all self-join pairs and initialize $ub$ to be the current $k$-th smallest distance. According to Lemma \ref{lemma:e1e2}, if a point pair exists whose original distance is smaller than $ub$, its projected distance is smaller than $t \cdot ub$ with a high probability. Therefore, we aim to find point pairs in the PM-tree whose projected distance is within $t \cdot ub$. As we have already examined all point pairs in leaf nodes via self-joins, we only need to check pairs of points from different leaf nodes.

Let $(o_1',o_2')$ be the point pair of $(o_1,o_2)$ in the projected space. We observe that there is a strong relationship between the projected distance $\|o_1',o_2'\|$ and the radius of their lowest common ancestor in the PM-tree. We define the concept of lowest common ancestor as follows.
\begin{definition}[Lowest Common Ancestor]\label{def:commonancestor}
The lowest common ancestor (LCA) of two points $o_1'$ and $o_2'$ is a node $e$ in the PM-tree such that:
\begin{itemize}
\item Points $o_1'$ and $o_2'$ are stored in the subtree of $e$.
\item No child node $e'$ of $e$ exists such that $o_1'$ and $o_2'$ are also stored in the subtree of $e'$.
\end{itemize}
\end{definition}
Let $R=\textsf{e.r}$ denote the radius of the LCA node $e$ of $o_1'$ and $o_2'$. We assume that $\gamma \cdot \|o_1',o_2'\| \leq R$ holds with high probability, where the setting of parameter $\gamma$ is explained later. Therefore, in order to find point pairs with projected distance smaller than $t \cdot ub$, we only have to examine the points of nodes in the PM-tree whose radius is smaller than $\gamma \cdot t \cdot ub$.

We explain the details of Algorithm \ref{alg:rsj} as follows.

\begin{enumerate}
	\item We initialize a point pair candidate set $C$ of size $|C|=k$. We apply a self-join on each leaf node in the PM-tree, and we compute the original distances of all pairs found. We then update $C$ and $ub$ accordingly.
	
	\item Let $R = \gamma \cdot t \cdot ub$ be the radius used for node filtering in the PM-tree.
	
	\item We employ Algorithm $\textsf{FindLCA}()$ that traverses the PM-tree to find the nodes with radius smaller than $R$. A node $e$ returned by $\textsf{FindLCA}()$ may not be an LCA of the points it covers. 
	But we can find the LCA of any point pair it covers in the subtree of $e$, and the radius of the LCA is smaller than $R$. Therefore, it suffices to examine the point pairs covered by $e$.
	
	\item We consider the nodes returned by $\textsf{FindLCA}()$ in ascending order of their radii. The intuition is that a node with a small radius is likely to cover point pairs with small projected distances. 
	
	
	\item We examine the nodes in turn. For any two points $o_1'$ and $o_2'$ in the sub-tree of a node $e$, we compute $\|o_1',o_2'\|$, and if $\|o_1',o_2'\| < t \cdot ub$, we consider $(o_1,o_2)$ as a candidate pair. Then, we compare $\|o_1,o_2\|$ with $ub$ and update both $ub$ and $C$ if necessary. This process stops when we have $T$ candidate pairs.

	\item We return $C$ as the result.
\end{enumerate}

\begin{algorithm}[t]
	\caption{Radius Filtering Method}\label{alg:rsj}
	\KwIn{A dataset $\mathcal{D}$, a PM-tree $\mathcal{T}$ indexing the projected data and parameters $T$, $n$, $t$, $k$, $\gamma$}
	\KwOut{$k$ point pairs}
	
	Apply a self-join on each leaf node in $\mathcal{T}$ and verify all found point pairs;\\
	$count\gets $ The number of verified pairs;\\
	$ub$ $\gets$ The $k$-th smallest real distance in found pairs;\\ 
	
	$R\gets \gamma\cdot t \cdot ub$;\\
	$C \gets \varnothing$;\\
	Initialize an array $A$ to store the nodes;\\
	
	$ \textsf{FindLCA}(\mathcal{T}.root,R,A) $;\\
	Sort the nodes in $A$ in ascending order of their radii;\\
	\ForEach{node $e$ in $A$}
	{
		\ForEach{point pair $( o_{1},o_{2})$ in $e$'s subtree}
		{
			\If{$\|o_1',o_2'\|<t \cdot ub$}{
				Verify $( o_{1},o_{2})$ and update $ub$;\\
				$count$++;\\
			}
			\If{$count>T$}{\textbf{Break};\\}
		}
		\If{$count>T$}{\textbf{Break};\\}
	}
	\textbf{Return} All pairs in $C$;\\
\end{algorithm}

\begin{algorithm}[t]
	\caption{\textsf{FindLCA}$ (e, R, A) $}\label{alg:findNode}
	\KwIn{A PM-tree node $e$, a radius $R$, and an array $A$}
	\KwOut{$A$}
	
	\If{$e$ is an inner node}{
		\If{$\mathsf{e.r}<R$}{Insert $e$ into $A$;\\}
		\Else{
			\ForEach{child node $e_i$ of $e$}
			{
				$ \textsf{FindLCA}(e_i,R,A) $;\\
			}
		}
	}
\end{algorithm}

\begin{myexample}
	In the example in Fig. \ref{fig:pm}, the PM-tree has 4 leaf nodes $e_3$, $e_4$, $e_5$, and $e_6$. To compute a $(2,2)$-ACP query, we first apply a self-join to all leaf nodes and obtain the preliminary top-$2$ pairs $(o_4,o_8)$ and $(o_{12},o_{14})$, both with distance $1$. We set $ub=1$. Setting $t=3$ and $\gamma=3$, we get $t\cdot ub=3$ and $R=9$. We find all inner nodes whose ranges are within $9$ and obtain $e_2$. The unverified pairs in the subtree of $e_2$ come from $e_5\times e_6$. As $\|o_4',o_2'\|=3.2>3$, we skip it and process the remaining pairs. Finally, we obtain $R = \langle (o_4,o_8),$ $(o_{12},o_{14}) \rangle$.
\end{myexample}

\textbf{Determining the Setting of $\gamma$.}
For any two points $o_1'$ and $o_2'$ in the projected space, we observe that $\|o_1',o_2'\|$ and the radius of their LCA have a strong correlation. Let $\gamma=\frac{R}{\|o_1',o_2'\|}$ be the ratio of $R$ over $\|o_1',o_2'\|$. To ensure the quality of the nodes returned by the radius filtering, we need to find an appropriate setting for $\gamma$. To do so, we study the probability density functions of $\gamma$ on real datasets.

Let us take dataset \textit{Audio} (Details are provided in Sec. \ref{sec:experiments}) as an example. We use $m=15$ hash functions. First, we randomly select $10$K data points. We then index these points in the projected space using two PM-trees with node capacity $M=2$ and $M=16$, respectively.
We obtain some $50$ million point pairs from 10K points. For each pair, we compute the value of $\gamma$. Fig. \ref{fig:rca} shows the probability density functions $f_{\gamma}(x)$ for $M=2$ and $M=16$. It is easy to see that the two functions have similar trends. Both peak quickly and then decline quickly. An appropriate value of $\gamma$ is very likely to be within the neighborhood of the peak, which indicates that $\gamma$ varies slightly for different pairs. With $\Pr(\gamma)$ being the success probability, we choose $\gamma$ such that $\Pr(\gamma) = \int_{0}^{\gamma} f_{\gamma}(x)dx = 85\%$ for all datasets. Note that we can enlarge the value of $\Pr(\gamma)$ to examine more nodes. But this represents a tradeoff between accuracy and efficiency, and $\Pr(\gamma)=85\%$ already provides good performance.
We analyze the cost of computing $\gamma$ experimentally in Section \ref{sec:experiments}. Specifically, the cost is the time it takes to compute the distances of $50$ million point pairs, which is acceptable when compared with the total cost.

\begin{figure}
	\centering
	\includegraphics[width=.3\textwidth]{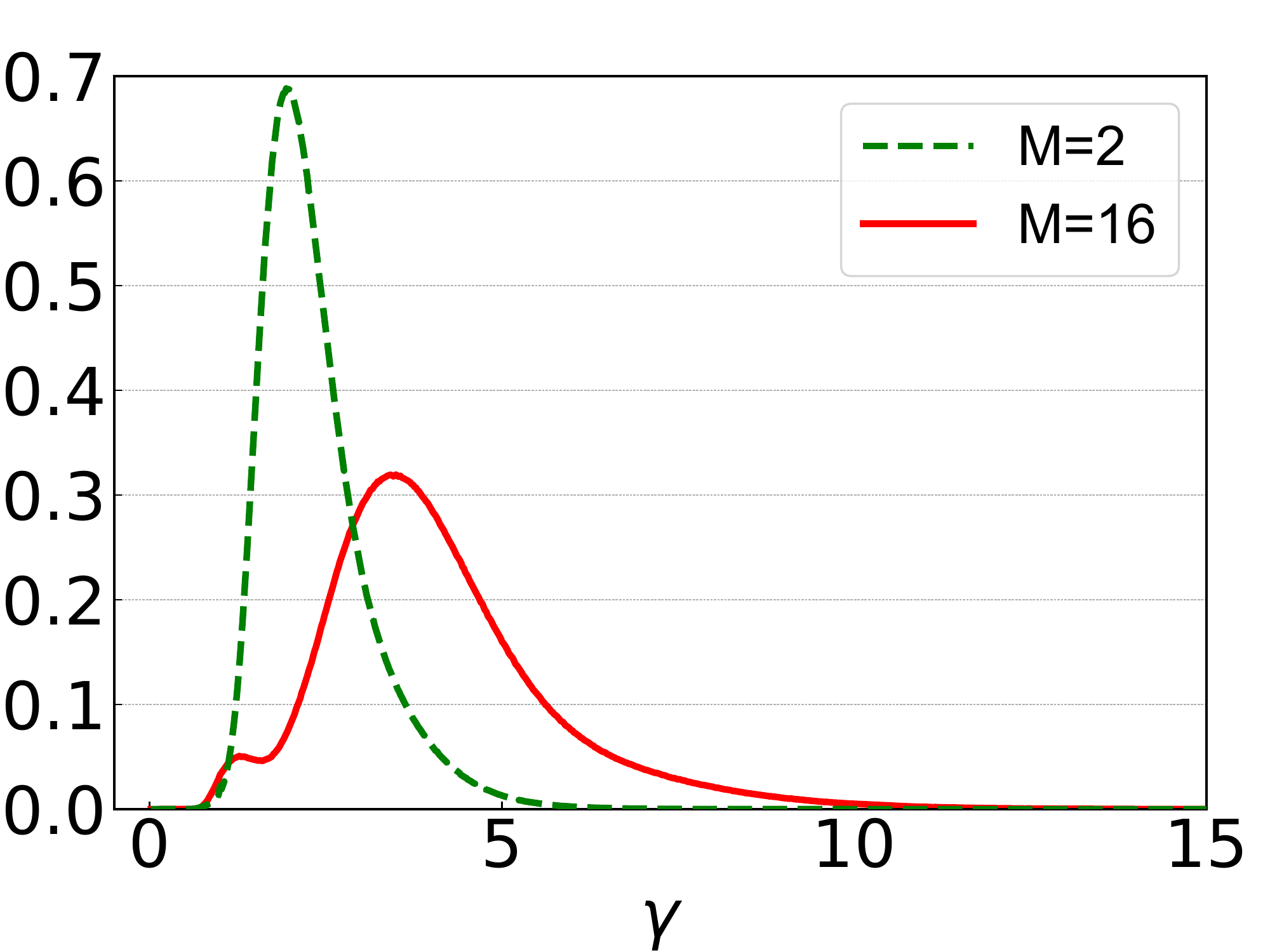}
	\caption{The Probability Density Function of $\gamma$}
	\label{fig:rca}
\end{figure}

\textbf{Promote Methods for the PM-tree.}
The PM-tree is built bottom-up by inserting the data points one by one. When a node $e$ overflows after inserting $M+1$ entries, we allocate a new node $e'$ at the same level and distribute the $ M + 1 $ entries among the two nodes.
One study \cite{DBLP:conf/sebd/CiacciaPRZ97} contributes the concept of a \textsf{Promote} method that selects two points as the centers of two nodes $e$ and $e'$. It is easy to see that different centers may lead to different partitioning results, which affects the algorithm performance. We consider two \textsf{Promote} methods as follows.

\begin{itemize}
\item
\textsf{m$\_$RAD} selects two points from all possible combinations as the centers such that the sum of the two covering radii is the minimum after partitioning. 
This method incurs many distance computations but also yields high-quality partitioning. 
\item
\textsf{RANDOM} selects two points as node centers at random.
\end{itemize}

It is obvious that \textsf{m$\_$RAD} provides no worse partitioning than does \textsf{RANDOM}, since \textsf{m$\_$RAD} aims to minimize the sum of the two covering radii, which represents a locally optimal partitioning of the $M+1$ entries. Consequently, the two nodes are covered by a parent node with a small radius. 
In this case, the radius filtering strategy enables to obtain $T$ candidate pairs with higher quality.

\noindent
\textbf{Algorithm Analysis of Radius Filtering}.
In the radius filtering method, as we have $n(n-1)/2$ pairs, we set $T=\beta n(n-1)/2+k$, which is similar to the setting for the NN query.

\begin{theorem}
	PM-LSH answers an ACP query with space cost $O(n)$ and time cost $O(\beta n^2)$, where $\beta$ is much smaller than 1.
\end{theorem}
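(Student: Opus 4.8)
The plan is to follow the same two-part template used in the proof of the NN analysis theorem: first bound the space, then decompose the running time of the radius filtering method (Algorithm~\ref{alg:rsj}) into its constituent phases and bound each in turn. For the space cost I would reuse the earlier argument verbatim: the only persistent structure is the PM-tree built over the $n$ projected points, and each stored item occupies $m+O(1)$ words for its projected coordinates and bookkeeping; since $m=O(1)$, the total is $O(n)$.

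For the time cost I would identify three phases. First, the self-join on the leaf nodes (line~1 of Algorithm~\ref{alg:rsj}): because the node capacity $M=O(1)$, each leaf contributes $O(M^2)$ pairs and there are $O(n/M)$ leaves, so this phase costs $O(nM)=O(n)$. Second, the call to \textsf{FindLCA} (Algorithm~\ref{alg:findNode}): this is a single top-down traversal of the PM-tree, which has $O(n)$ nodes, hence $O(n)$. Third, and dominant, the candidate-examination loop over the nodes returned in $A$: here I would argue that the loop halts once $count$ exceeds $T$, and that with $T=\beta n(n-1)/2+k$ the number of point pairs processed is $\Theta(T)=O(\beta n^2)$; each processed pair costs only $O(m+d)=O(1)$ distance work when $m$ and $d$ are treated as constants. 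Summing the three phases gives $O(n)+O(n)+O(\beta n^2)=O(\beta n^2)$.

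The hard part is the third phase. The inner loop of Algorithm~\ref{alg:rsj} iterates over every point pair in the subtree of each returned node, but only increments $count$ on pairs whose projected distance falls below $t\cdot ub$; so bounding the true work requires showing that the number of pairs actually touched stays proportional to $T$ rather than exploding. I would close this gap by invoking the radius-filtering guarantee: \textsf{FindLCA} only returns nodes whose radius is below $R=\gamma\cdot t\cdot ub$, and by the choice of $\gamma$ (so that $\gamma\|o_1',o_2'\|\le R$ holds with probability $85\%$) such a node covers point pairs whose projected distances are overwhelmingly within $t\cdot ub$. Hence the examined pairs are, up to the filtering confidence, essentially the $\Theta(T)$ candidate pairs, and the loop terminates after $O(\beta n^2)$ work. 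A secondary subtlety worth flagging is that $ub$ is monotonically tightened as pairs are verified, which only shrinks $t\cdot ub$ and can therefore never increase the number of qualifying pairs, so this refinement does not disturb the bound.
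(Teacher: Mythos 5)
Your overall structure matches the paper's proof: the paper likewise charges $O(n)$ space to the PM-tree ($m+O(1)$ per point with $m=O(1)$) and splits the query time into finding candidate pairs and verifying candidate pairs, each charged $O(T)$ with $T=\beta n(n-1)/2+k$, giving $O(\beta n^2)$. Your phases 1 and 2 (leaf self-joins and \textsf{FindLCA}, both $O(n)$) are a harmless refinement of that accounting, and your space argument is the paper's verbatim.

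The problem is the argument you use to close the gap you correctly identified in the third phase. The choice of $\gamma$ (via $\Pr(\gamma)=\int_0^\gamma f_\gamma(x)\,dx=85\%$) guarantees that for $85\%$ of pairs the radius $R$ of their lowest common ancestor satisfies $R \leq \gamma\,\|o_1',o_2'\|$. This is a \emph{recall} statement: a pair with projected distance below $t\cdot ub$ then has an LCA of radius below $\gamma\, t\cdot ub$ and is therefore reached by \textsf{FindLCA}. It says nothing in the other direction. A node returned by \textsf{FindLCA} has radius below $\gamma\, t\cdot ub$, so the pairs in its subtree are only guaranteed to satisfy $\|o_1',o_2'\| \leq 2\gamma\, t\cdot ub$; with the paper's own example values ($\gamma=3$, $t=3$) this interval is far wider than $t\cdot ub$, and nothing bounds the number of such non-qualifying pairs by $O(T)$. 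Since $count$ in Algorithm~\ref{alg:rsj} is incremented only for pairs passing the $\|o_1',o_2'\|<t\cdot ub$ test, the inner loop can in the worst case touch $\omega(T)$ pairs (e.g., a single returned node high in the tree covering a large subtree that happens to contain few close pairs) before the break triggers, so your claim that the examined pairs are ``essentially the $\Theta(T)$ candidate pairs'' does not follow from the radius-filtering guarantee—you have used it in the reverse (precision) direction. To be fair, the paper's proof does not close this gap either; it simply asserts that candidate generation costs $O(T)$. If you state that bound as an assumption, as the paper effectively does, your proof coincides with the paper's; but the justification you supply for it is invalid.
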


\begin{proof}
	The space consumption is due mainly to the PM-tree with $n$ points. Each point consumes $m+O(1)$ space, so the overall space consumption is $O(n)$ as $m=O(1)$. The query time cost stems from two operations: 1) finding candidate pairs in the PM-tree, and 2) verifying the real distances of candidate pairs. Both operations have cost $O(T)$ when $d$ is considered as a constant. According to the setting of $T$, the total query time is $O(\beta n^2)$.
\end{proof}
\section{Experiments} \label{sec:experiments}
We report on extensive experiments with real datasets that offer insight into the performance of PM-LSH for both NN and CP queries.

\subsection{Experimental Settings}
All the algorithms are implemented in C++, and compilation is done with the O3 optimization. All experiments are run on a Linux machine with an Intel 3.4GHz CPU and 32GB memory.

\begin{table}
	\caption{Datasets}
	\vspace*{0.1in}
	\label{tb:datasets}
	\centering
	\begin{tabular}{ccccccc}
		\hline
		\textbf{Dataset} &\textbf{$n$} ($\times 10^3$) & \textbf{$d$} & \textbf{HV} & \textbf{RC} & \textbf{LID}\\
		\hline
		Audio & 54 & 192 &0.9273 &2.97 &5.6\\
		Deep & 1,000 & 256 &0.9393 &1.96 &12.1\\
		NUS & 269 & 500 &0.9995 &1.67 &24.5\\
		MNIST & 60 & 784 &0.9531 &2.38 &6.5\\
		GIST & 983 & 960 &0.9670 &1.94 &18.9\\
		Cifar & 50 & 1,024 &0.9457 &1.97 &9.0\\
		Trevi & 100 & 4,096 &0.9432 &2.95 &9.2\\
		\hline
	\end{tabular}
\end{table}

\begin{figure*}[t]
	\centering
	\subfigure[Time]{
		\begin{minipage}[c]{0.23\linewidth}
			\centering
			\includegraphics[width=1\textwidth]{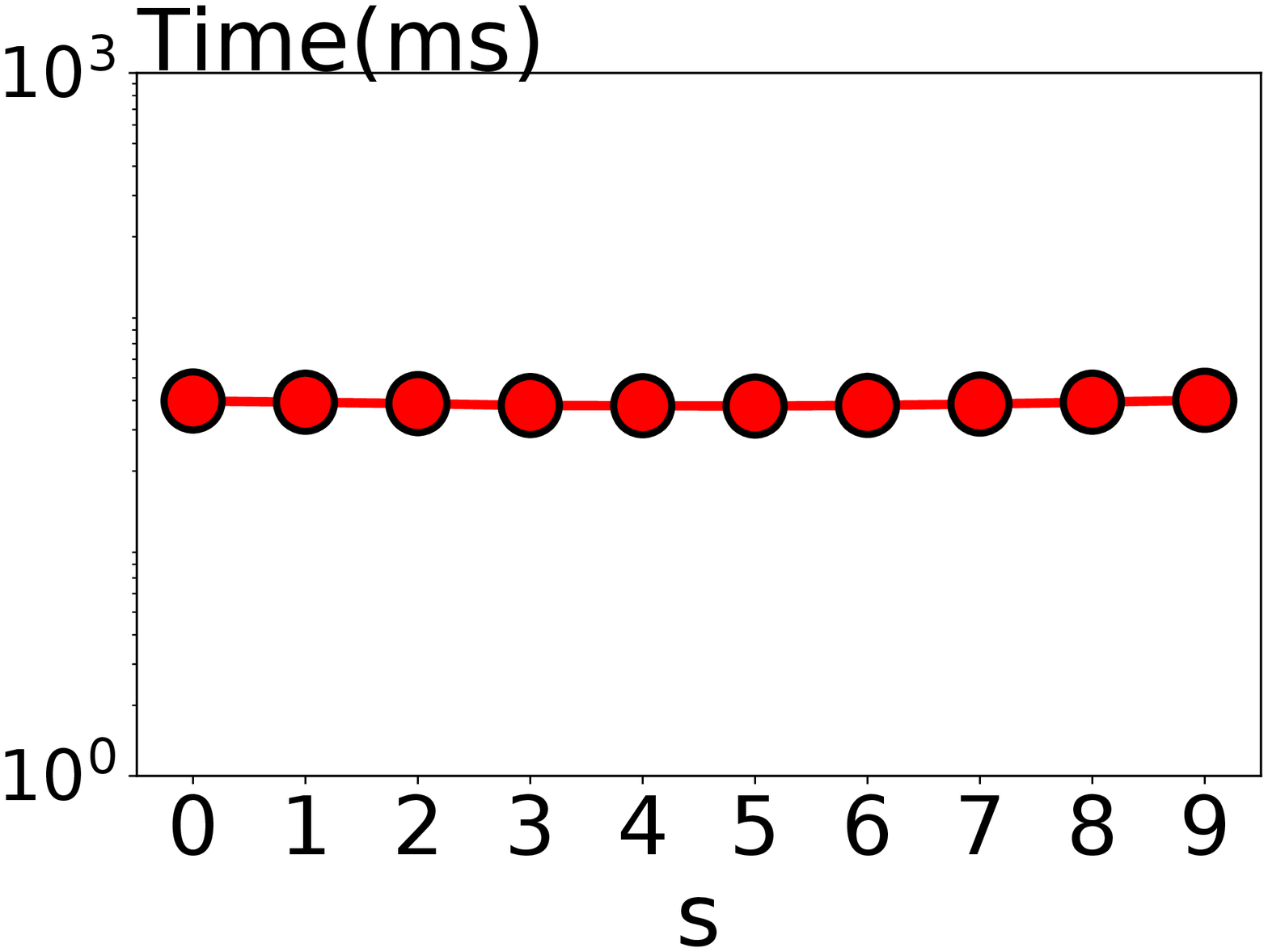}
		\end{minipage}
		\label{fig:s}
	}
	\subfigure[Time]{
		\begin{minipage}[c]{0.23\linewidth}
			\centering
			\includegraphics[width=1\textwidth]{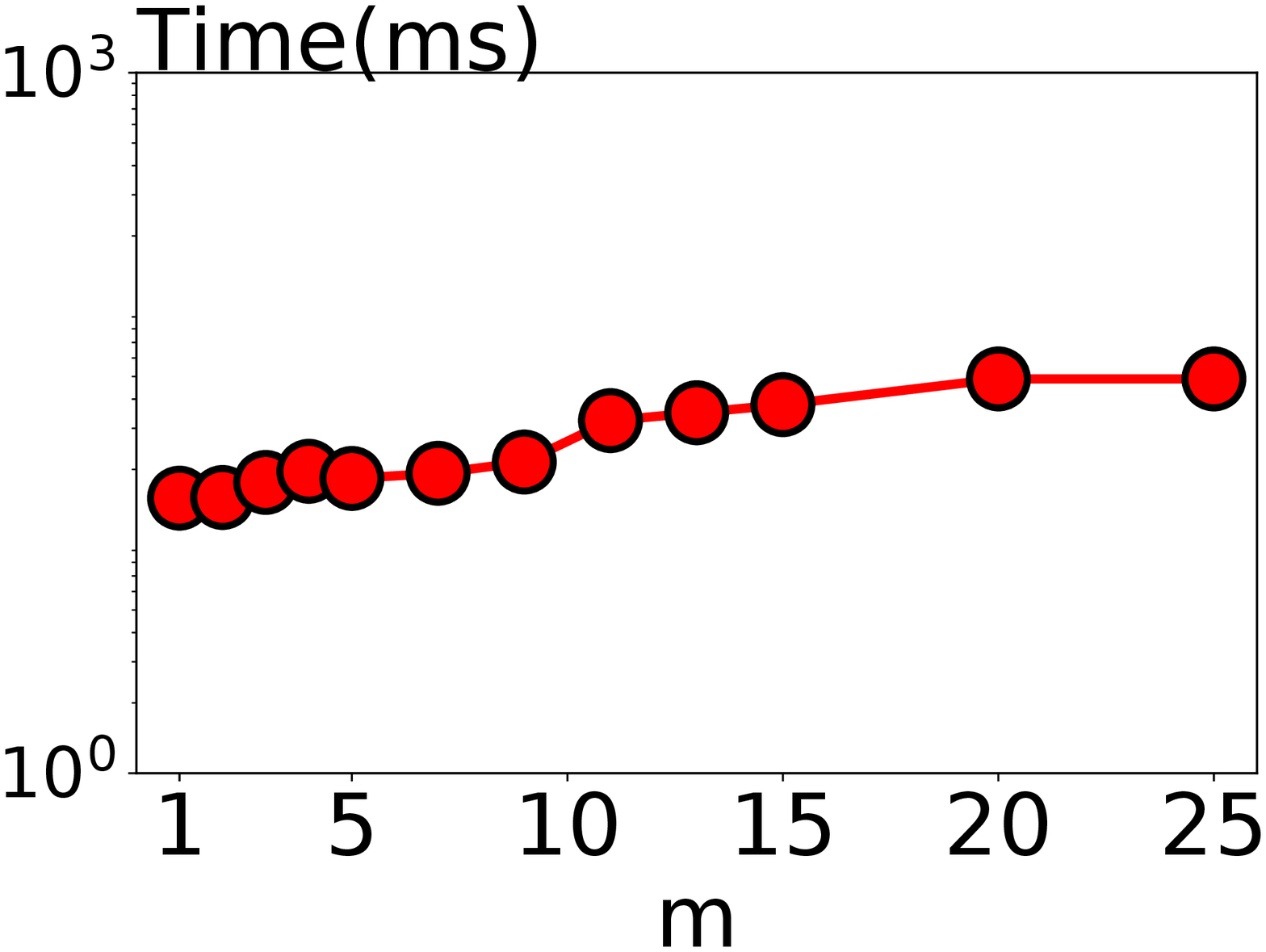}
		\end{minipage}
	}
	\subfigure[Recall]{
		\begin{minipage}[c]{0.23\linewidth}
			\centering
			\includegraphics[width=1\textwidth]{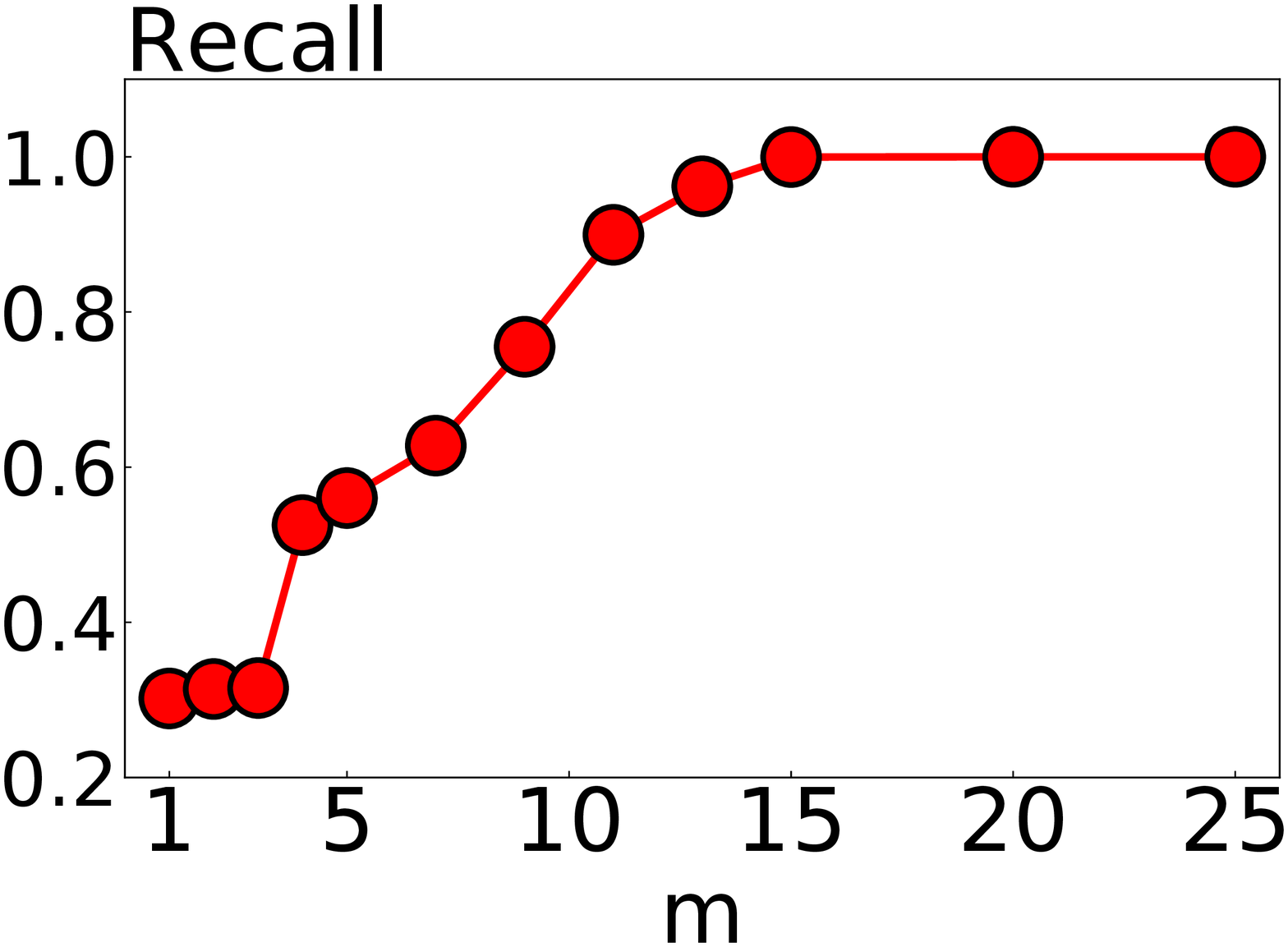}
		\end{minipage}
	}
	\subfigure[OverRatio]{
		\begin{minipage}[c]{0.23\linewidth}
			\centering
			\includegraphics[width=1\textwidth]{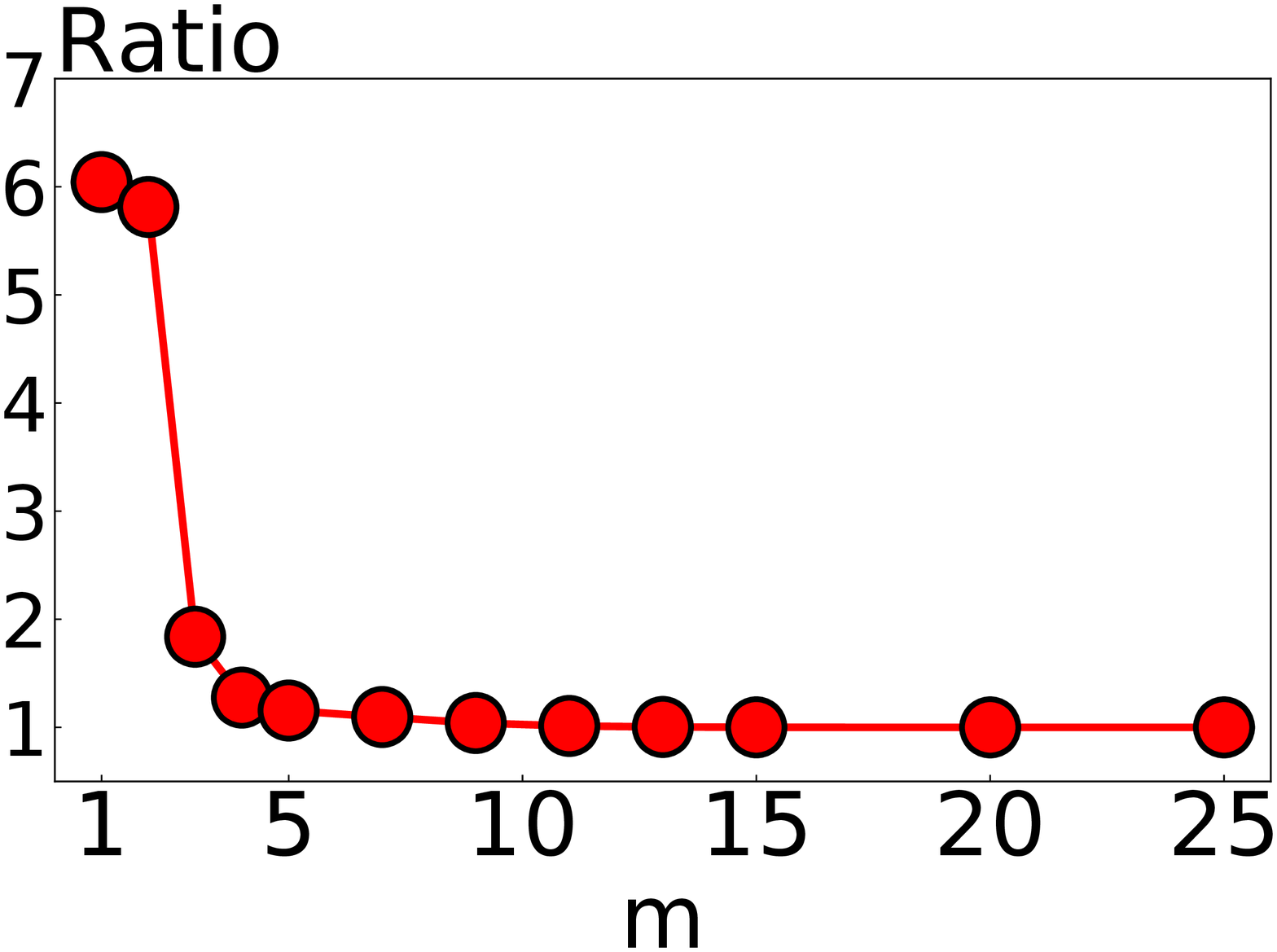}
		\end{minipage}
	}
	\caption{Performance of PM-LSH when Varying $s$ and $m$}
	\label{fig:m}
\end{figure*}

\textbf{Datasets}.
We use seven real datasets: \textit{Audio}, \textit{Deep}, \textit{NUS}, \textit{MNIST}, \textit{GIST}, \textit{Cifar}, and \textit{Trevi}, which are used widely in existing LSH studies \cite{DBLP:journals/tkde/LiZSWLZL20, DBLP:conf/sigmod/GanFFN12,DBLP:journals/pvldb/HuangFZFN15,DBLP:journals/pvldb/SunWQZL14,DBLP:conf/sigmod/LiYZXCLNC18}.
Table \ref{tb:datasets} reports key statistics of the datasets: \textit{Homogeneity of Viewpoints} (HV\cite{DBLP:conf/pods/CiacciaPZ98}), \textit{Relative Contrast} (RC \cite{DBLP:conf/icml/HeKC12}), and \textit{Local Intrinsic
Dimensionality} (LID \cite{DBLP:conf/kdd/AmsalegCFGHKN15}). 
RC computes the ratio of the mean distance over the NN distance for the data points. LID computes the local intrinsic dimensionality. 
A small RC value and a large LID value imply that it is challenging to compute NN results for the dataset.
HV evaluates the homogeneity of the distance distributions of the data points.
A higher HV means that the points are more likely to have similar distance distributions.

\textbf{Query Set}.
For NN queries, we randomly select 200 points from each dataset, and we repeat each experiment 20 times and report the average value.
We set the default value of $c$ to 1.5, and vary its value in $\{1.1, 1.2,\dots,2.0\}$.
We vary the value of $k$ in $\{1,10,20,\dots,100\}$ and set the default value to 50.
For CP queries, we repeat each experiment 20 times and report the average value.
We vary the value of $k$ in $\{1,10,10^2,\dots,10^4\}$ and set the default value to $10^3$.
The default value of $c$ is 4 in PM-LSH and the LSB-tree.

\textbf{Competing Algorithms.}
For NN queries, we compare PM-LSH with the following competitors:
\begin{enumerate}
\item \textbf{Multi-Probe} \cite{DBLP:conf/vldb/LvJWCL07}: A probing sequence (PS) based algorithm.
\item \textbf{QALSH} \cite{DBLP:journals/pvldb/HuangFZFN15}: A radius enlargement (RE) based algorithm.
\item \textbf{SRS} \cite{DBLP:journals/pvldb/SunWQZL14}: A metric indexing (MI) based algorithm.
\item \textbf{R-LSH}: In order to compare the PM-tree and the R-tree, we index the points in the projected space with an R-tree instead of a PM-tree to see how PM-LSH then performs. We call this method R-LSH.
\item \textbf{LScan}: We consider a linear scan algorithm called LScan that randomly selects a portion of points (default 70\%) and returns the top-$k$ points with the smallest distances to the query.
\end{enumerate}

For CP queries, we compare PM-LSH with the following competitors:
\begin{enumerate}
\item \textbf{LSB-tree} \cite{DBLP:journals/tods/TaoYSK10}: The LSB-tree supports both NN and CP queries.
\item \textbf{M$ k $CP} \cite{DBLP:journals/vldb/GaoCLYC15}: M$k$CP supports CP queries with the M-tree. We choose the variant called GMA that uses grouping and N-consider techniques that enables trade-offs between time and accuracy.
\item \textbf{ACP-P} \cite{DBLP:conf/pakdd/CaiRZ18}: The state-of-the-art solution for CP queries.
\item \textbf{NLJ}: Nested loop join (NLJ) is an exact algorithm that computes the distance between any two points with two nested loops and then returns the top-$ k $ CPs.
\end{enumerate}

\textbf{Parameter Settings.}
For NN queries, we choose $m=15$ hash functions for all the algorithms except QALSH and Multi-Probe.
In our method, we set the number of pivots $s=5$ and $\alpha_1=1/e$, so $\alpha_2=0.1405$ and $\beta=0.2809$ are obtained according to Eq. \ref{eq:paramsetting}, and $r_{min}$ is determined according to the description in the previous section.
For QALSH, the false-positive percentage $\beta =100/n$, and the error probability $\delta=1/e$.
For SRS, the threshold of its early-termination condition $p_{\tau}'= 0.8107$, and the maximum percentage of points accessed in the projected space is $T= 0.4010$ when $c=1.5$.

For CP queries, we choose $m=15$ hash functions for our algorithm. We set the number of pivots $s=5$, $\Pr(\gamma)=0.85$, and $\alpha_1=1/e$, so $\alpha_2=0.0024$ are obtained according to Eq. \ref{eq:paramsetting}, and thus $T=\alpha_2 n(n-1)+k$. For ACP-P, we set the hyper parameter $ h=5 $ and the range value is set to 5 according to the advice of its authors.
For M$ k $CP, we set the number of groupings to $ N=2 $. 
For the LSB-tree, the approximation ratio is set to $c=4$.

\textbf{Evaluation Metrics.}
We adopt three metrics to assess the performance of the algorithms: query time (ms for NN, s for CP), overall ratio, and recall, where the query time quantifies the algorithm efficiency and the overall ratio and recall capture the result quality.
For an NN query $q$, we denote the result of a $(c,k)$-ANN query by $R=\langle o_1,o_2,\cdots,o_k\rangle$. Let $R^*=\langle o_1^*,o_2^*,\cdots,o_k^* \rangle$ be the exact $k$NNs. The overall ratio and recall are computed as follows.
\begin{equation}
\mathit{OverallRatio} = \frac{1}{k} \sum_{i=1}^{k} \frac{\|q,o_i\|}{\|q,o_i^*\|}
\end{equation}
\begin{equation}
\mathit{Recall}=\frac{\lvert R \cap R^* \rvert}{\lvert R^* \rvert}
\end{equation}

For a CP query, we denote the result of a $(c,k)$-ACP query by $R=\langle (o_{1,1},o_{1,2}),(o_{2,1},o_{2,2}),\dots,(o_{k,1},o_{k,2}) \rangle$. Let $R^*=\langle (o_{1,1}^*,o_{1,2}^*),(o_{2,1}^*,o_{2,2}^*),\dots,(o_{k,1}^*,o_{k,2}^*) \rangle$ be the exact $k$CPs. The recall is the same as for the NN query, and the overall ratio is computed as follows.
\begin{equation}
\mathit{OverallRatio} = \frac{1}{k} \sum_{i=1}^{k} \frac{\|o_{i,1},o_{i,2}\|}{\|o_{i,1}^*,o_{i,2}^*\|}
\end{equation}

\renewcommand\arraystretch{1.2}
\begin{table*}[htbp]
	\centering
	\caption{Performance Overview of NN Queries}
	\label{tb:overview}
	\begin{tabular}{|c|c| m{1.5cm}<{\centering} |m{1.5cm}<{\centering}|m{1.5cm}<{\centering}|m{2cm}<{\centering}|m{1.5cm}<{\centering}|m{1.5cm}<{\centering}|}
		\cline{1-8}
		\multicolumn{2}{|c|}{} & \textbf{PM-LSH} & \textbf{SRS} & \textbf{QALSH} & \textbf{Multi-Probe} & \textbf{R-LSH} & \textbf{LScan} \\
		\cline{1-8}
		\multirow{3}*{\textbf{Audio}}&Query Time (ms)&\textbf{13.5}&15.3&22.5&15.3&14.2&19.6\\	
		\cline{2-8}
		\multicolumn{1}{|c|}{}&Overall Ratio&\textbf{1.0014}&1.0025&1.0043&1.0242&1.0019&1.0073\\
		\cline{2-8}
		\multicolumn{1}{|c|}{}&Recall&\textbf{0.9662}&0.9126&0.9003&0.8669&0.9633&0.6839\\
		\cline{1-8}
		\multirow{3}*{\textbf{MNIST}}&Query Time (ms)&\textbf{12.3}&18.4&24.7&19.1&16.2&60.3\\	
		\cline{2-8}
		\multicolumn{1}{|c|}{}&Overall Ratio&\textbf{1.0076}&1.0101&1.0085&1.0103&1.0095&1.0276\\
		\cline{2-8}
		\multicolumn{1}{|c|}{}&Recall&\textbf{0.8857}&0.8514&0.8655&0.8502&0.8705&0.7073\\
		\cline{1-8}
		\multirow{3}*{\textbf{NUS}}&Query Time (ms)&\textbf{125.7}&142.1&133.2&125.9&129.6&176.8\\	
		\cline{2-8}
		\multicolumn{1}{|c|}{}&Overall Ratio&\textbf{1.0009}&1.0015&1.0027&1.0025&1.0011&1.0053\\
		\cline{2-8}
		\multicolumn{1}{|c|}{}&Recall&\textbf{0.9257}&0.9247&0.8677&0.8782&0.9214&0.7057\\
		\cline{1-8}
		\multirow{3}*{\textbf{Trevi}}&Query Time (ms)&\textbf{37.2}&47.9&145.5&239.3&63.9&57.68\\	
		\cline{2-8}
		\multicolumn{1}{|c|}{}&Overall Ratio&\textbf{1.0004}&1.0015&1.0029&1.0057&1.0044&1.0084\\
		\cline{2-8}
		\multicolumn{1}{|c|}{}&Recall&\textbf{0.9961}&0.9342&0.8240&0.8534&0.9568&0.7103\\
		\cline{1-8}
		\multirow{3}*{\textbf{Cifar}}&Query Time (ms)&\textbf{11.6}&16.1&38.3&26.8&35.6&58.2\\	
		\cline{2-8}
		\multicolumn{1}{|c|}{}&Overall Ratio&\textbf{1.0009}&1.0025&1.0057&1.0038&1.0056&1.0125\\
		\cline{2-8}
		\multicolumn{1}{|c|}{}&Recall&\textbf{0.9746}&0.9624&0.7917&0.8011&0.9610&0.7081\\
		\cline{1-8}
		\multirow{3}*{\textbf{GIST}}&Query Time (ms)&\textbf{398.7}&452.5&627.7&782.9&425.3&1528.3\\	
		\cline{2-8}
		\multicolumn{1}{|c|}{}&Overall Ratio&\textbf{1.0047}&1.0049&1.0037&1.0053&1.0059&1.0076\\
		\cline{2-8}
		\multicolumn{1}{|c|}{}&Recall&\textbf{0.8436}&0.8145&0.8534&0.8122&0.8098&0.7023\\
		\cline{1-8}
		\multirow{3}*{\textbf{Deep}}&Query Time (ms)&\textbf{227.8}&252.9&458.2&401.4&457.5&507.5\\	
		\cline{2-8}
		\multicolumn{1}{|c|}{}&Overall Ratio&\textbf{1.0037}&1.0077&1.0124&1.0112&1.0152&1.0145\\
		\cline{2-8}
		\multicolumn{1}{|c|}{}&Recall&\textbf{0.8816}&0.8894&0.646&0.8118&0.8801&0.6938\\
		\hline
	\end{tabular}
\end{table*}

\subsection{Evaluation of NN Query Processing}
To evaluate the performance of PM-LSH for NN query processing, we first determine parameter settings. Then, we compare the performance of all algorithms with default parameter settings on all datasets. Finally, we compare the algorithms by studying the changes of the overall ratio and recall under fixed query times.

\textbf{Parameter Study on PM-LSH for NN Query.}
We consider two parameters that may affect the performance of PM-LSH, i.e., the number of pivots $s$ and the number of hash functions $m$. 
Here, we only show results from the \textit{Trevi} dataset.
It is easy to see that $s$ only affects the query time. The overall ratio and recall do not change when we vary $s$.
As we can see from the Fig. \ref{fig:s}, when $s$ changes, the query time remains steady, which indicates that PM-LSH is largely unaffected by different settings for $s$. When using a larger number of pivots, we have a higher chance to prune subtrees in the PM-tree. However, the cost of checking the pruning condition also increases. In conclusion, we set $s=5$.

As shown in Fig. \ref{fig:m}, when the value of $m$ increases, we obtain a higher overall ratio and recall, but the query time also increases. The higher quality occurs because a larger $m$ leads to more accurate distance estimation. However, the average cost to retrieve a point from the PM-tree also increases. Taking both efficiency and accuracy into consideration, we set $m=15$.

When comparing PM-LSH with R-LSH, we observe in all the experiments that PM-LSH outperforms R-LSH on all metrics, which confirms the expected superiority of the PM-tree over the R-tree.

\textbf{Performance Overview of NN Query}.
To compare all the algorithms with default parameter settings, we report the query time (ms), overall ratio, and recall on all datasets in Table \ref{tb:overview}. 
PM-LSH is more efficient than the competitors on all datasets, and its overall ratio and recall are also better than those of its competitors. Moreover, we find that either query time, overall ratio, or recall depend only slightly on the dataset dimensionality. For instance, \textit{Audio}, \textit{MNIST}, and \textit{Cifar} have nearly the same cardinality, but different dimensionality, i.e., $192$, $784$, and $1024$. However, the query times of PM-LSH on them are different and it is not only affected by data dimensionality. So we explain this by the query time being affected by the data distribution.
In Table \ref{tb:datasets}, we can see that dataset \textit{GIST} has large LID value and small RC value, so it is considered as challenging dataset. As shown in Table \ref{tb:overview}, it has larger query times than the other datasets.

\begin{figure*}[htbp]
	\centering
	\subfigure{
		\centering
		\includegraphics[width=0.8\textwidth]{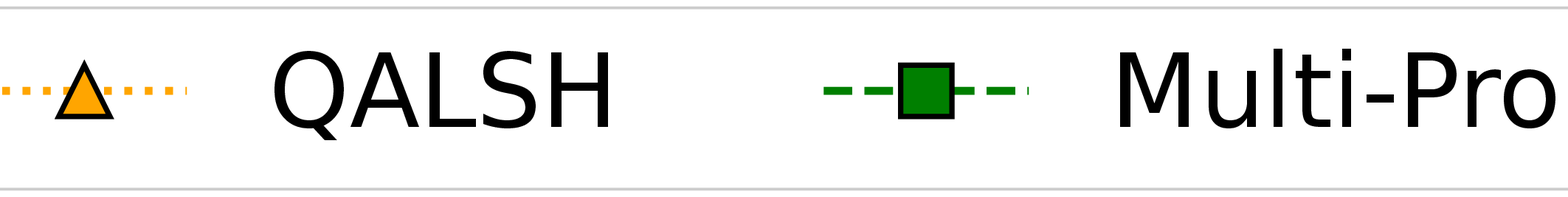}
	}
	\subfigure[Time on Cifar]{
		\begin{minipage}[c]{0.3\linewidth}
			\centering
			\includegraphics[width=1\textwidth]{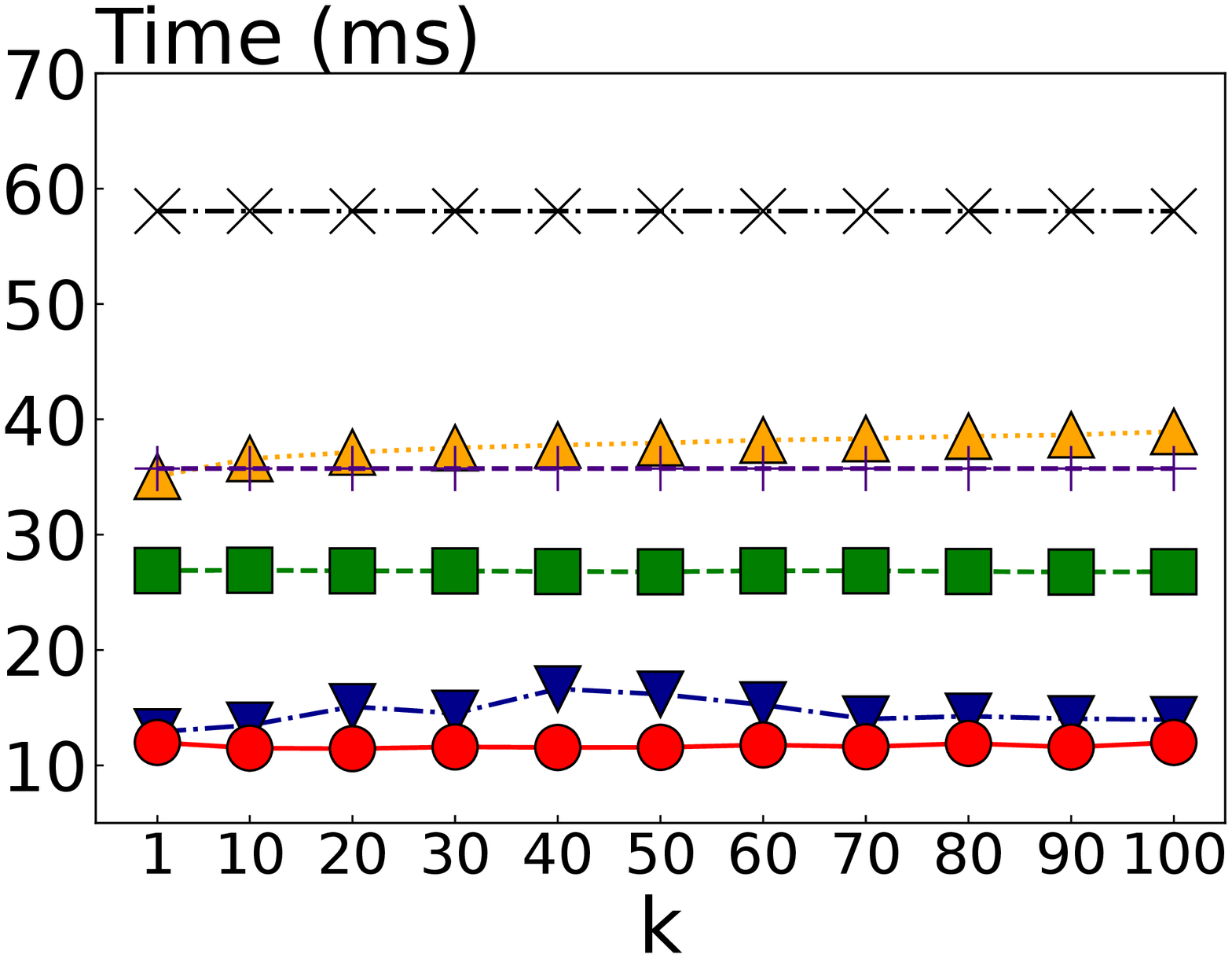}
		\end{minipage}
	}
	\subfigure[Recall on Cifar]{
		\begin{minipage}[c]{0.3\linewidth}
			\centering
			\includegraphics[width=1\textwidth]{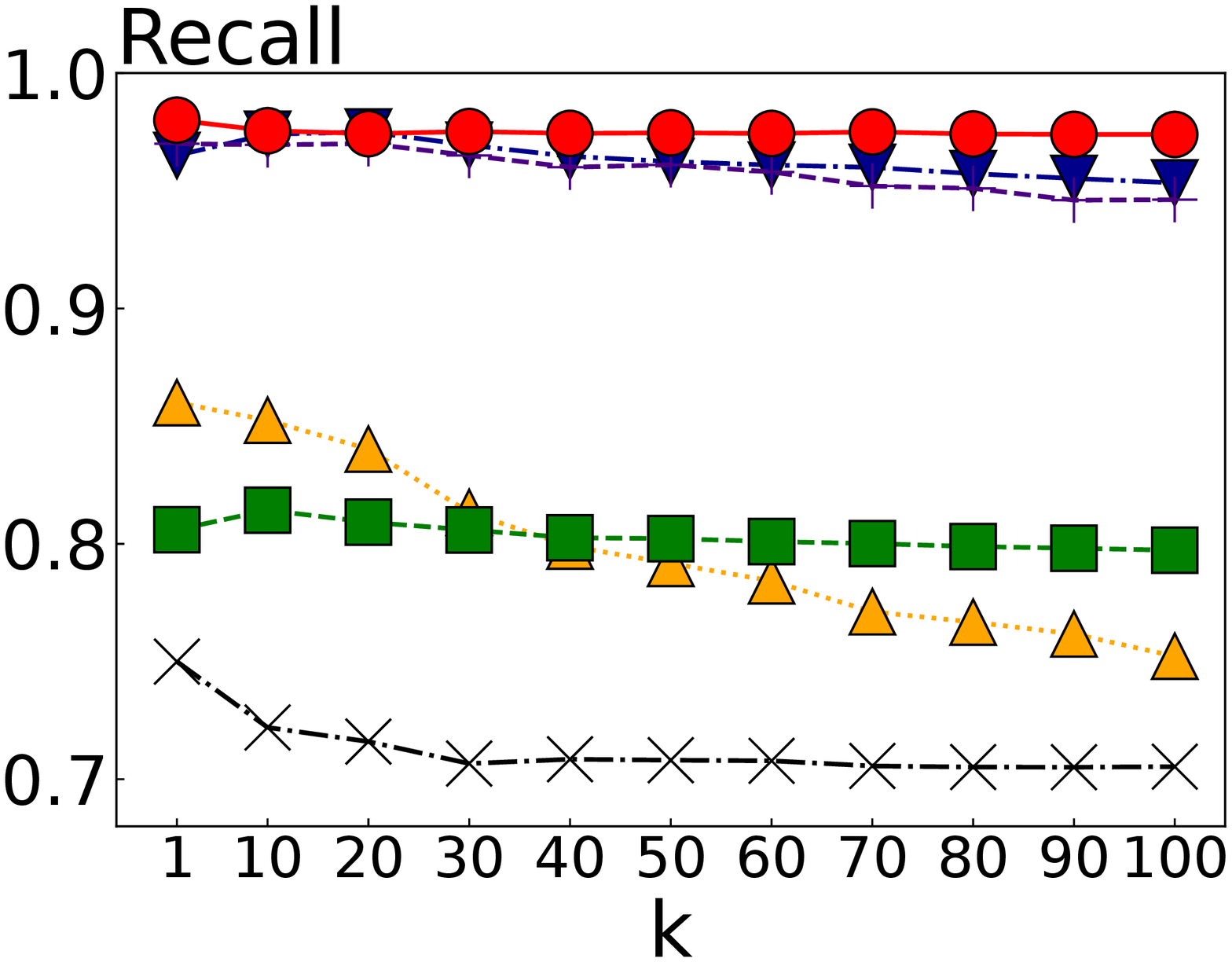}
		\end{minipage}
	}
	\subfigure[OverRatio on Cifar]{
		\begin{minipage}[c]{0.3\linewidth}
			\centering
			\includegraphics[width=1\textwidth]{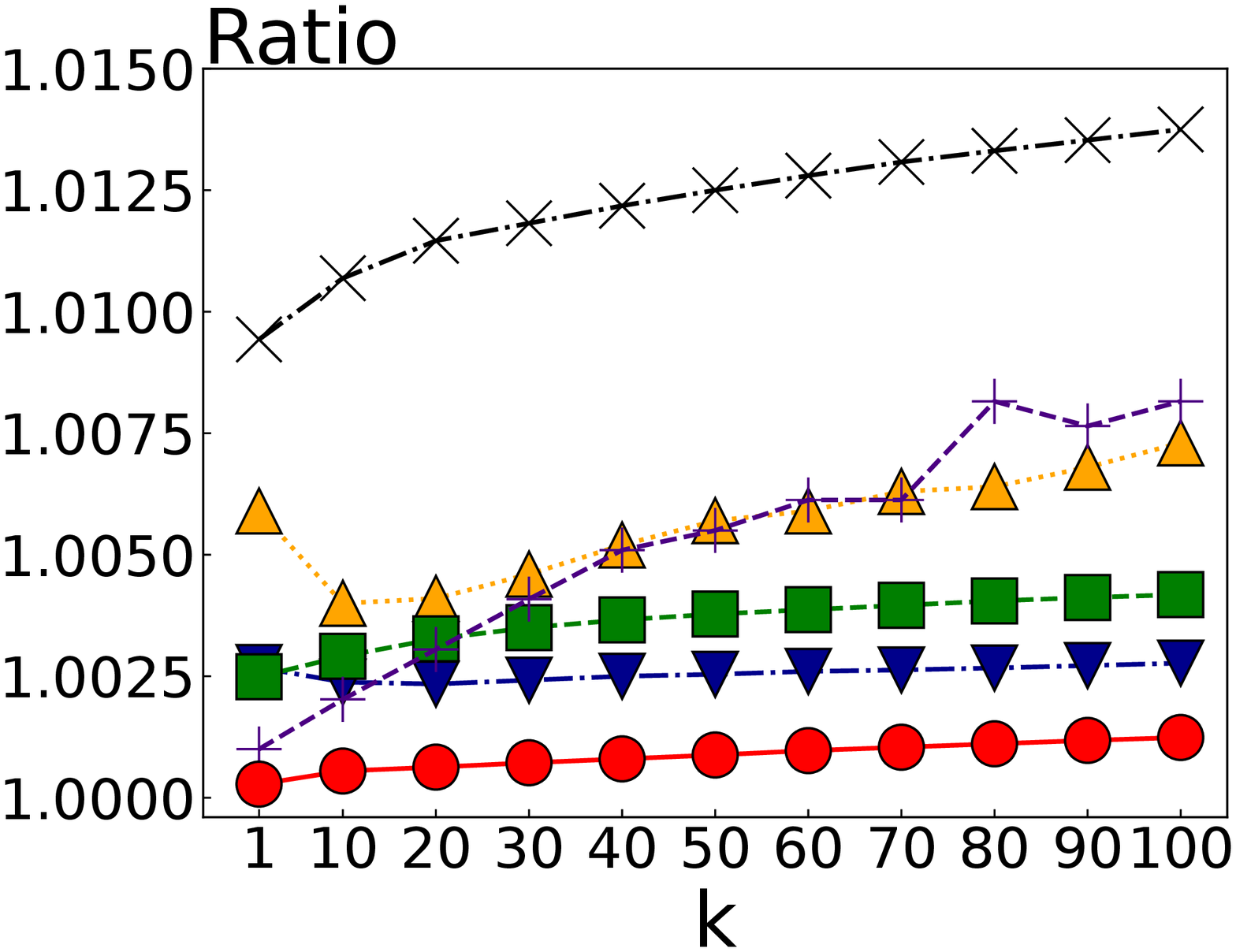}
		\end{minipage}
	}
	\caption{Performance on Cifar when Varying $k$ of NN Queries}
	\label{result_Cifar}
\end{figure*}

\begin{figure*}[htbp]
	\centering
	\subfigure[Time on Deep]{
		\begin{minipage}[c]{0.3\linewidth}
			\centering
			\includegraphics[width=1\textwidth]{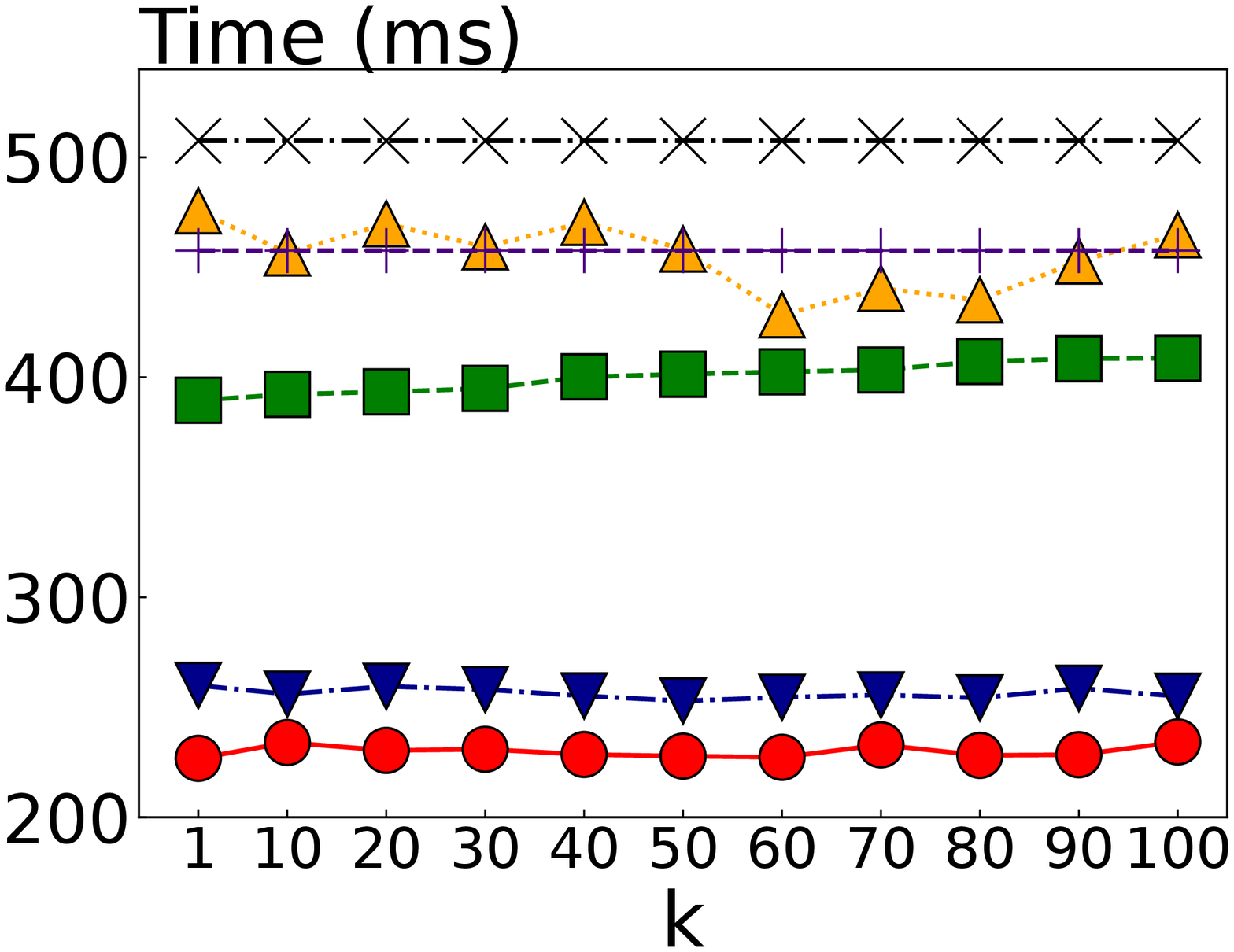}
		\end{minipage}
	}
	\subfigure[Recall on Deep]{
		\begin{minipage}[c]{0.3\linewidth}
			\centering
			\includegraphics[width=1\textwidth]{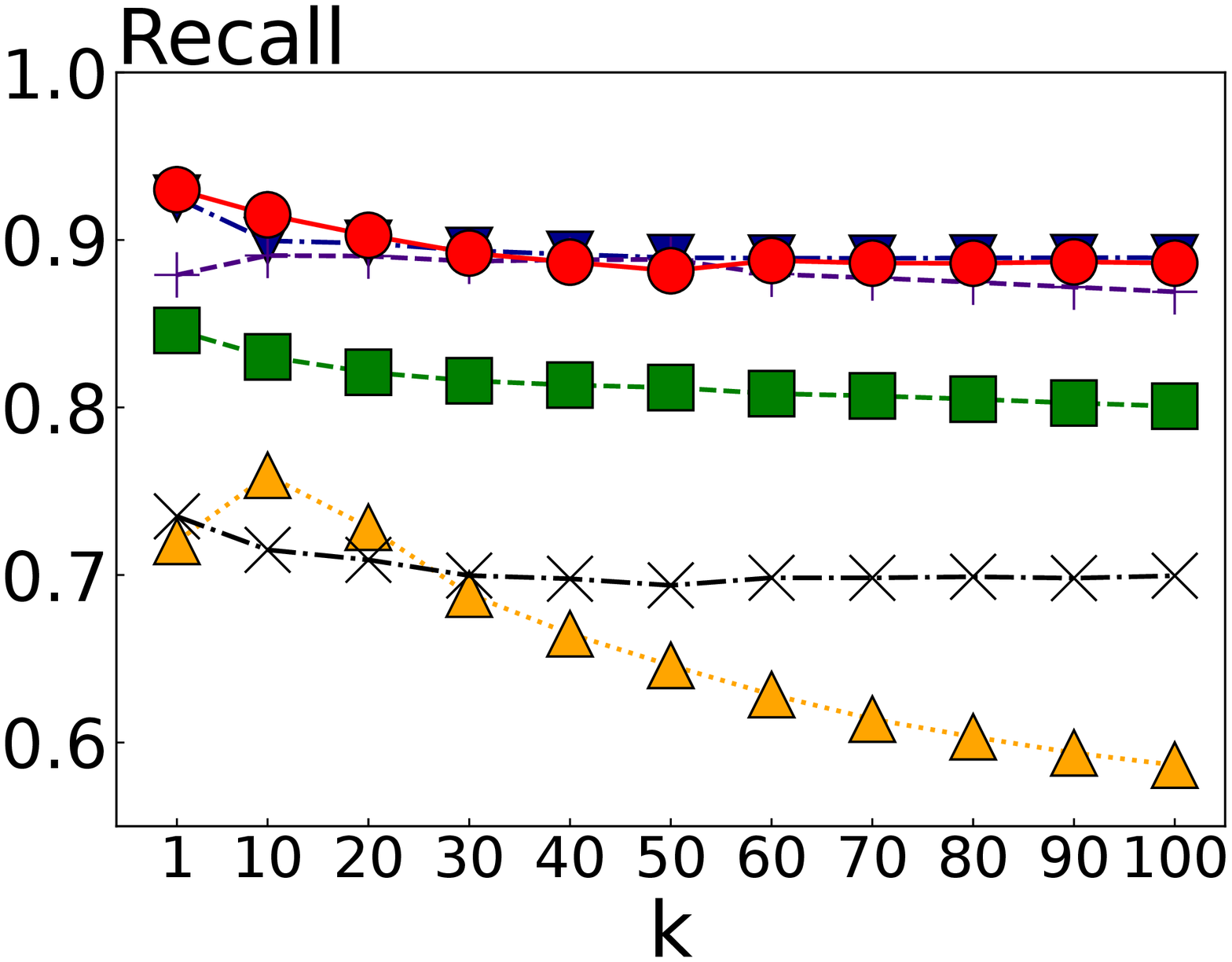}
		\end{minipage}
	}
	\subfigure[OverRatio on Deep]{
		\begin{minipage}[c]{0.3\linewidth}
			\centering
			\includegraphics[width=1\textwidth]{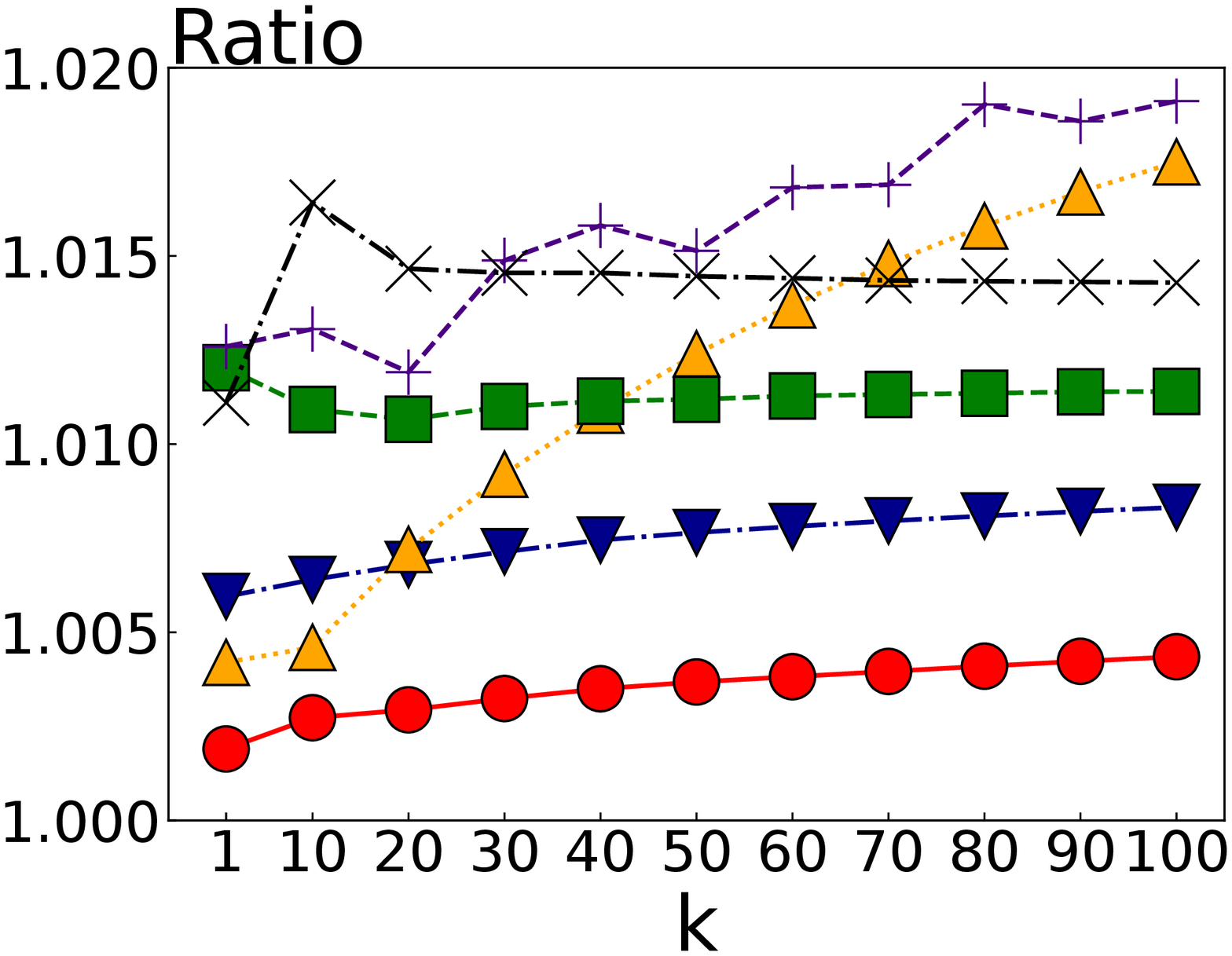}
		\end{minipage}
	}
	\caption{Performance on Deep when Varying $k$ of NN Queries}
	\label{result_Deep}
\end{figure*}

\begin{figure*}[htbp]
	\centering
	\subfigure[Time on Trevi]{
		\begin{minipage}[c]{0.3\linewidth}
			\centering
			\includegraphics[width=1\textwidth]{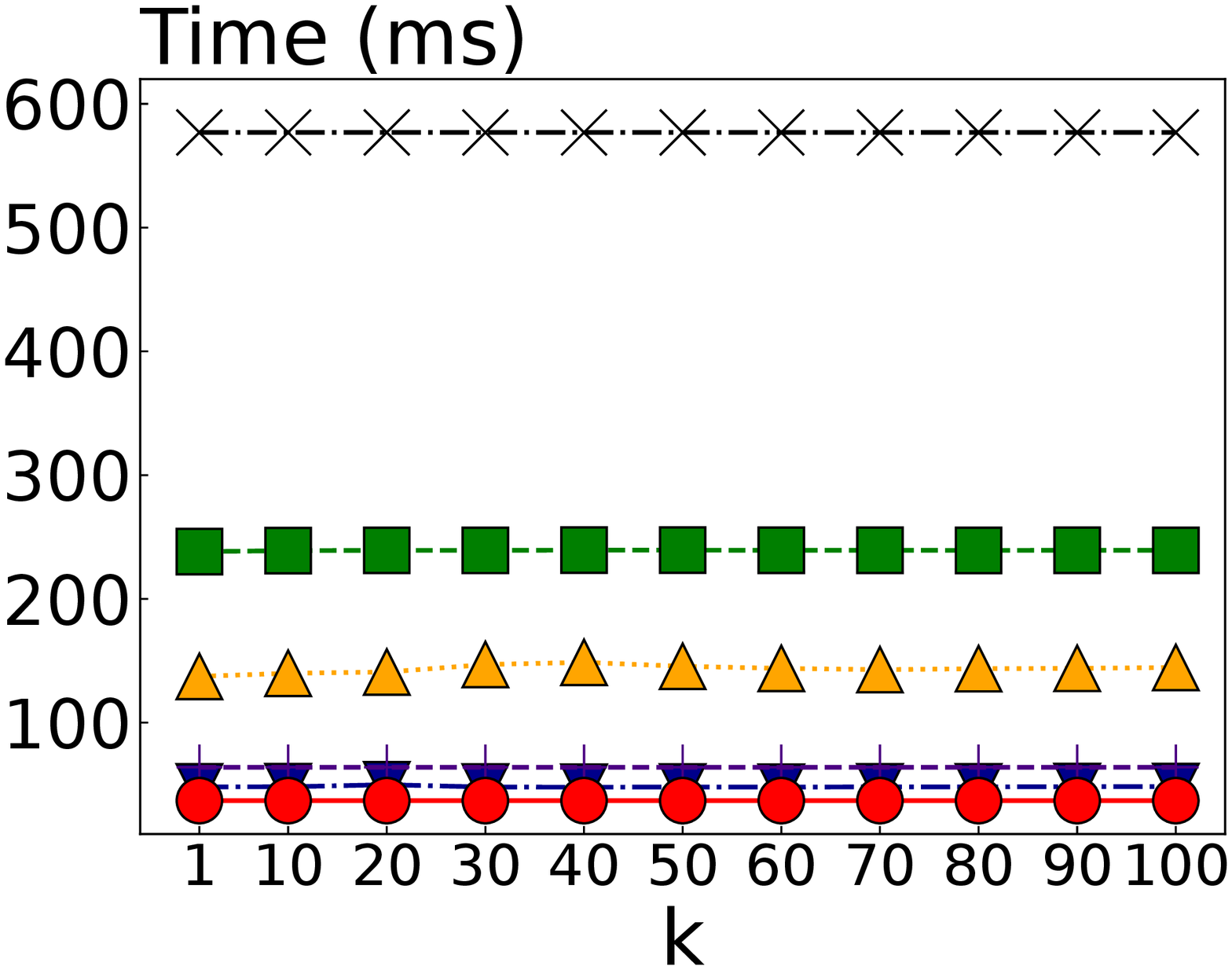}
		\end{minipage}
	}
	\subfigure[Recall on Trevi]{
		\begin{minipage}[c]{0.3\linewidth}
			\centering
			\includegraphics[width=1\textwidth]{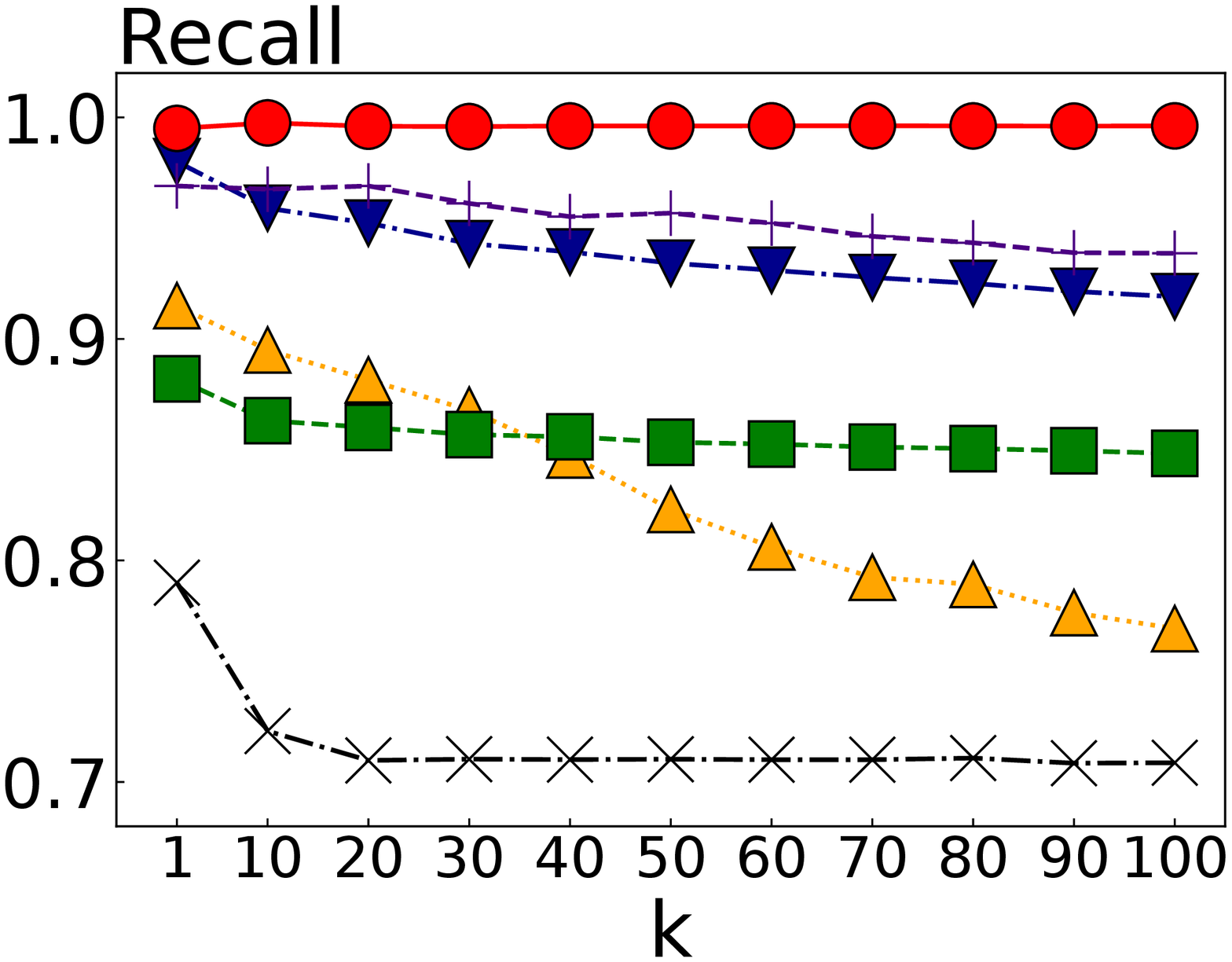}
		\end{minipage}
	}
	\subfigure[OverRatio on Trevi]{
		\begin{minipage}[c]{0.3\linewidth}
			\centering
			\includegraphics[width=1\textwidth]{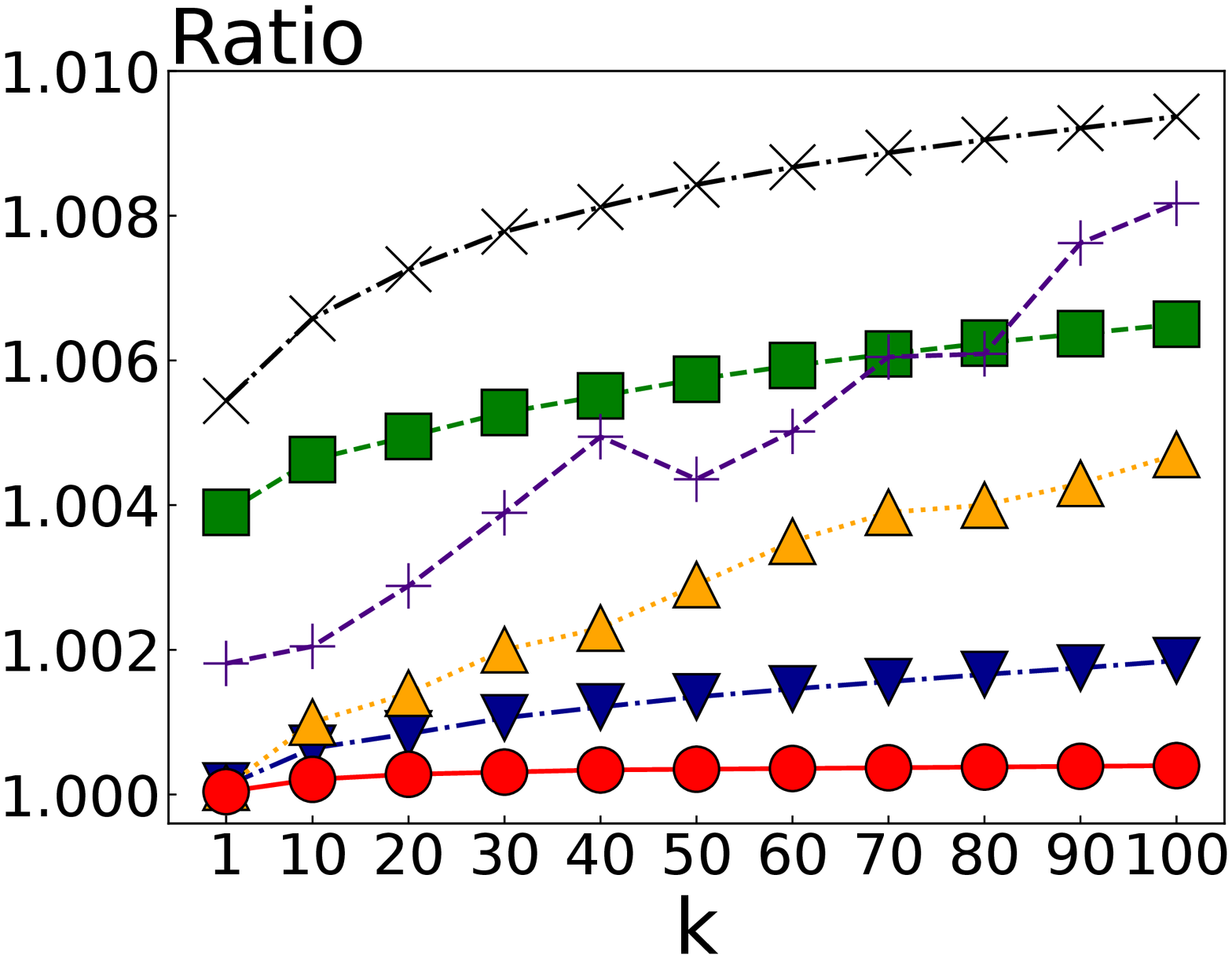}
		\end{minipage}
	}
	\caption{Performance on Trevi when Varying $k$ of NN Queries}
	\label{result_trevi}
\end{figure*}

\begin{figure*}[htbp]
	\centering
	\subfigure[Recall-Time on Cifar]{
		\begin{minipage}[c]{0.3\linewidth}
			\centering
			\includegraphics[width=1\textwidth]{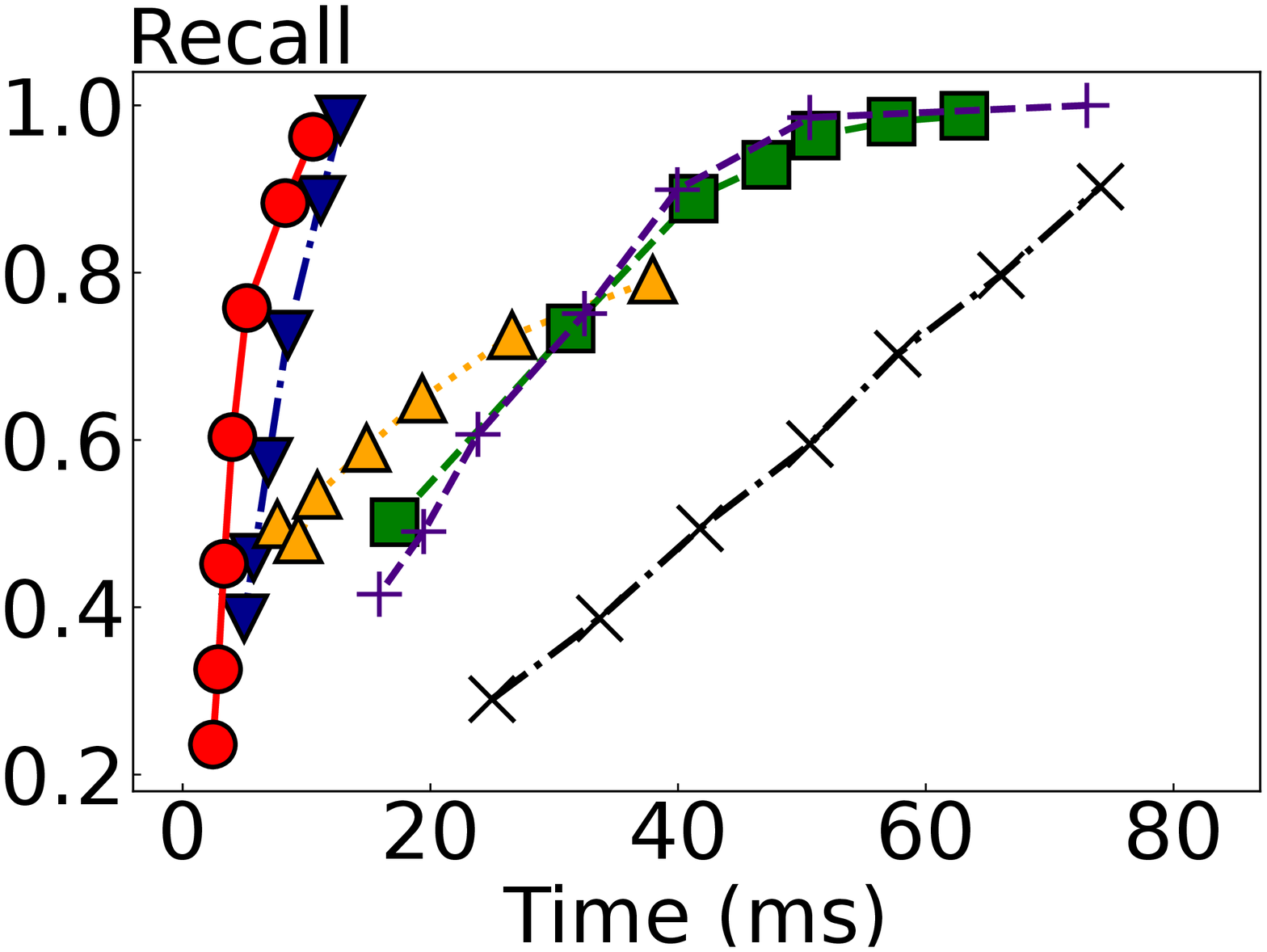}
		\end{minipage}
	}
	\subfigure[Recall-Time on Trevi]{
		\begin{minipage}[c]{0.3\linewidth}
			\centering
			\includegraphics[width=1\textwidth]{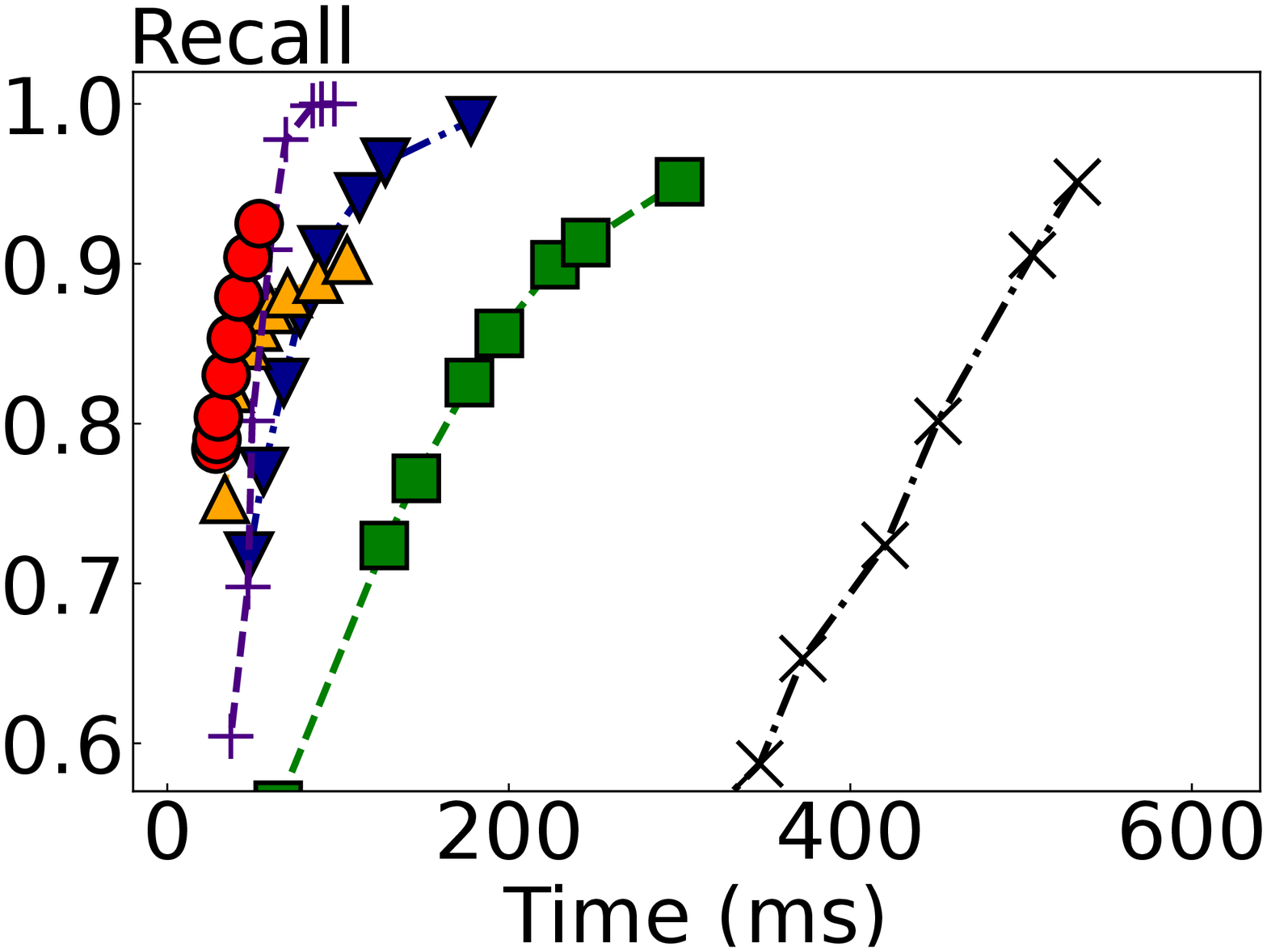}
		\end{minipage}
	}
	\subfigure[Recall-Time on Deep]{
		\begin{minipage}[c]{0.3\linewidth}
			\centering
			\includegraphics[width=1\textwidth]{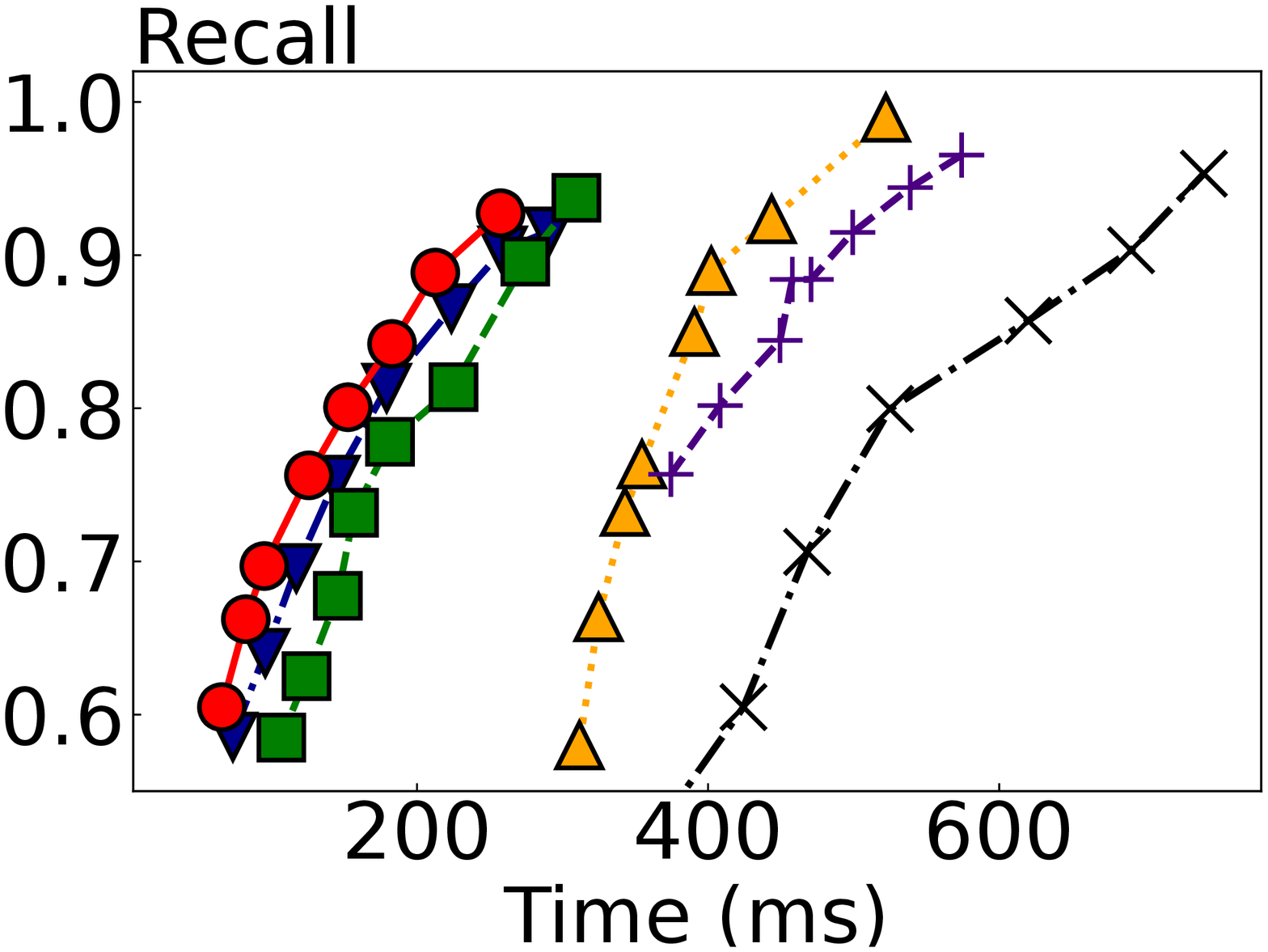}
		\end{minipage}
	}
	\caption{Recall-Time Curve for NN Queries}
	\label{result_recall_time}
\end{figure*}

\begin{figure*}[htbp]
	\centering
	\subfigure[Ratio-Time on Cifar]{
		\begin{minipage}[c]{0.3\linewidth}
			\centering
			\includegraphics[width=1\textwidth]{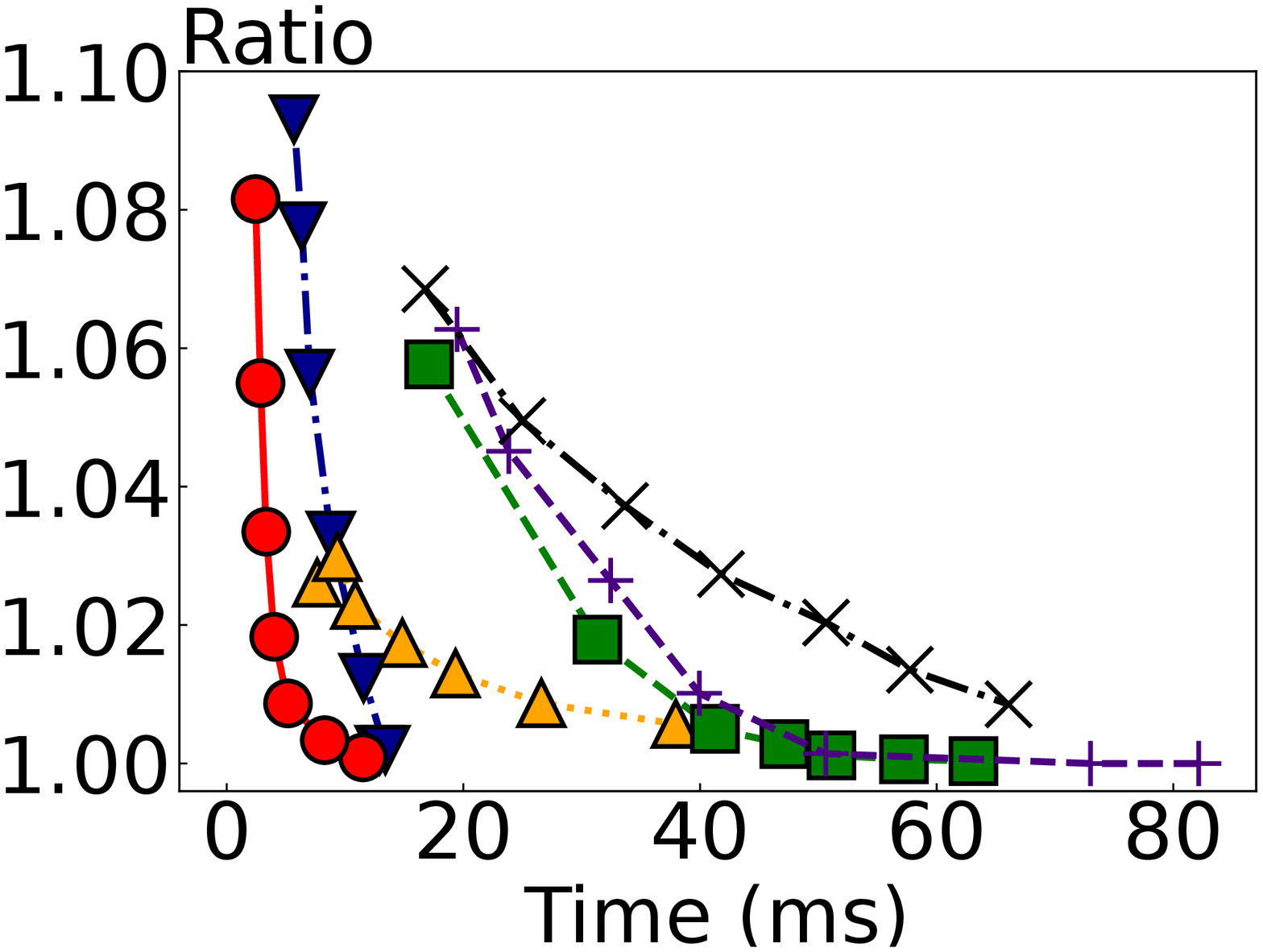}
		\end{minipage}
	}
	\subfigure[Ratio-Time on Trevi]{
		\begin{minipage}[c]{0.3\linewidth}
			\centering
			\includegraphics[width=1\textwidth]{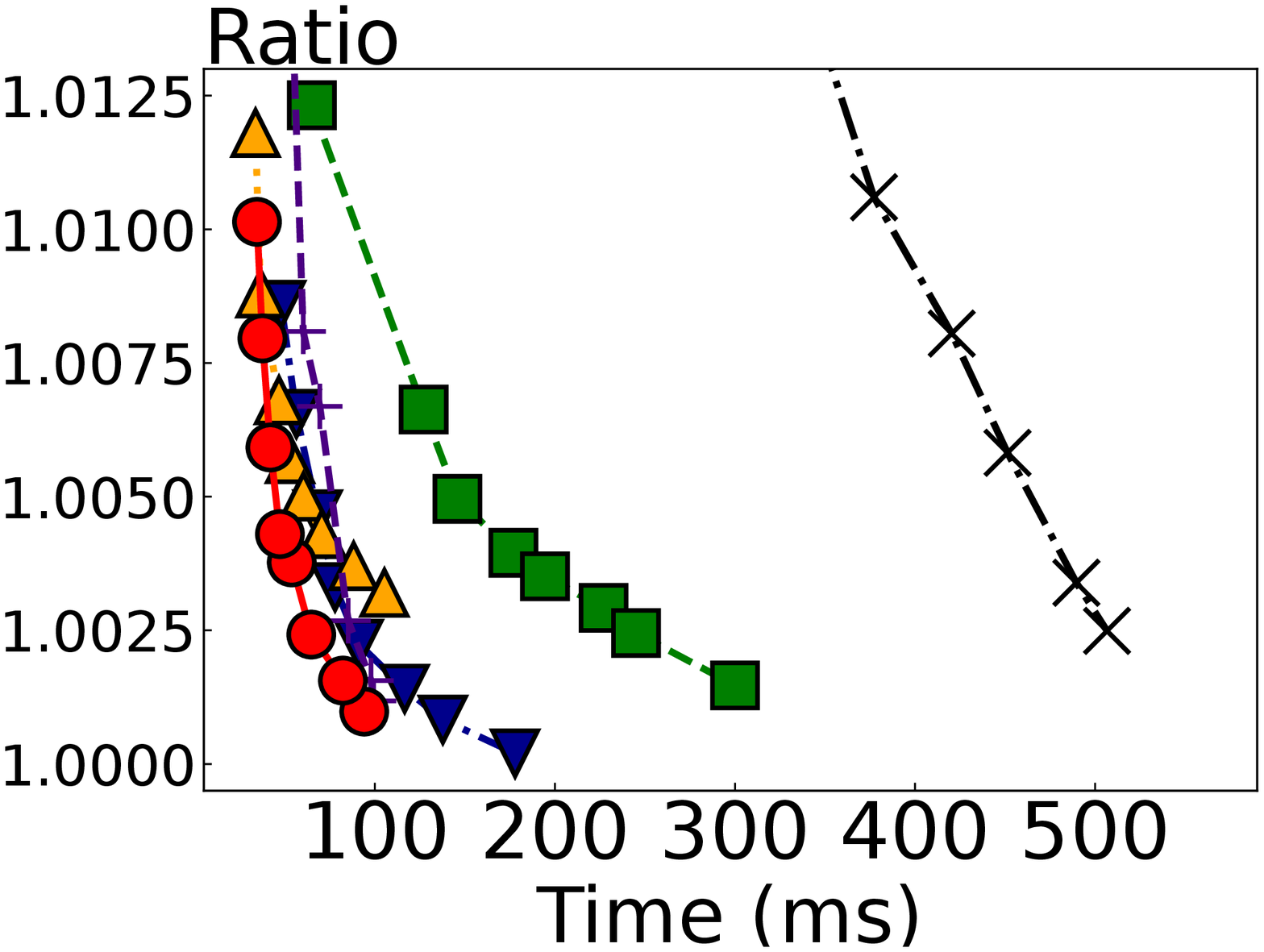}
		\end{minipage}
	}
	\subfigure[Ratio-Time on Deep]{
		\begin{minipage}[c]{0.3\linewidth}
			\centering
			\includegraphics[width=1\textwidth]{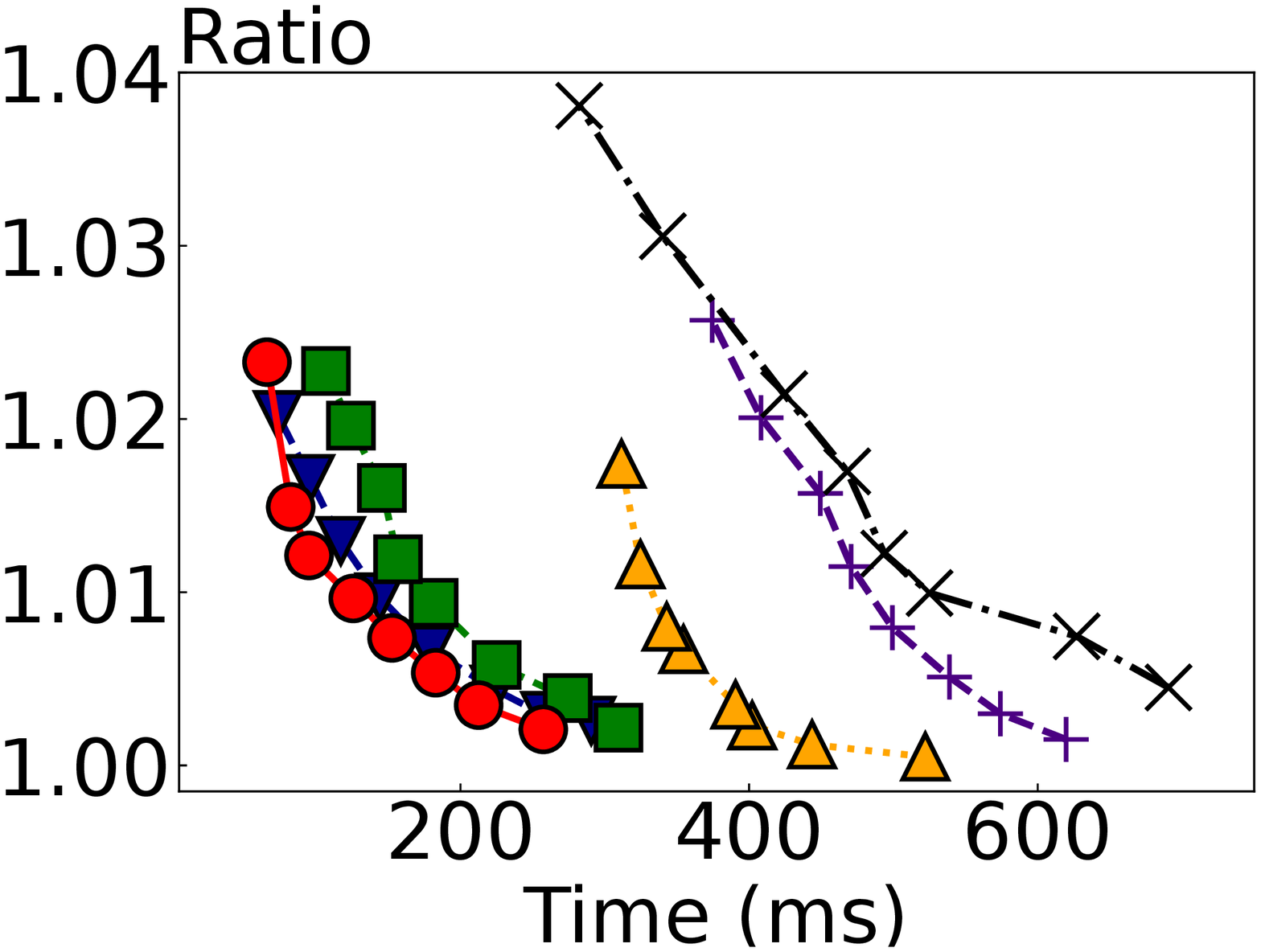}
		\end{minipage}
	}
	\caption{Ratio-Time Curve for NN Queries}
	\label{result_ratio_time}
\end{figure*}

\textbf{Effect of $k$}.
In this set of experiments, we study the performance when varying $k$ in $\{1, 10, 20,$ $ \cdots , 100\}$. Due to the space limitation, we only report the performance on \textit{Deep}, \textit{Cifar}, and \textit{Trevi}. The results are shown in Figs. \ref{result_Cifar}--\ref{result_trevi}.
In the \textit{Cifar} and \textit{Trevi} datasets, we can see that PM-LSH achieves the best performance on all metrics. SRS is the second-best algorithm. When using the \textit{Deep} dataset, PM-LSH has the smallest query time and overall ratio, and its recall is close to that of SRS.

As $k$ increases, all algorithms achieve a higher overall ratio and a smaller recall, but the query time is relatively steady. In fact, the algorithms return the best $k$ objects from a candidate set whose size exceeds $\beta n+k$. Therefore, a larger $k$ has little effect on the query time but obviously has an adverse effect on the result quality.

When considered across different datasets with different cardinality $n$ and dimension $d$, PM-LSH exhibits a consistent high accuracy. This is because PM-LSH is unaffected by the dimensionality of the datasets and because its cost is sublinear in the cardinality of the datasets. In contrast, Multi-Probe is affected significantly by the dataset dimensionality. The hash number of QALSH is $O(n\log n)$, so its query time increases super-linearly with the dataset cardinality. Similarly, when the dataset cardinality increases, SRS incurs a higher query cost to find an NN in the projected space.

To sum up, PM-LSH has the smallest query time among all competitors. In addition, the accuracy is high.
Only SRS is able to achieve a competitive recall in some cases but exhibits longer query times than PM-LSH.

\textbf{Recall-Time and OverallRatio-Time Curves}.
In this set of experiments, we evaluate the relationship between the recall or overall ratio and the query time for $(c,k)$-ANN queries on all the datasets when varying $c$ to obtain different query times. The results are shown in Figs. \ref{result_recall_time} and \ref{result_ratio_time}. As the tradeoff between the query quality and the query time is the key tradeoff, the LSH methods focus on returning relatively good results with much smaller query time than the exact NN algorithms. 
The results show that all algorithms return more accurate results when more query time is used. 
They also show that PM-LSH achieves superior efficiency and accuracy when compared to SRS, QALSH, and Multi-Probe. This can be explained as follows. First, PM-LSH has a better distance estimator than QALSH and Multi-Probe, so PM-LSH outperforms them with the same number of retrieved points. Second, PM-LSH needs lower time to obtain the same number of retrieved points since only one or two range queries are required. In contrast, SRS needs $T$ rounds of incremental NN search.

\subsection{Evaluation of CP Query Processing}
To evaluate PM-LSH for CP query processing, we first conduct an evaluation to determine the setting of $\gamma$ and compare two \textsf{Promote} methods. Then, we compare with the competitors by varying the parameter values. Finally, we consider the changes of the overall ratio and recall under different query times.

\textbf{Determining the Setting of $\gamma$.}
In this set of experiments, we study the effects of the node capacity $M$ and the dataset cardinality on choosing $\gamma$ in datasets \textit{Audio}, \textit{Trevi}, and \textit{NUS}. We choose $M=16$ and \textsf{m$\_$RAD} as defaults. We randomly sample $n'=10K$ points from each dataset. After we build a PM-tree, we compute the value of $\gamma$ for each pair and use the probability density distribution function $f_{\gamma}(x)$ to study the effects.

We first consider $f_{\gamma}(x)$ when varying the value of $M$ in $\{2, 16, 64\}$. As shown in Fig. \ref{fig:rca_m}, the tendency of $f_{\gamma}(x)$ remains nearly unchanged when varying $M$. However, the peak position, the peak value, and the gradient are affected slightly by $M$. To make $\Pr(\gamma)=0.85$, the settings for $\gamma$ are different. 
Note that when $M=2$, $f_{\gamma}(x)$ has the smallest peak position, the largest peak value, and the largest gradient. This indicates that a small $M$ yields a good partitioning. However, a small $\gamma$ increases the PM-tree size and leads to additional computational costs. To achieve a good tradeoff, we set $M=16$.

Next, we study $f_{\gamma}(x)$ when varying the number of sampled points $n'$ in $\{5000, 10000, 20000\}$. As shown in Fig. \ref{fig:rca_n}, $f_{\gamma}(x)$ changes slightly when varying $n$, which enables us to determine the setting of $\gamma$ by using only a subset that preserves the information of the whole dataset. 
The cost of computing $\gamma$ equals the time needed to compute the distances of $50$ million point pairs formed by $10K$ points, which is about 0.3s when we use $m=15$ hash functions for each dataset.

\begin{figure*}[t]
	\centering
	\subfigure[$f_{\gamma}(x)$ on Audio]{
		\begin{minipage}[c]{0.3\linewidth}
			\centering
			\includegraphics[width=1\textwidth]{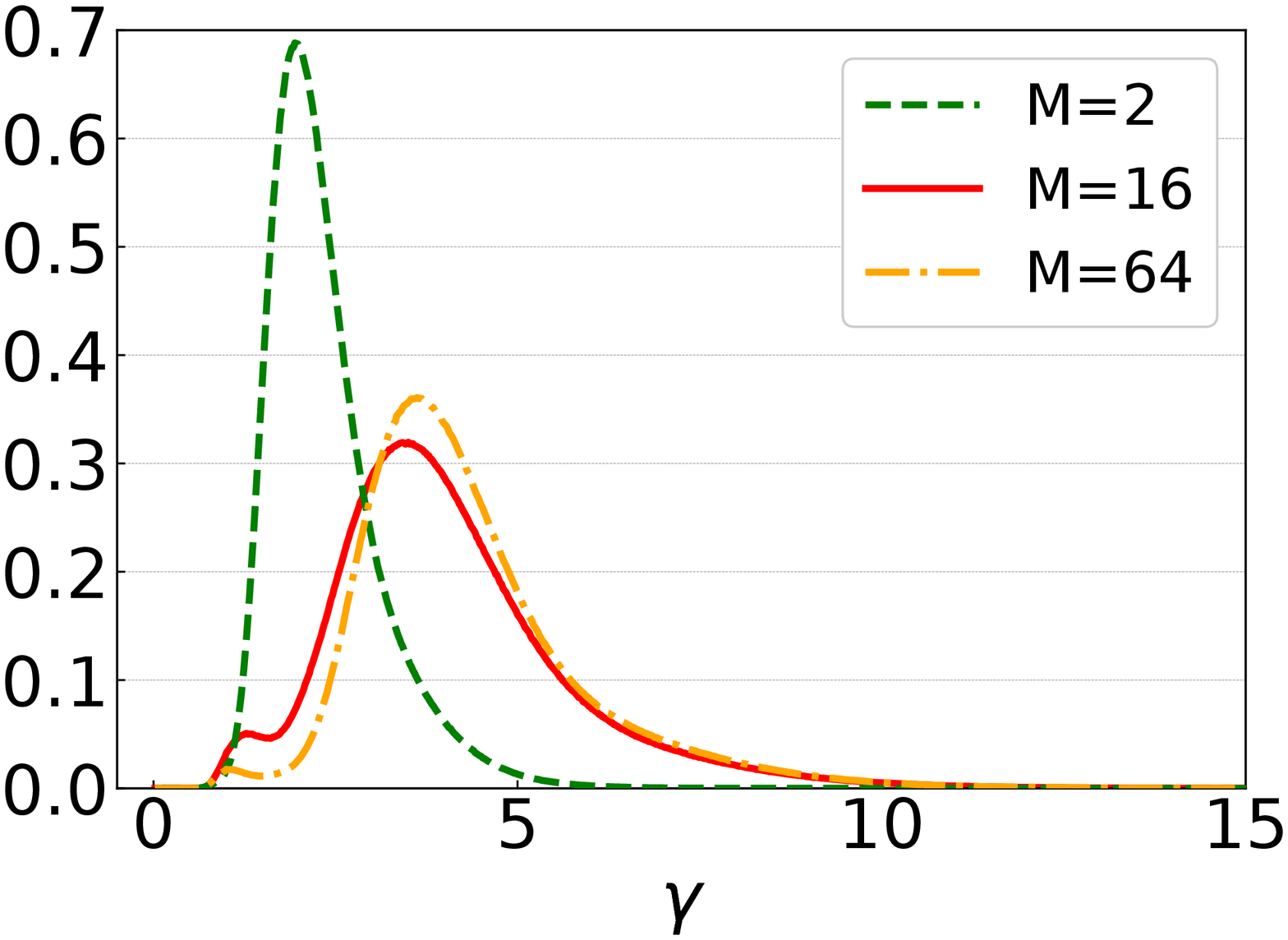}
		\end{minipage}
	}
	\subfigure[$f_{\gamma}(x)$ on Trevi]{
		\begin{minipage}[c]{0.3\linewidth}
			\centering
			\includegraphics[width=1\textwidth]{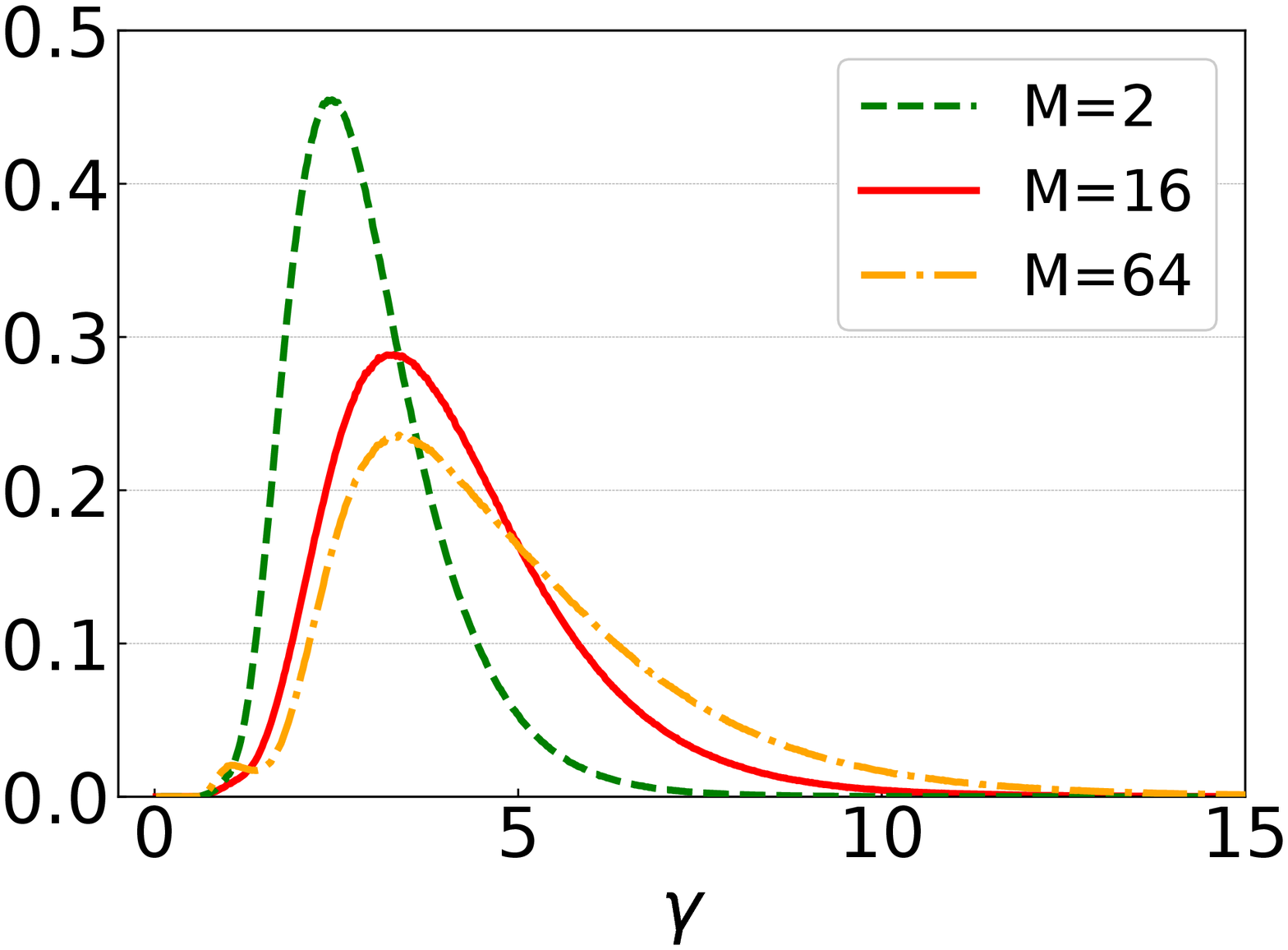}
		\end{minipage}
	}
	\subfigure[$f_{\gamma}(x)$ on NUS]{
		\begin{minipage}[c]{0.3\linewidth}
			\centering
			\includegraphics[width=1\textwidth]{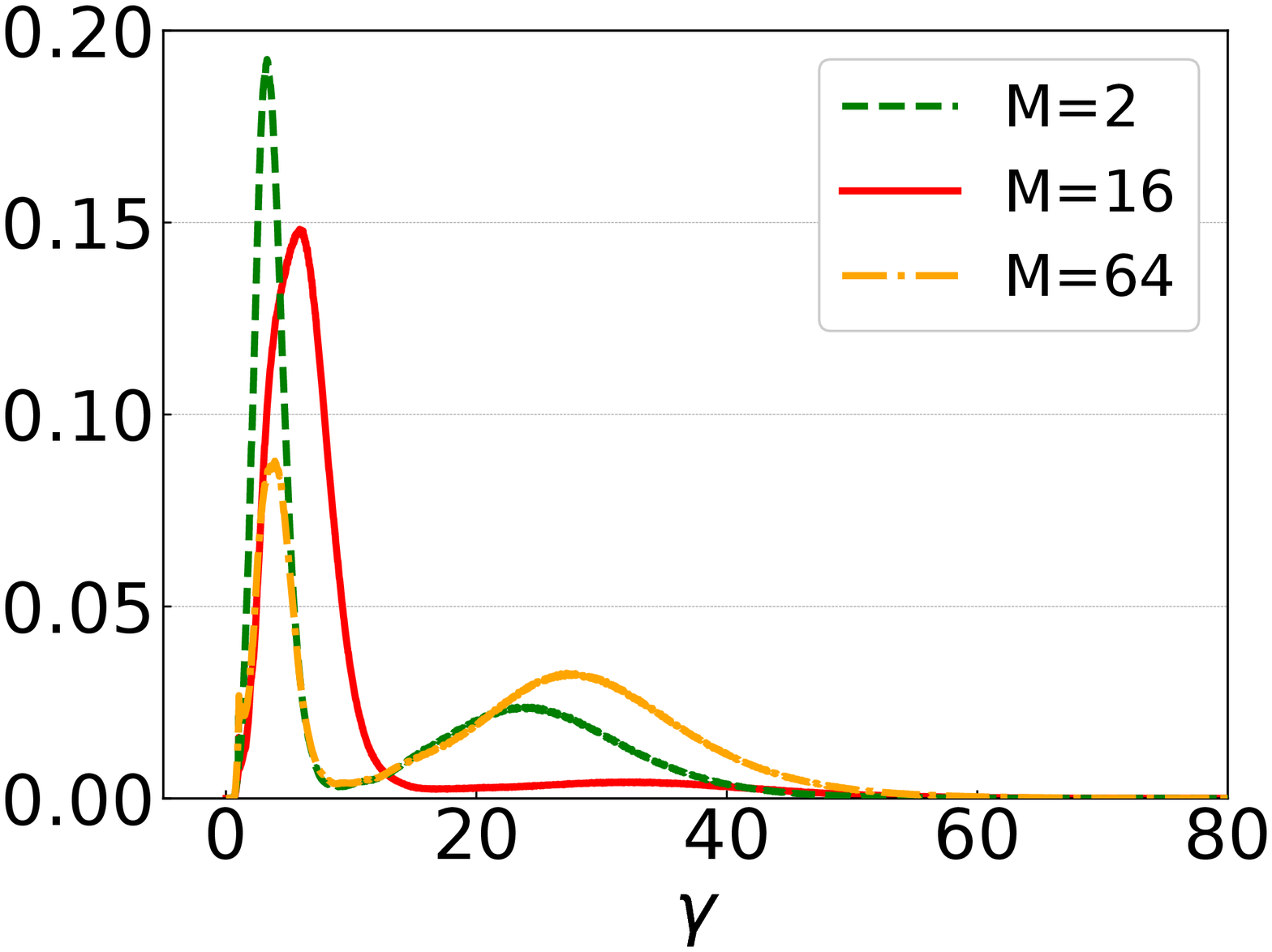}
		\end{minipage}
	}
	\centering
	\caption{Effect of $M$ for $f_{\gamma}(x)$}
	\label{fig:rca_m}
\end{figure*}

\begin{figure*}[t]
	\centering
	\subfigure[$f_{\gamma}(x)$ on Audio]{
		\begin{minipage}[c]{0.3\linewidth}
			\centering
			\includegraphics[width=1\textwidth]{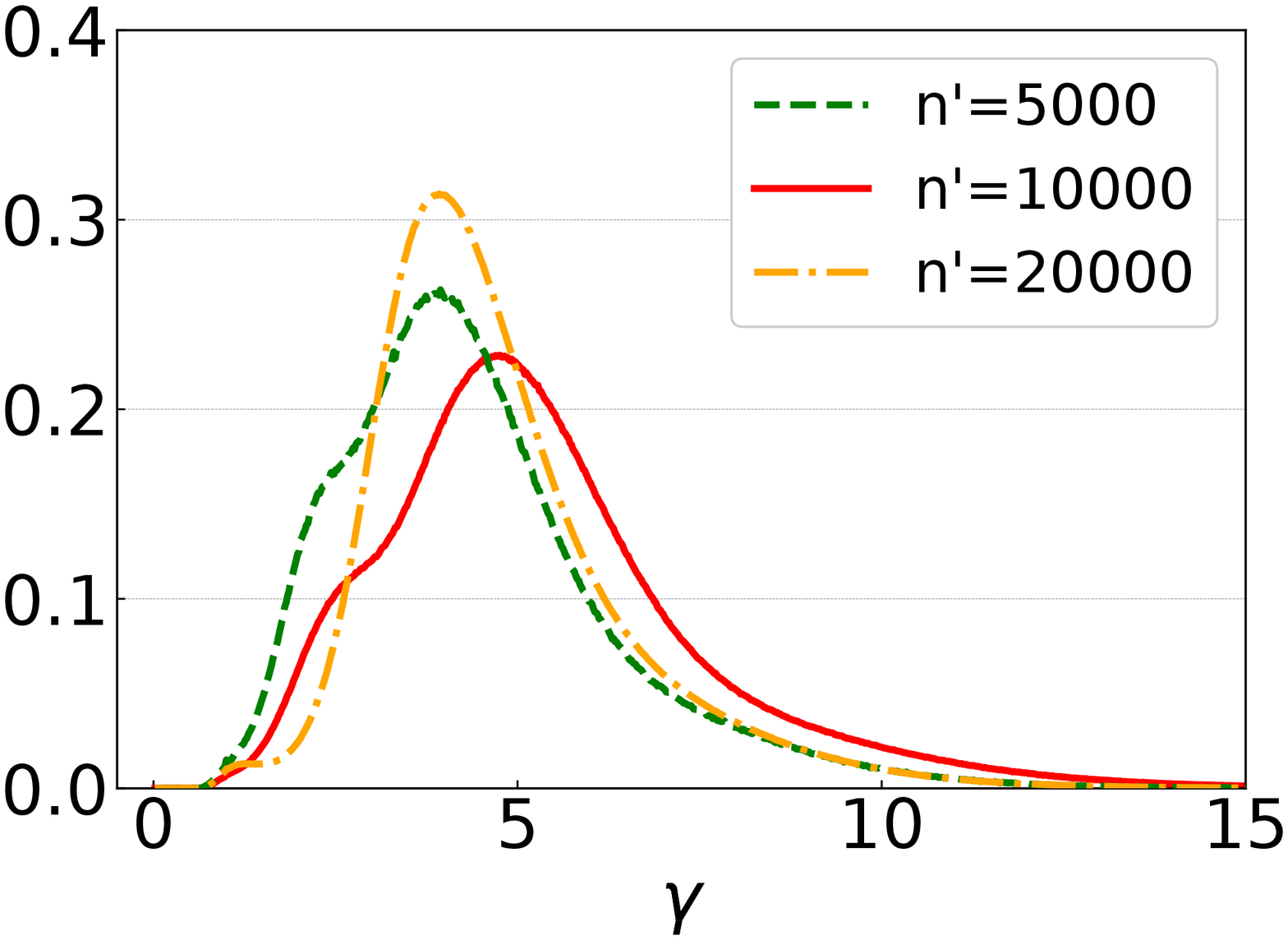}
		\end{minipage}
	}
	\subfigure[$f_{\gamma}(x)$ on Trevi]{
		\begin{minipage}[c]{0.3\linewidth}
			\centering
			\includegraphics[width=1\textwidth]{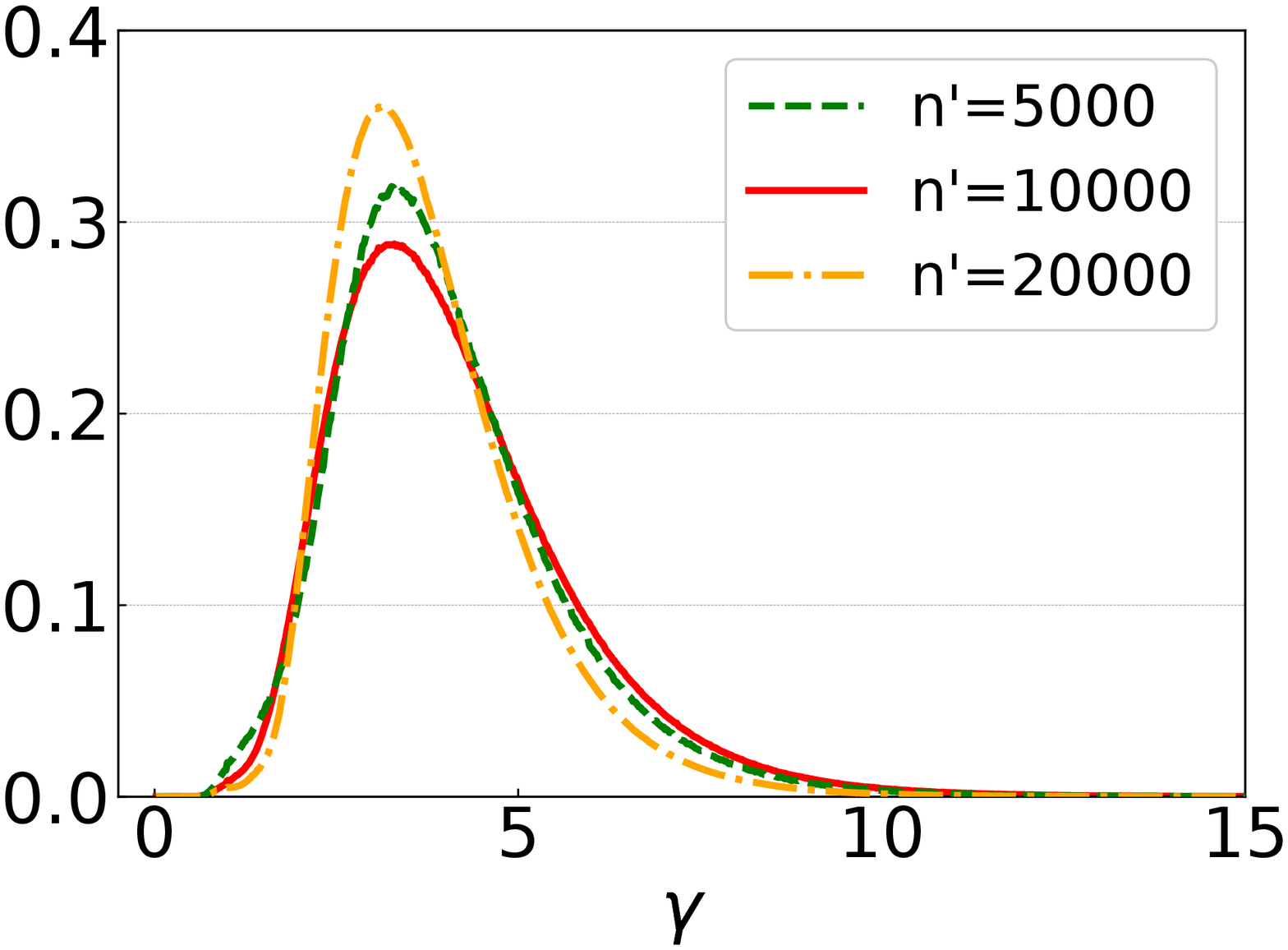}
		\end{minipage}
	}
	\subfigure[$f_{\gamma}(x)$ on NUS]{
		\begin{minipage}[c]{0.3\linewidth}
			\centering
			\includegraphics[width=1\textwidth]{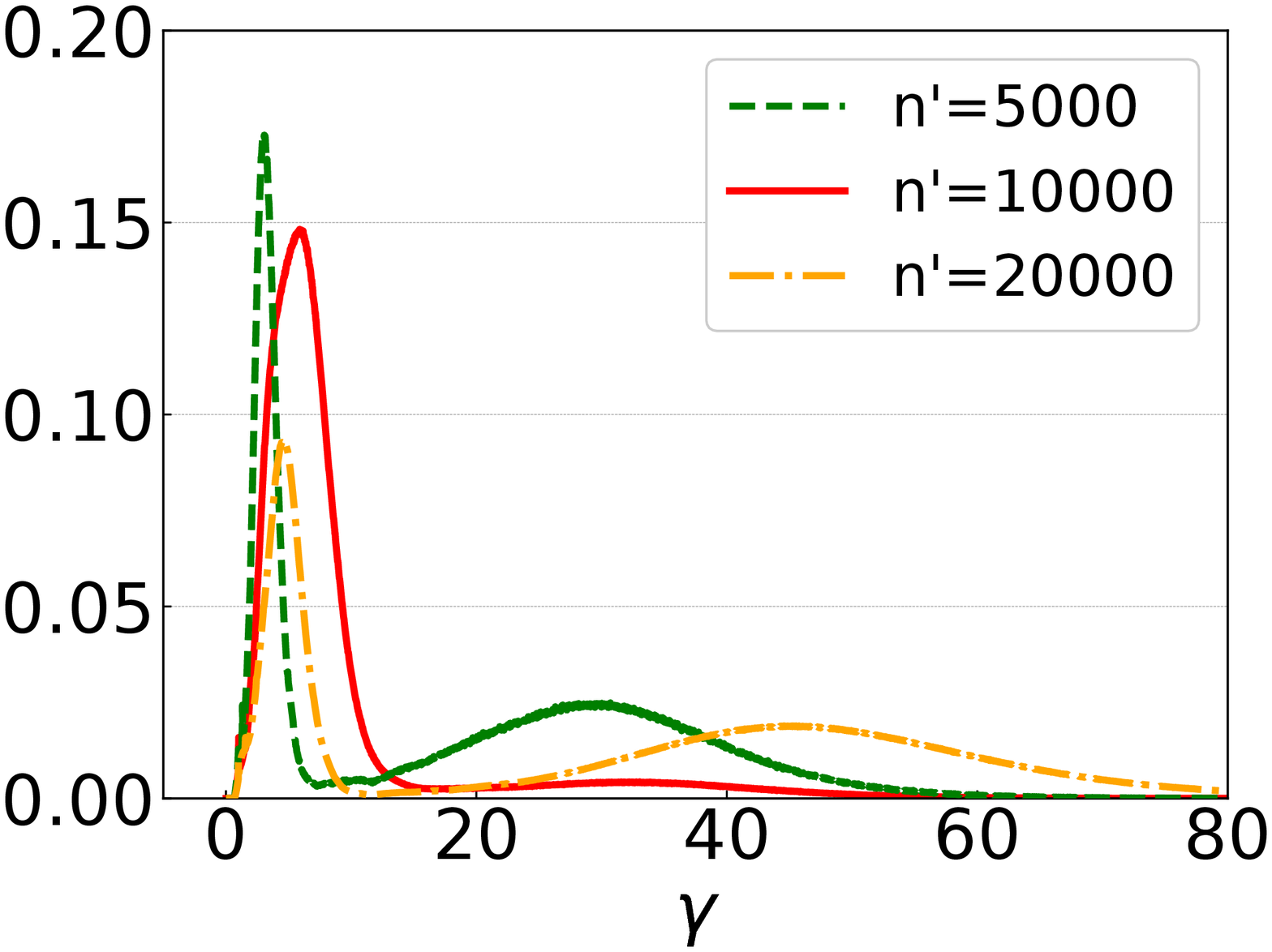}
		\end{minipage}
	}
	\caption{Effect of Dataset Cardinality for $f_{\gamma}(x)$}
	\label{fig:rca_n}
\end{figure*}

\begin{figure*}[t]
	\centering	
	\subfigure[Time (s)]{
		\begin{minipage}[c]{0.3\linewidth}
			\centering
			\includegraphics[width=1\textwidth]{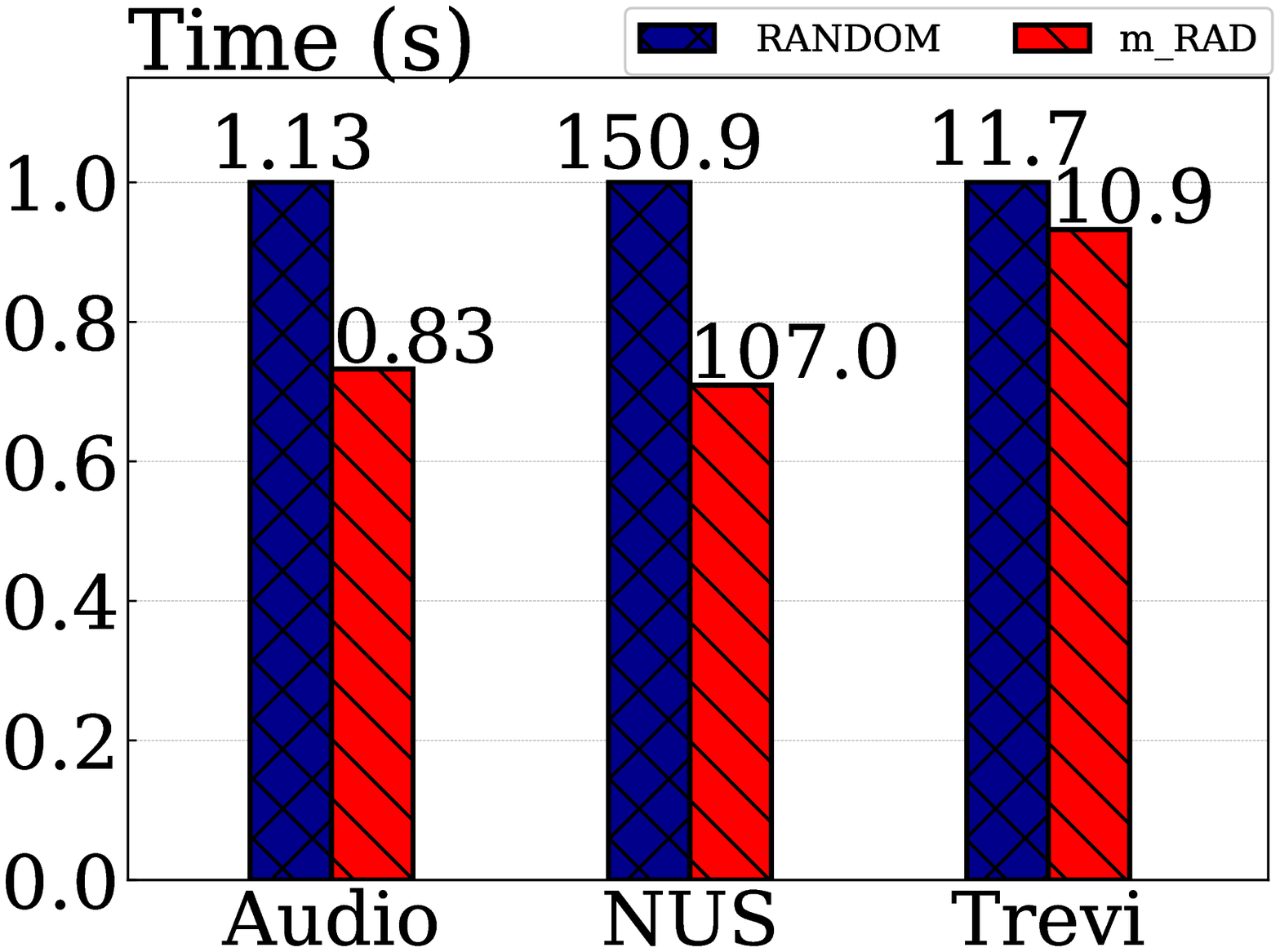}
		\end{minipage}
	}
	\subfigure[Recall]{
		\begin{minipage}[c]{0.3\linewidth}
			\centering
			\includegraphics[width=1\textwidth]{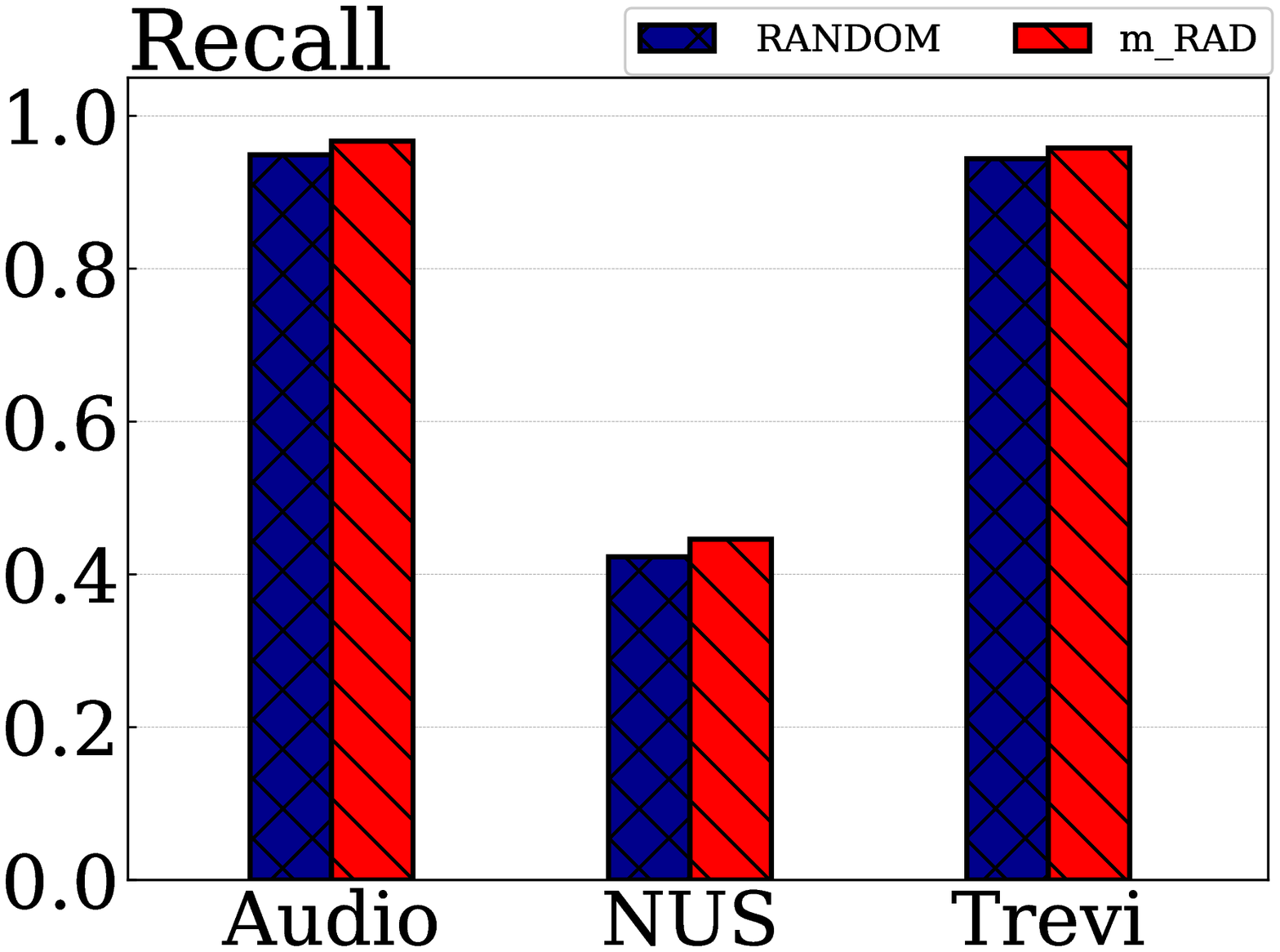}
		\end{minipage}
	}
	\subfigure[OverallRatio]{
		\begin{minipage}[c]{0.3\linewidth}
			\centering
			\includegraphics[width=1\textwidth]{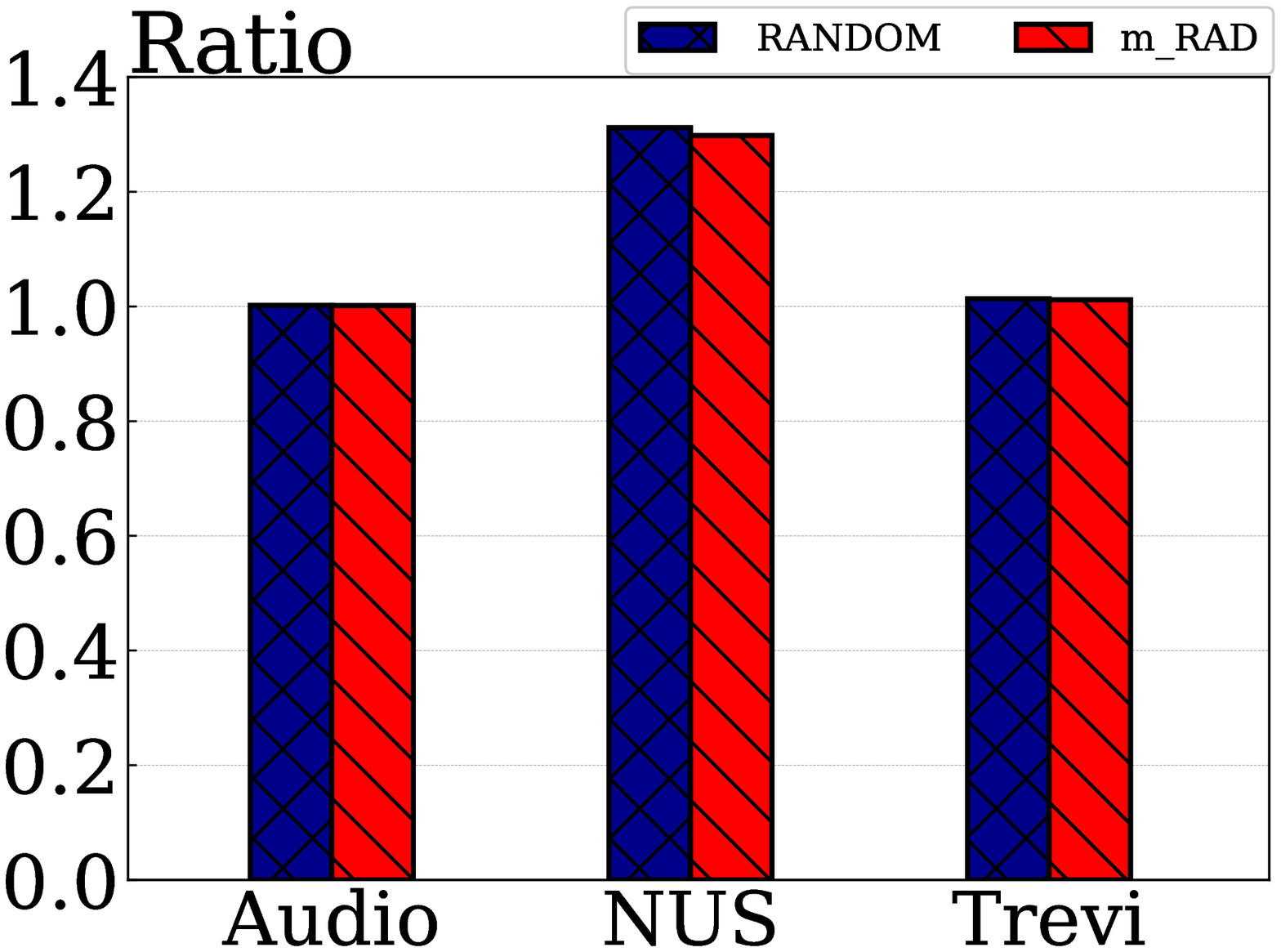}
		\end{minipage}
	}
	\caption{Effect of \textsf{Promote} methods}
	\label{fig:promote}
\end{figure*}

\textbf{Effect of \textsf{Promote} methods.}
We compare the performance of the two \textsf{Promote} methods, \textsf{m$\_$RAD} and \textsf{RANDOM}. Fig. \ref{fig:promote} shows that the recall and overall ratio are very similar for the two methods, but that the query time when using \textsf{m$\_$RAD} is smaller than that achieved when using \textsf{RANDOM}. This can be explained by the fact that the PM-tree constructed with \textsf{m$\_$RAD} has a better structure, meaning that fewer candidate pairs need to be verified to achieve a high recall. So we choose \textsf{m$\_$RAD} as the default \textsf{Promote} method. On the other hand, Table \ref{tb:indexTime} shows that the construction of the PM-tree with \textsf{m$\_$RAD} takes more time than with \textsf{RANDOM}, while still being acceptable.

\begin{table}
	\centering
	\caption{Construction Time of \textsf{m$\_$RAD} and \textsf{RANDOM}}
	\label{tb:indexTime}
	\begin{tabular}{|c|c| m{1.5cm}<{\centering}|}
		\cline{1-3} 
		\multirow{2}*{Dataset}  &\multicolumn{2}{c|}{Construction Time (s)} \\		
		\cline{2-3} 
		&\textsf{RANDOM}&\textsf{m$\_$RAD} \\		\cline{1-3}
		Audio&0.82&28.75 \\
		\cline{1-3}
		NUS&2.84&116.81  \\	
		\cline{1-3}
		Trevi&1.06&45.09  \\
		\cline{1-3}
	\end{tabular}
\end{table}

\begin{table*}[t]
	\centering
	\caption{Performance Overview of CP Queries}
	\label{tb:joinoverview}
	\begin{tabular}{|c|c| m{1.5cm}<{\centering} |m{1.5cm}<{\centering}|m{1.5cm}<{\centering}|m{1.7cm}<{\centering}||m{1.5cm}<{\centering}|}
		\cline{1-7}
		\multicolumn{2}{|c|}{} & \textbf{PM-LSH} & \textbf{LSB-tree}& \textbf{ACP-P} & \textbf{M$ k $CP}  & \textbf{NLJ} \\
		\cline{1-7}
		\multirow{3}*{\textbf{Audio}}&Query Time (s)&\textbf{0.83} &12.82 &384.60&756.26  &388.03 \\	
		\cline{2-7}
		\multicolumn{1}{|c|}{}&Overall Ratio&\textbf{1.002} &1.004 &1.004&1.083  &1.000 \\
		\cline{2-7}
		\multicolumn{1}{|c|}{}&Recall&\textbf{0.964} &0.911&0.930&0.288  &1.000 \\
		\cline{1-7}
		\multirow{3}*{\textbf{MNIST}}&Query Time (s)&\textbf{33.5}9 &38.80 &597.53&2946.45  &1900.42 \\
		\cline{2-7}
		\multicolumn{1}{|c|}{}&Overall Ratio&\textbf{1.004} &1.006 &1.005&1.103  &1.000 \\
		\cline{2-7}
		\multicolumn{1}{|c|}{}&Recall&\textbf{0.937} &0.911 &0.928&0.313  &1.000 \\
		\cline{1-7}
		\multirow{3}*{\textbf{NUS}}&Query Time (s)&\textbf{107.03} &179.43 &921.19&/  &23322.10 \\
		\cline{2-7}
		\multicolumn{1}{|c|}{}&Overall Ratio&\textbf{1.298} &3.904 &1.669&/  &1.000 \\
		\cline{2-7}
		\multicolumn{1}{|c|}{}&Recall&\textbf{0.446}&0.005 &0.190&/  &1.000 \\
		\cline{1-7}
		\multirow{3}*{\textbf{Trevi}}&Query Time (s)&\textbf{10.92} &66.96 &933.33&/  &28400.6 \\	
		\cline{2-7}
		\multicolumn{1}{|c|}{}&Overall Ratio&\textbf{1.014} &1.019 &1.016&/  &1.000 \\
		\cline{2-7}
		\multicolumn{1}{|c|}{}&Recall&\textbf{0.946} &0.905 & 0.918&/ &1.000 \\
		\cline{1-7}
		\multirow{3}*{\textbf{Cifar}}&Query Time (s)&\textbf{91.83} &106.18 &376.17&4140.29  &2609.30 \\	
		\cline{2-7}
		\multicolumn{1}{|c|}{}&Overall Ratio&\textbf{1.034} &1.070 &1.047&1.094  &1.000 \\
		\cline{2-7}
		\multicolumn{1}{|c|}{}&Recall&\textbf{0.721} &0.499 &0.619&0.449  &1.000 \\
		\cline{1-7}
		\multirow{3}*{\textbf{GIST}}&Query Time (s)&\textbf{81.77} &125.45 &985.02&/  &590321.43 \\
		\cline{2-7}
		\multicolumn{1}{|c|}{}&Overall Ratio&\textbf{1.101} &1.998 &1.283&/  &1.000 \\
		\cline{2-7}
		\multicolumn{1}{|c|}{}&Recall& \textbf{0.772}&0.16 &0.504&/  &1.000 \\
		\cline{1-7}
		\multirow{3}*{\textbf{Deep}}&Query Time (s)&\textbf{128.74} &132.73 &129.16&/  &174900.00 \\
		\cline{2-7}
		\multicolumn{1}{|c|}{}&Overall Ratio&\textbf{2.337} &2.420 &7.115&/ &1.000  \\
		\cline{2-7}
		\multicolumn{1}{|c|}{}&Recall&\textbf{0.445} &0.427 &0.192&/  &1.000 \\
		\hline
	\end{tabular}
\end{table*}

\textbf{Performance Overview of CP Query}.
We compare the algorithms with default settings on all datasets and report the query time (s), overall ratio, and recall in Table \ref{tb:joinoverview}. We observe that PM-LSH has the best performance for all evaluation metrics and datasets. 
To analyze what affects the query time of PM-LSH on different datasets, we notice that \textit{Cifar} takes more time than \textit{Trevi}. However, the cardinality and dimensionality of \textit{Cifar} are both smaller than those of \textit{Trevi}, indicating that the query time is not only affected by the dataset cardinality and dimensionality. Other factors, including the data distribution, also have an effect.
All algorithms exhibit a poor performance on \textit{NUS}. This can be explained by \textit{NUS} having a small RC value and a large LID value, which make it challenging to compute CP queries. 
M$ k $CP has the worst performance on all datasets. The reason is that M$ k $CP uses the M-Tree to index points directly, causing vulnerability to the curse of dimensionality. For high-dimensional datasets, the M$ k $CP query algorithm nearly degenerates to being a brute-force algorithm. In practice, operations such as computing lower bounds and maintaining priority queues incur additional costs.

\begin{figure*}[htbp]
	\centering
	\subfigure{
		\includegraphics[width=0.8\textwidth]{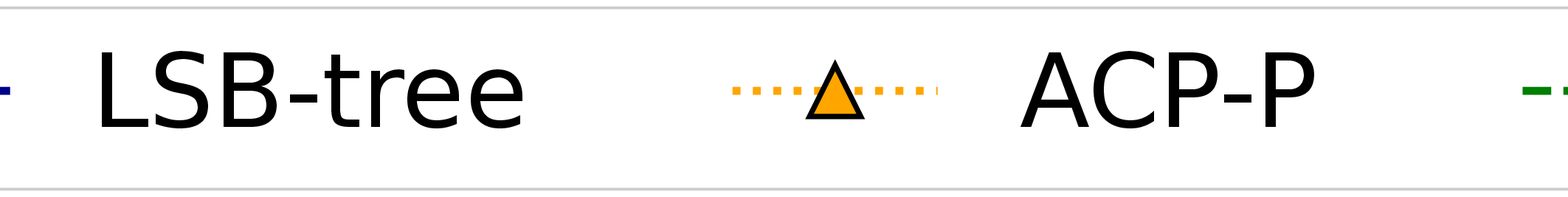}
	}

	\subfigure[Time on Audio]{
		\begin{minipage}[c]{0.3\linewidth}
			\centering
			\includegraphics[width=1\textwidth]{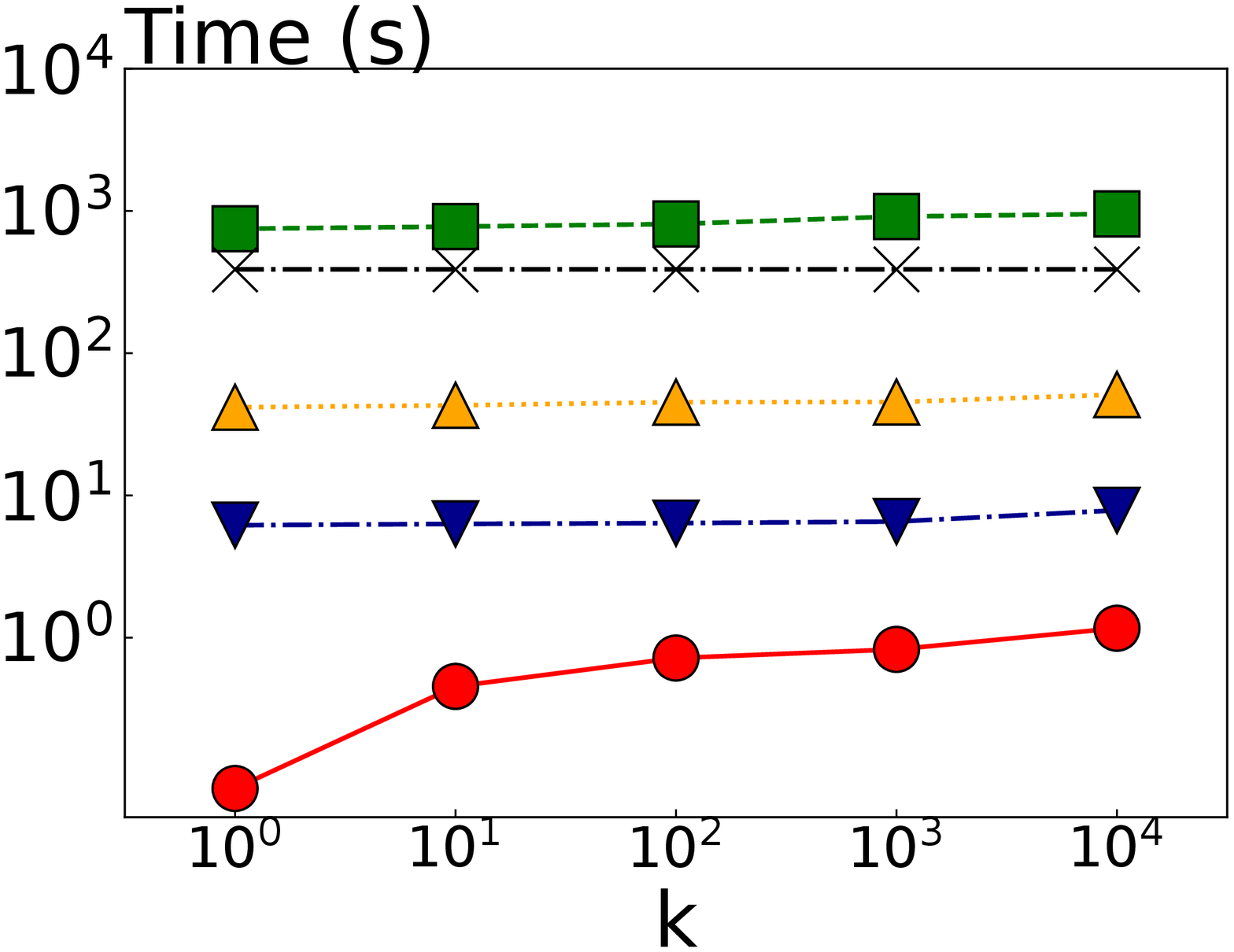}
		\end{minipage}
	}
	\subfigure[Recall on Audio]{
		\begin{minipage}[c]{0.3\linewidth}
			\centering
			\includegraphics[width=1\textwidth]{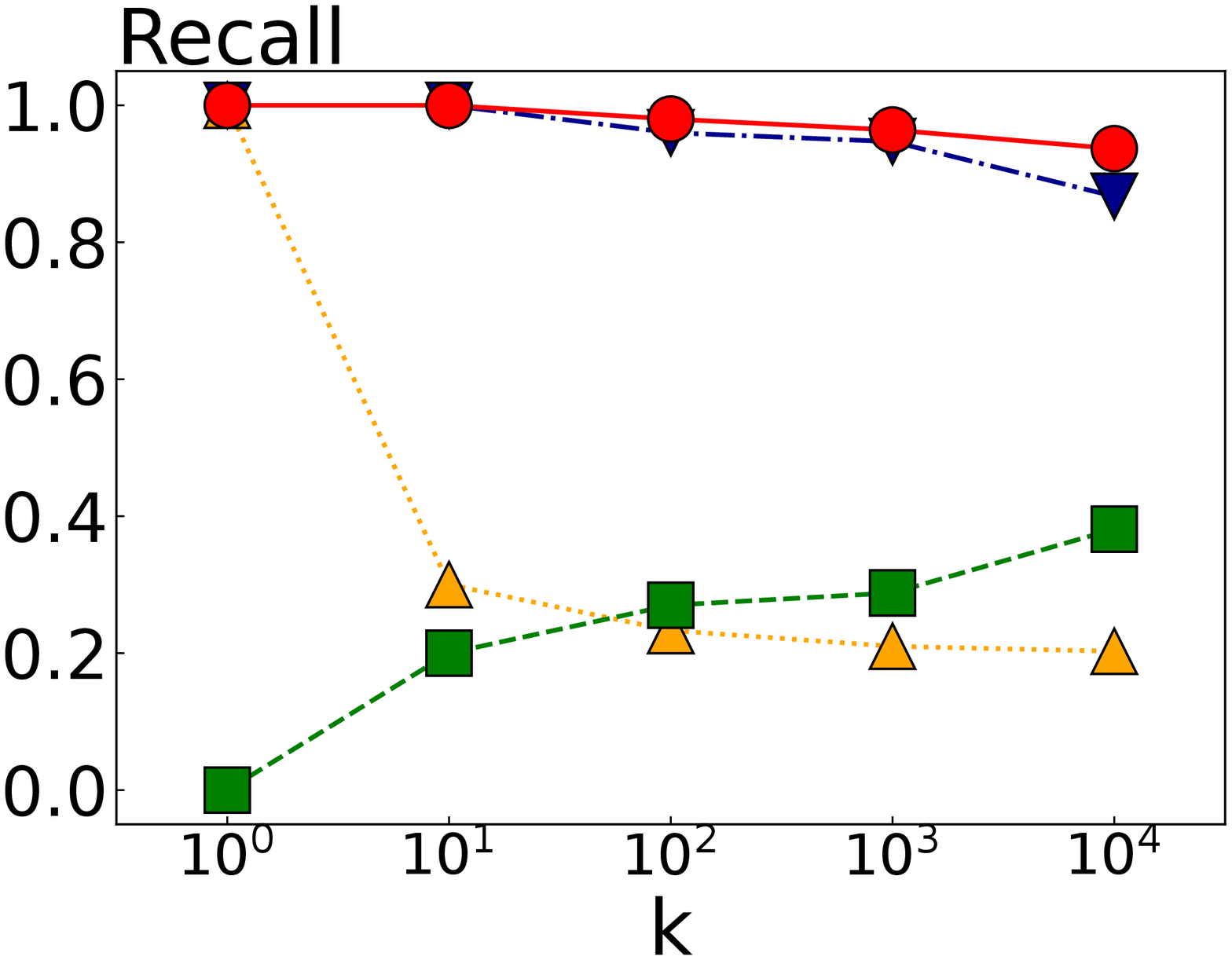}
		\end{minipage}
	}
	\subfigure[Ratio on Audio]{
		\begin{minipage}[c]{0.3\linewidth}
			\centering
			\includegraphics[width=1\textwidth]{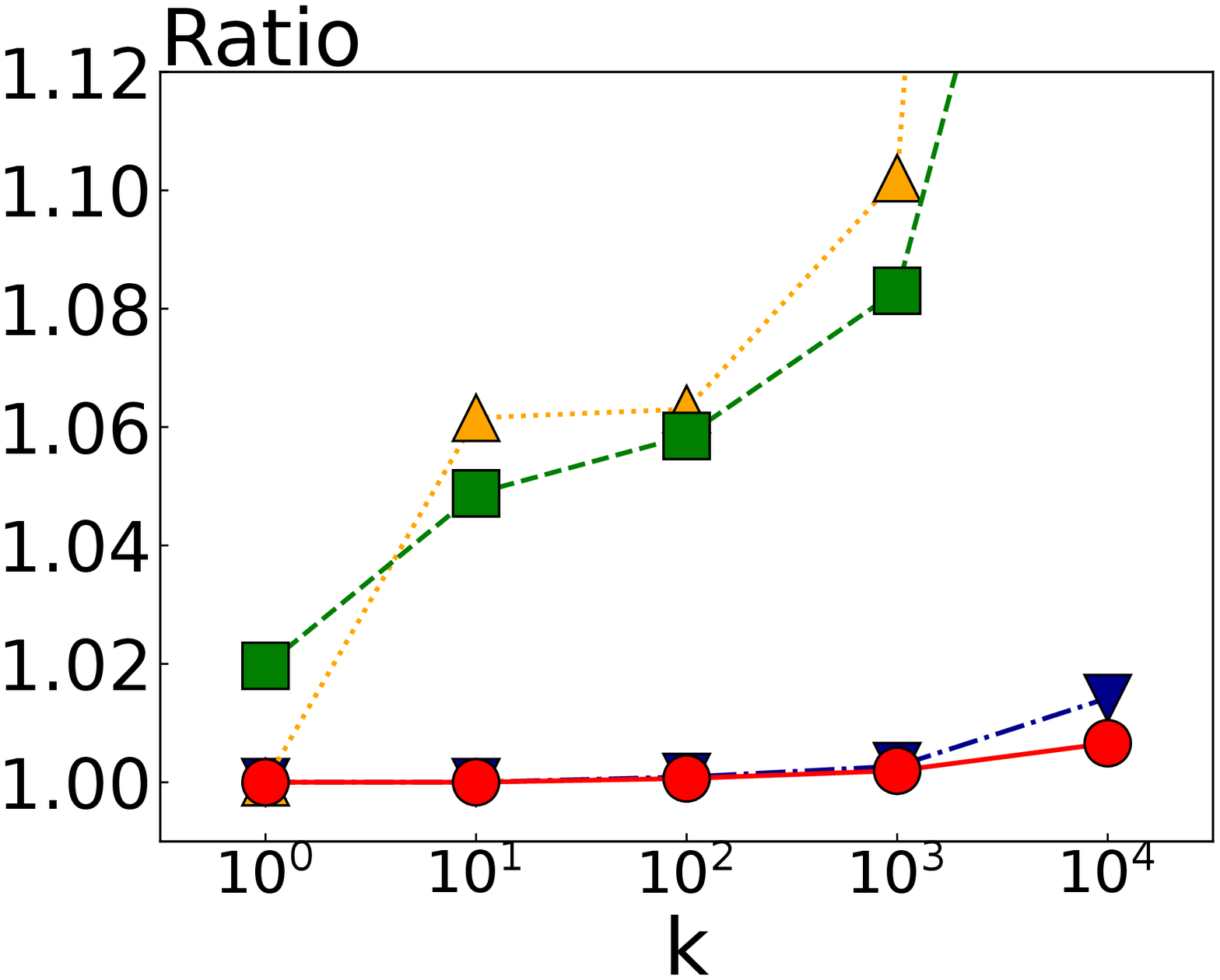}
		\end{minipage}
	}
	\caption{Performance on Audio when Varying $ k $ of CP Queries}
	\label{fig:cp_knn_Audio}
\end{figure*}

\begin{figure*}[htbp]
	\centering
	\subfigure[Time on NUS]{
		\begin{minipage}[c]{0.3\linewidth}
			\centering
			\includegraphics[width=1\textwidth]{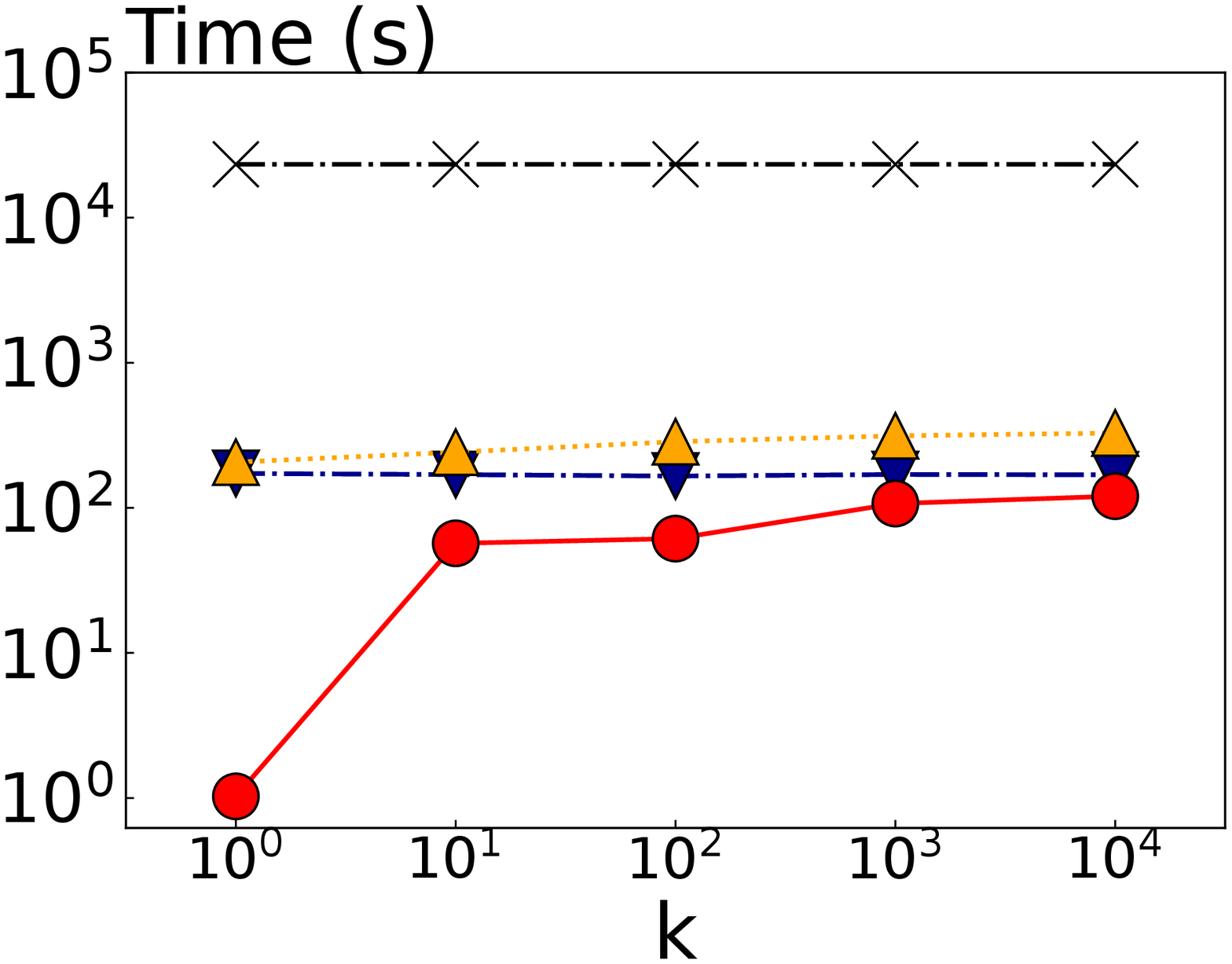}
		\end{minipage}
	}
	\subfigure[Recall on NUS]{
		\begin{minipage}[c]{0.3\linewidth}
			\centering
			\includegraphics[width=1\textwidth]{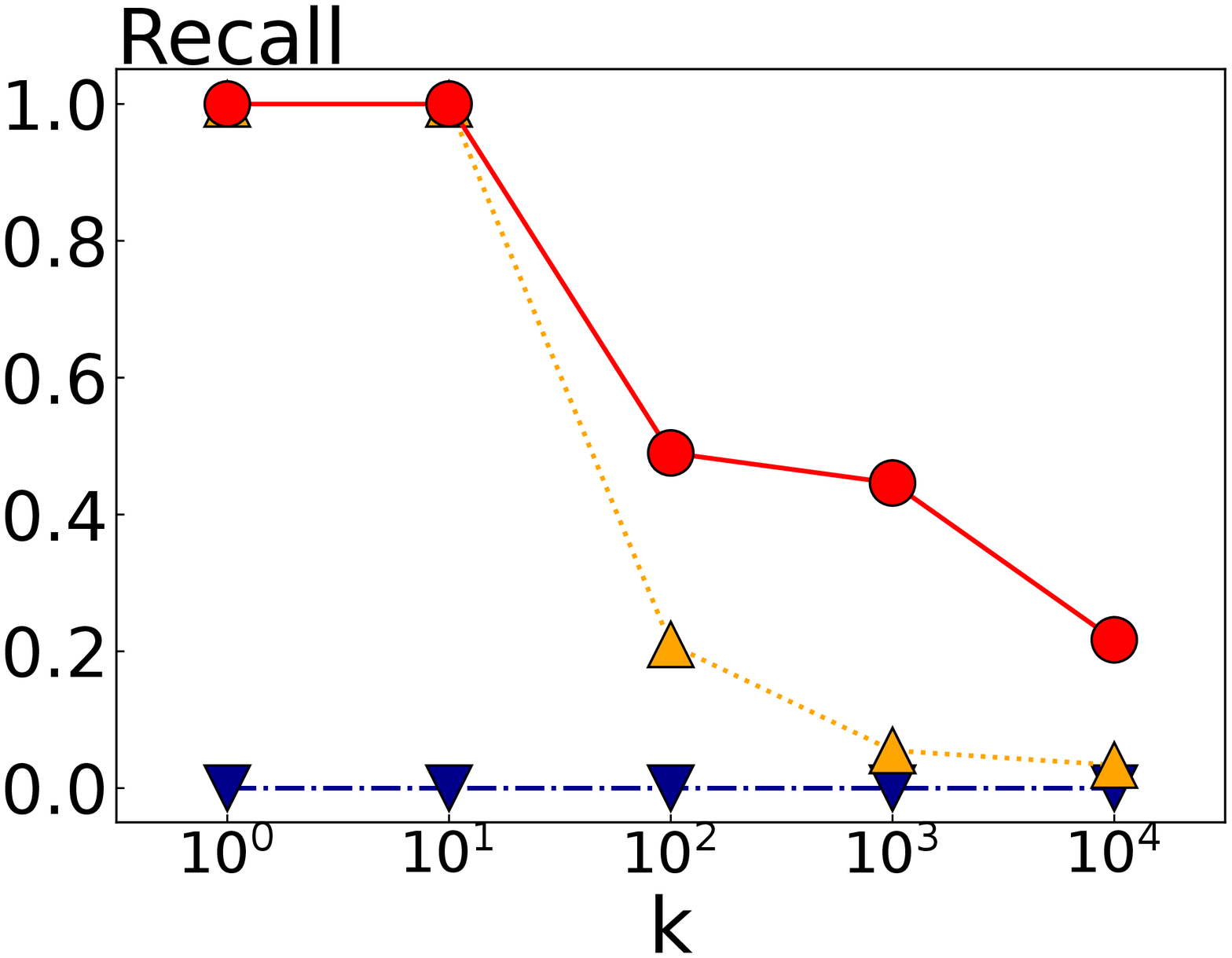}
		\end{minipage}
	}
	\subfigure[Ratio on NUS]{
		\begin{minipage}[c]{0.3\linewidth}
			\centering
			\includegraphics[width=1\textwidth]{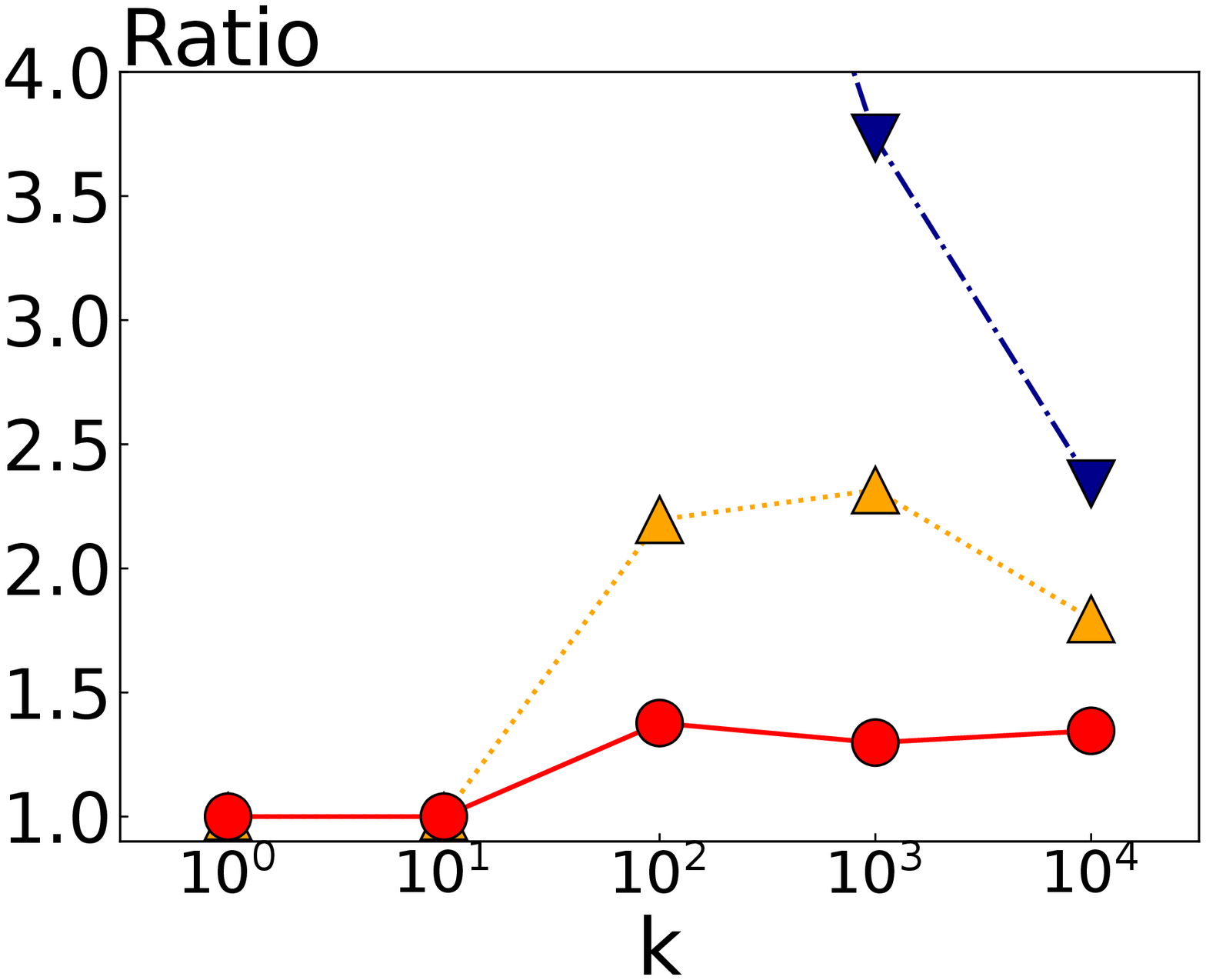}
		\end{minipage}
	}
	\caption{Performance on NUS when Varying $ k $ of CP Queries}
	\label{fig:cp_knn_NUS}
\end{figure*}

\begin{figure*}[htbp]
	\centering
	\subfigure[Time on Trevi]{
		\begin{minipage}[c]{0.3\linewidth}
			\centering
			\includegraphics[width=1\textwidth]{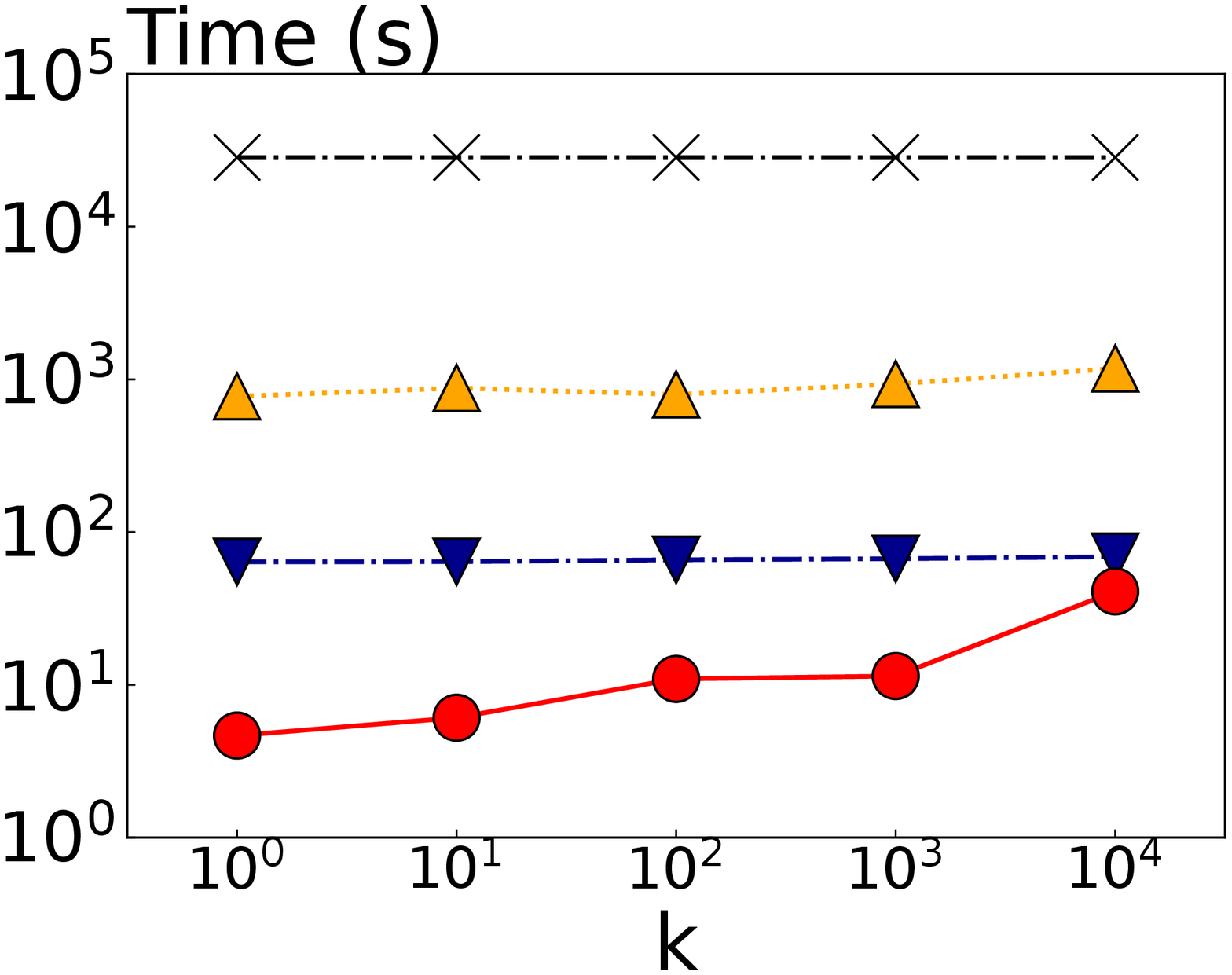}
		\end{minipage}
	}
	\subfigure[Recall on Trevi]{
		\begin{minipage}[c]{0.3\linewidth}
			\centering
			\includegraphics[width=1\textwidth]{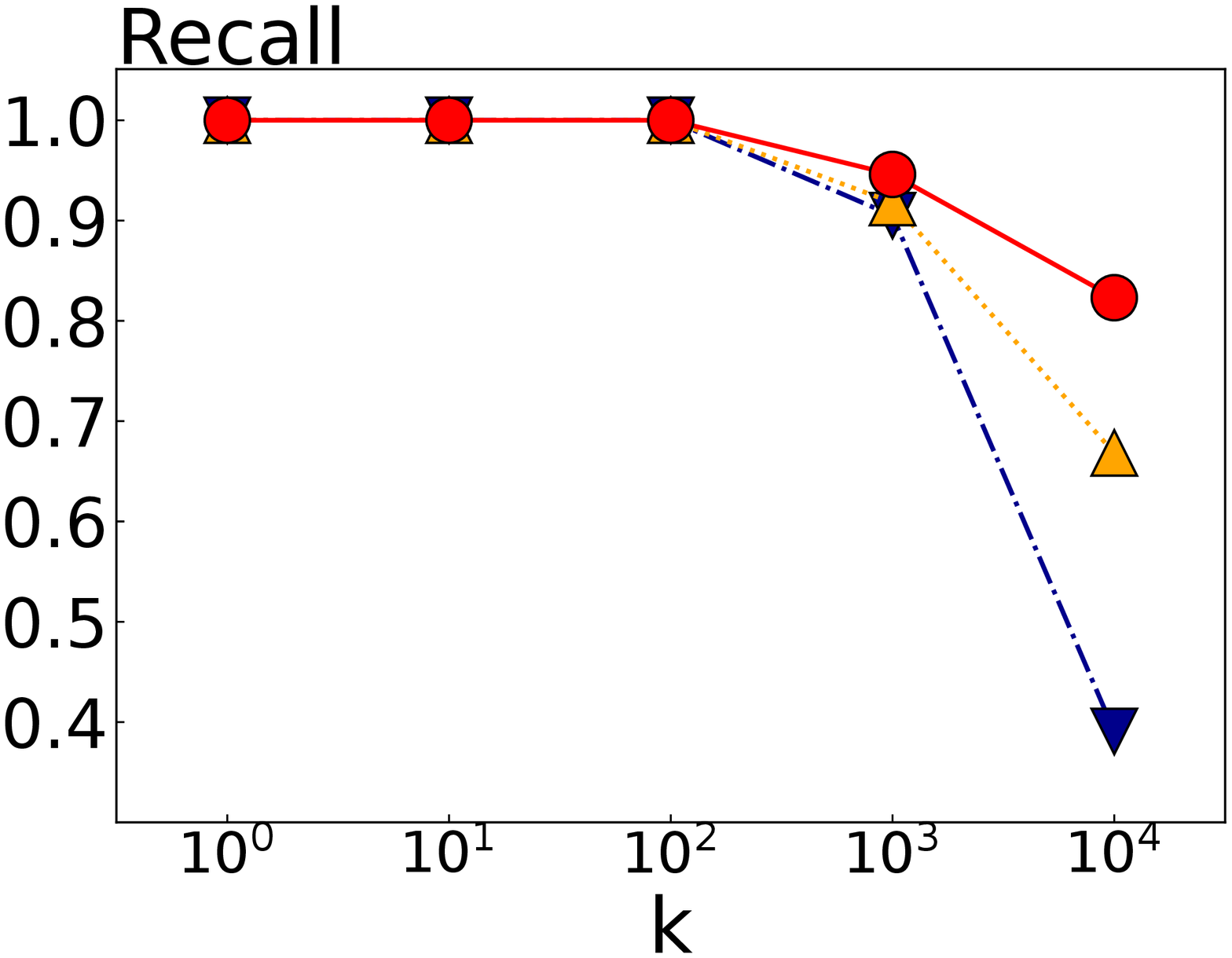}
		\end{minipage}
	}
	\subfigure[Ratio on Trevi]{
		\begin{minipage}[c]{0.3\linewidth}
			\centering
			\includegraphics[width=1\textwidth]{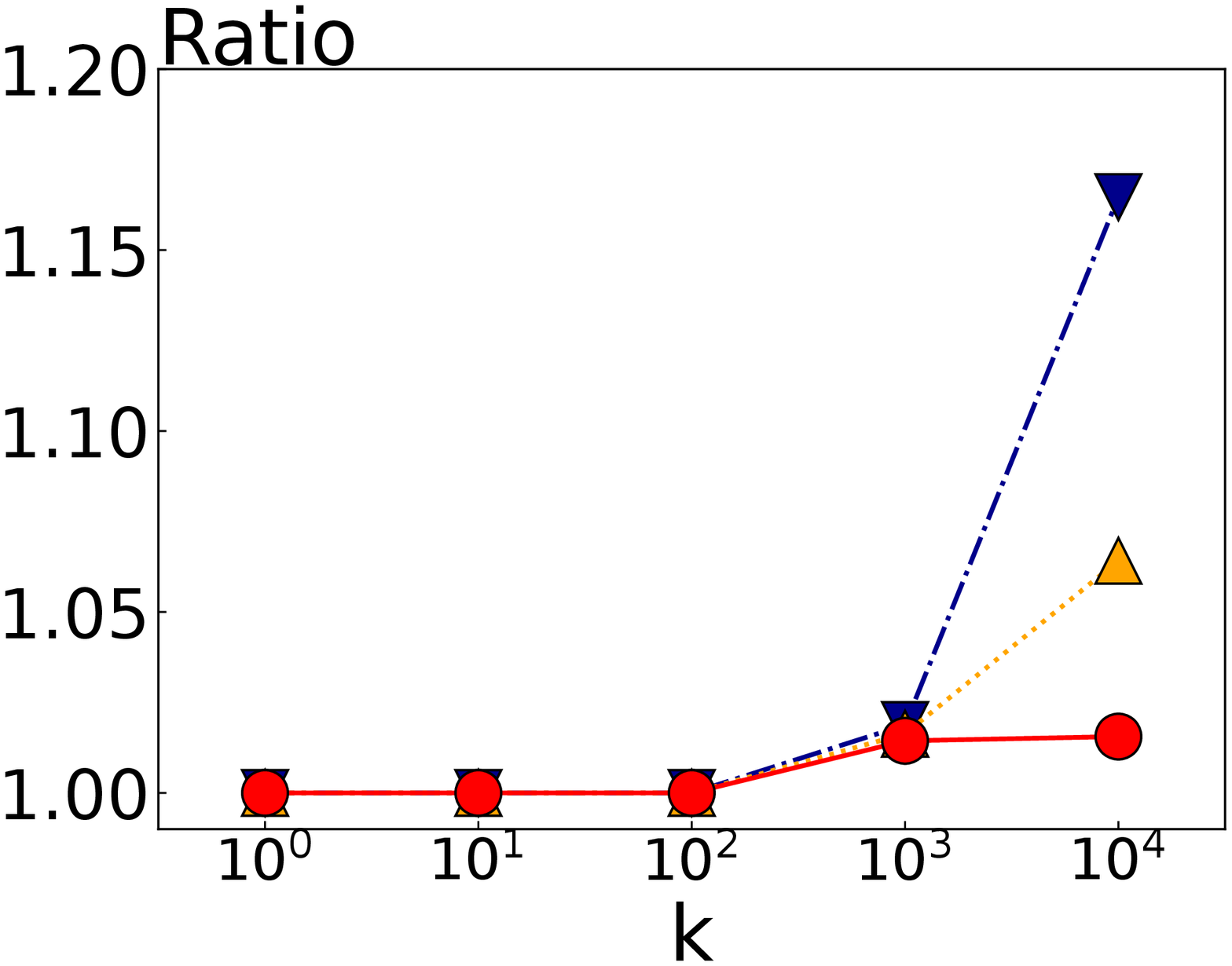}
		\end{minipage}
	}
	\caption{Performance on Trevi when Varying $ k $ of CP Queries}
	\label{fig:cp_knn_Trevi}
\end{figure*}

\begin{figure*}[htbp]
	\centering
	\subfigure[Recall-Time on Audio]{
		\begin{minipage}[c]{0.3\linewidth}
			\centering
			\includegraphics[width=1\textwidth]{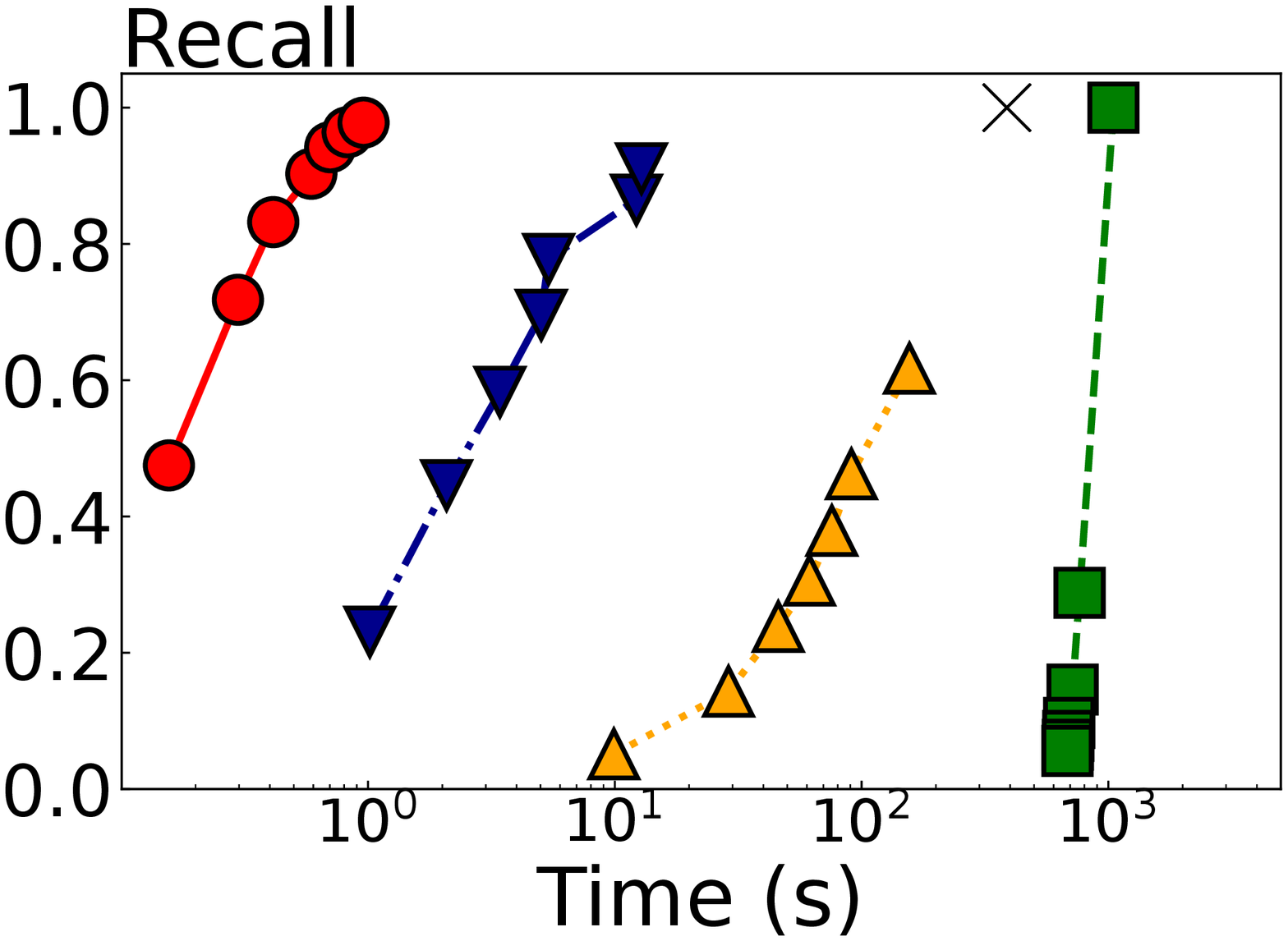}
		\end{minipage}
	}
	\subfigure[Recall-Time on Trevi]{
		\begin{minipage}[c]{0.3\linewidth}
			\centering
			\includegraphics[width=1\textwidth]{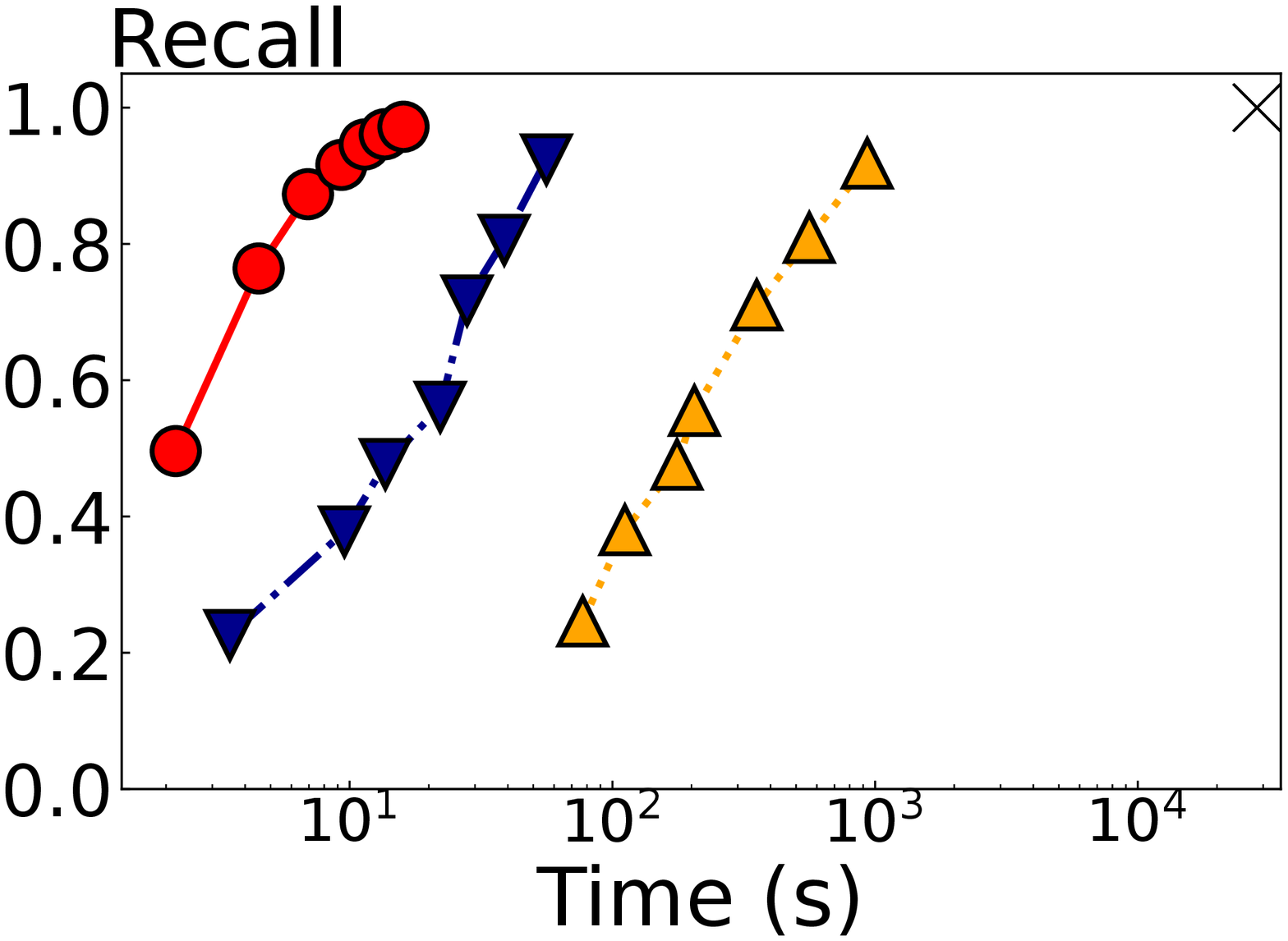}
		\end{minipage}
	}
	\subfigure[Recall-Time on NUS]{
		\begin{minipage}[c]{0.3\linewidth}
			\centering
			\includegraphics[width=1\textwidth]{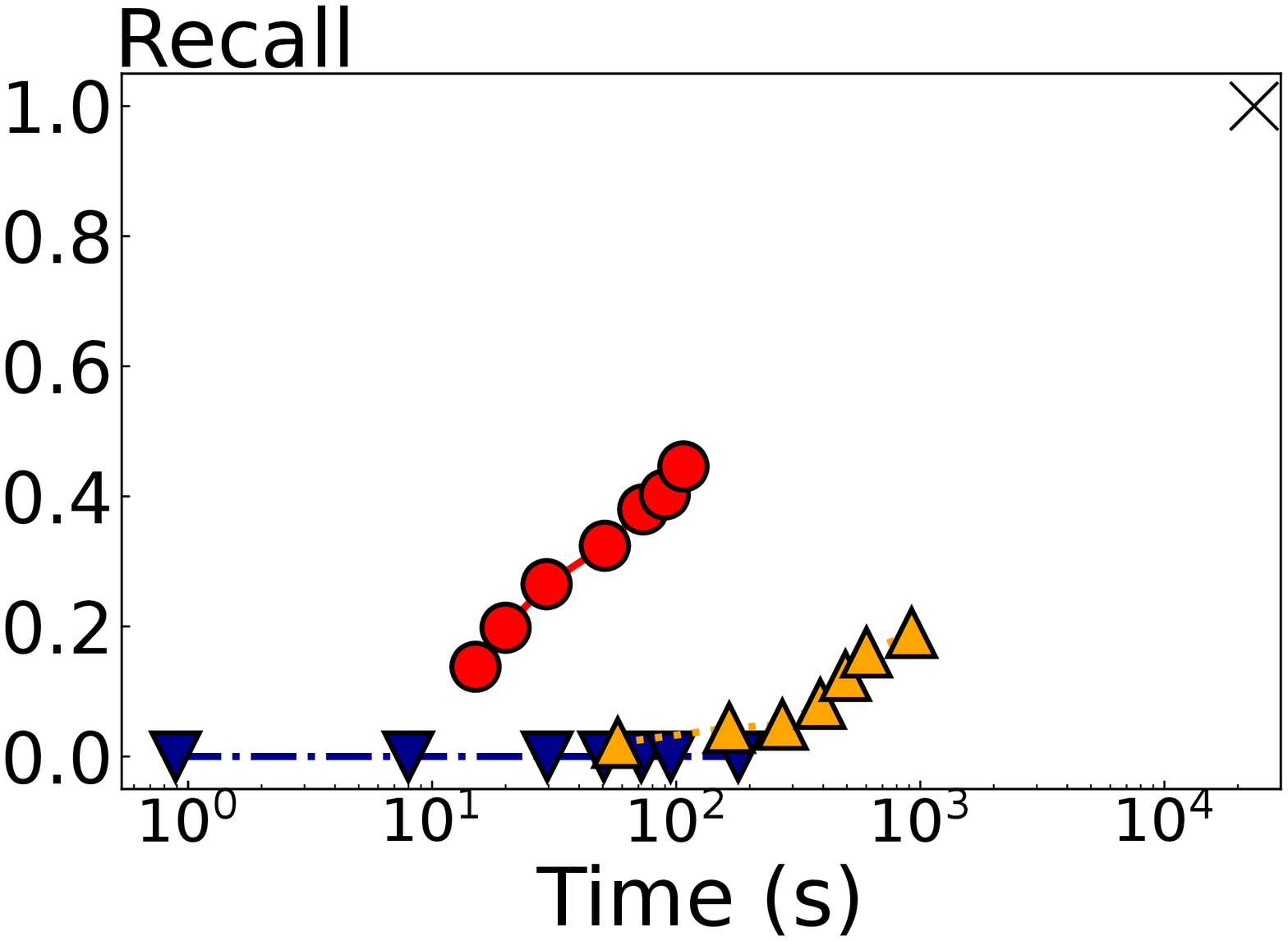}
		\end{minipage}
	}
	\caption{Recall-Time Curve for CP Queries}
	\label{fig:cp_result_recall_time}
\end{figure*}

\begin{figure*}[htbp]
	\centering
	\subfigure[Ratio-Time on Audio]{
		\begin{minipage}[c]{0.3\linewidth}
			\centering
			\includegraphics[width=1\textwidth]{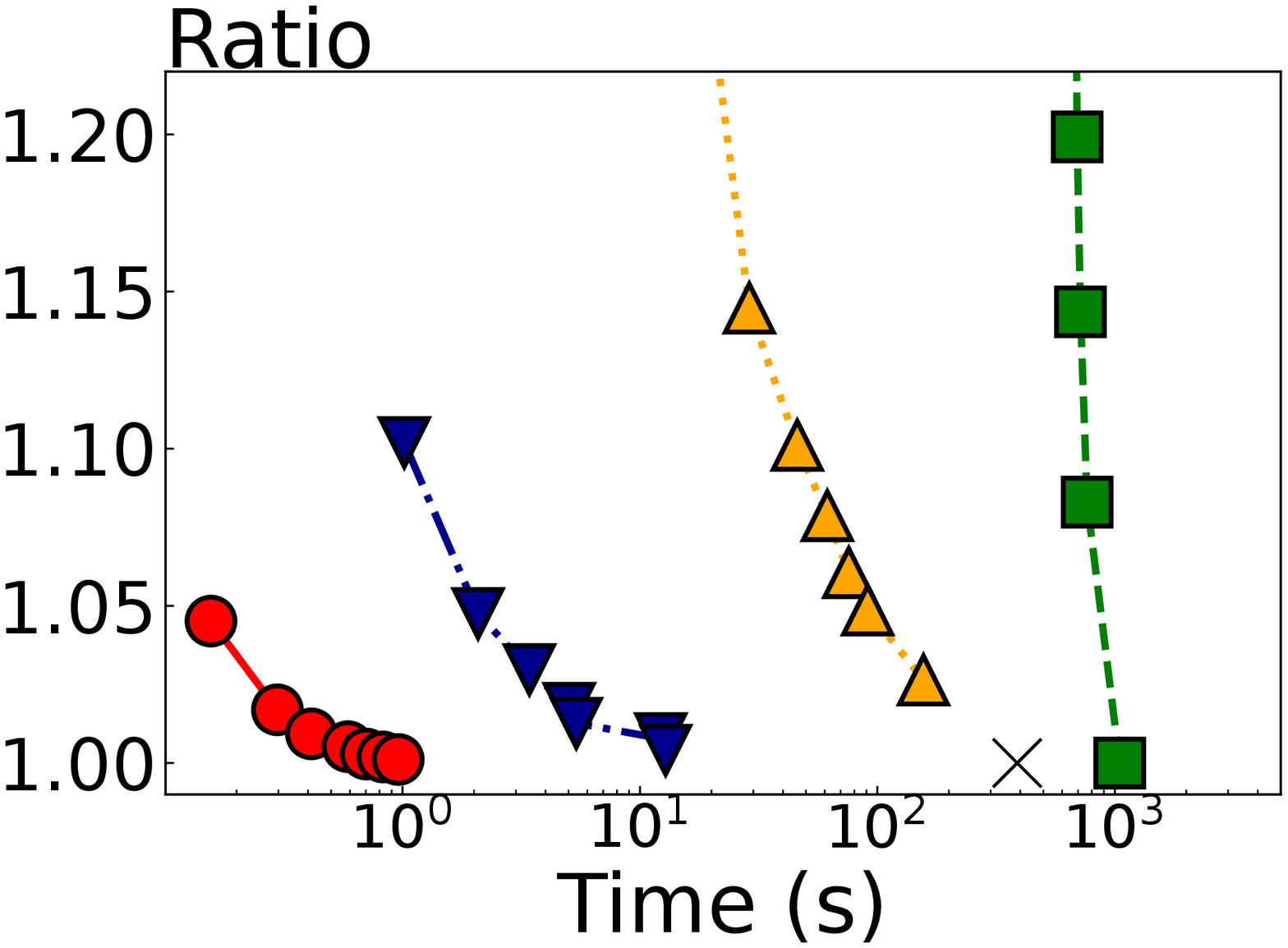}
		\end{minipage}
	}
	\subfigure[Ratio-Time on Trevi]{
		\begin{minipage}[c]{0.3\linewidth}
			\centering
			\includegraphics[width=1\textwidth]{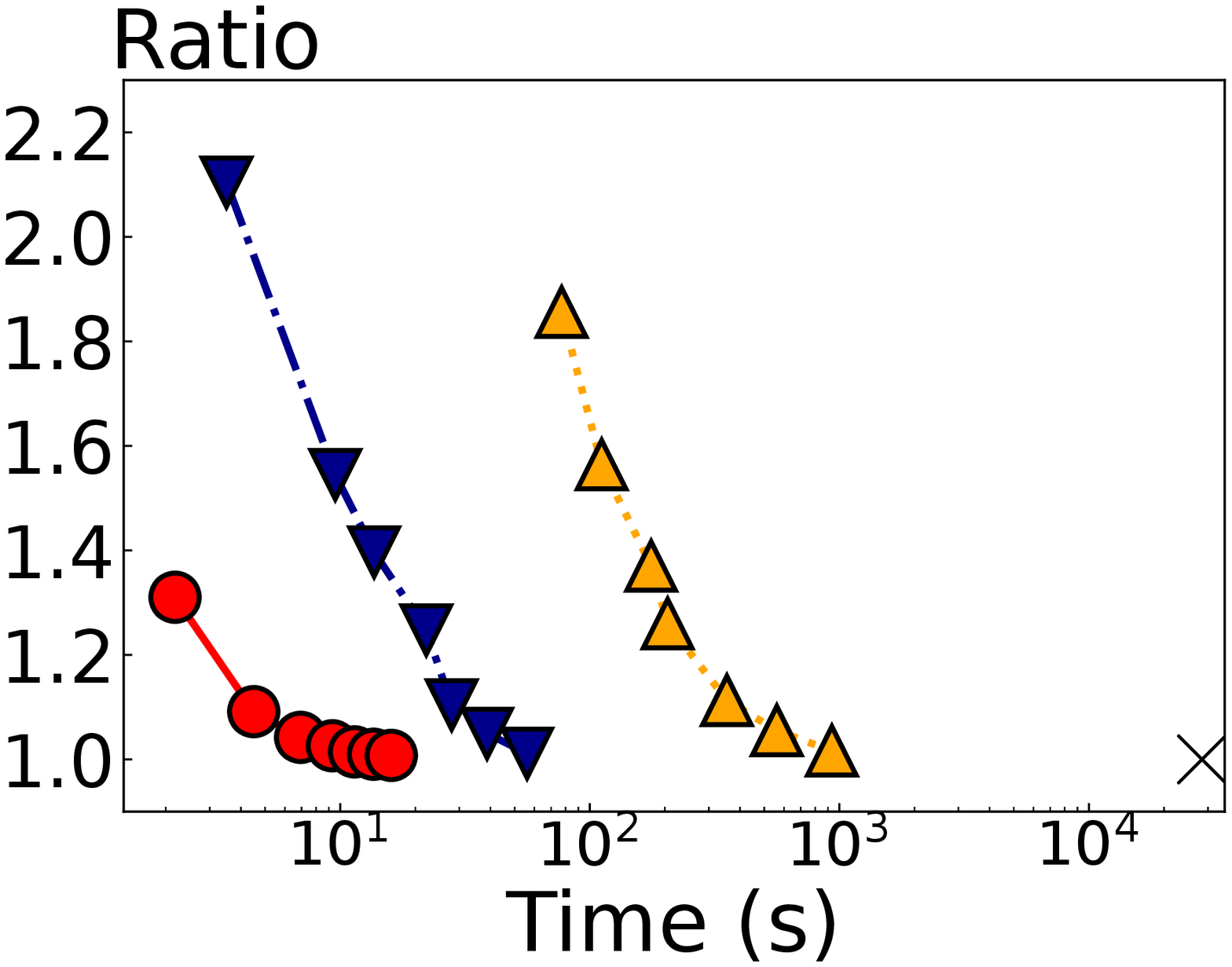}
		\end{minipage}
	}
	\subfigure[Ratio-Time on NUS]{
		\begin{minipage}[c]{0.3\linewidth}
			\centering
			\includegraphics[width=1\textwidth]{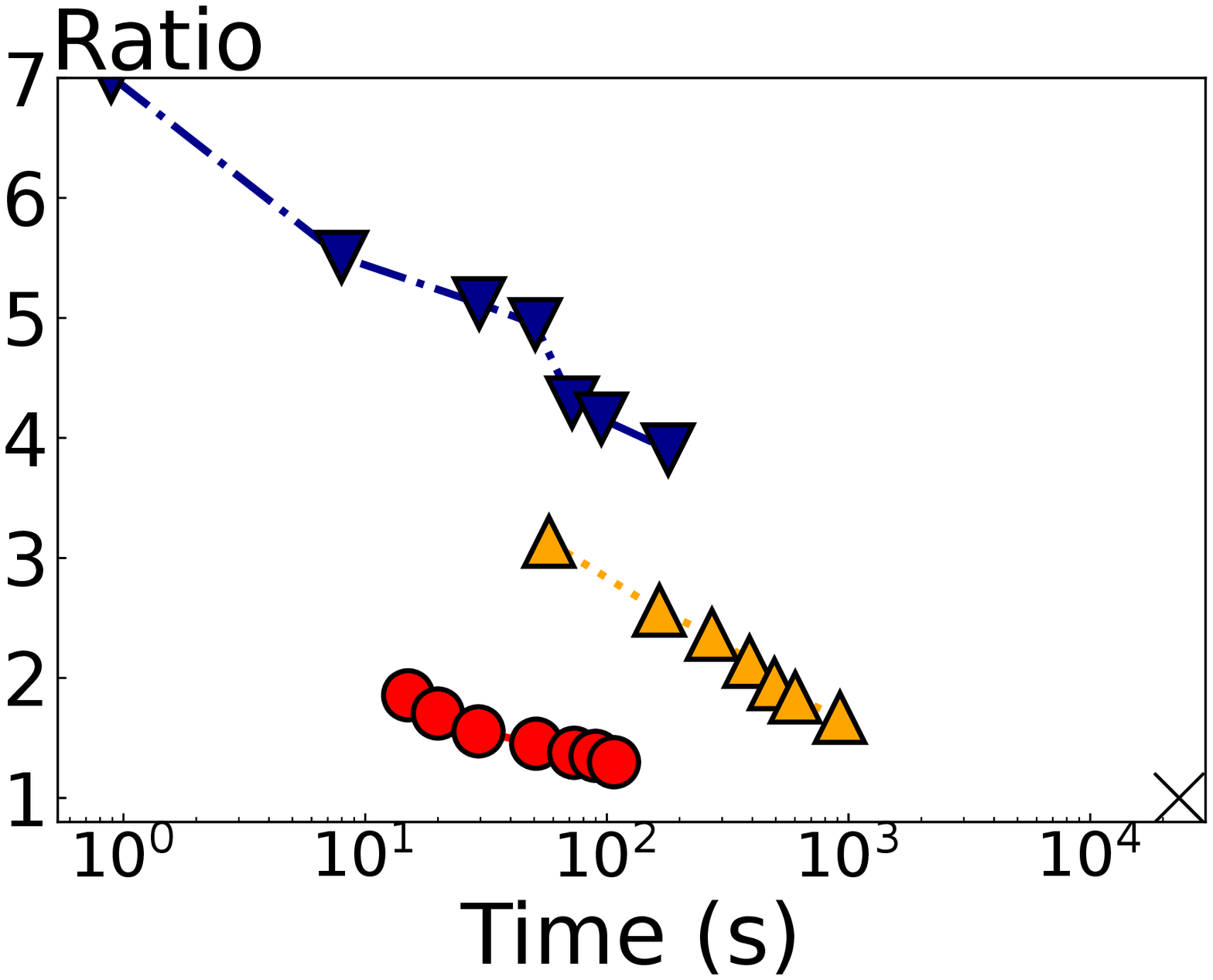}
		\end{minipage}
	}
	\caption{Ratio-Time Curve for CP Queries}
	\label{fig:cp_result_ratio_time}
\end{figure*}

\textbf{Effect of $ k $}.
Next, we study the performance when varying $ k $ in $\{1,10,10^2,$ $10^3,10^4\}$. For brevity, we only report the performance on datasets \textit{Audio}, \textit{Trevi}, and \textit{NUS}. We choose \textit{Audio} and \textit{NUS} instead of \textit{Cifar} and \textit{Deep} because M$k  $CP and ACP-P are inefficient for the latter two. The results are shown in Figs. \ref{fig:cp_knn_Audio}--\ref{fig:cp_knn_Trevi}.

With the increase of $ k $, most algorithms incur longer query times and worse recall and overall ratio.
The reason for a larger query time is that $ k $ affects the number of candidate pairs. PM-LSH, ACP-P, and M$ k $CP all use the $ k $-th smallest distance for pruning, so a large $ k $ means that more candidate pairs must be verified. The LSB-Tree returns the best $ k $ objects from a nearly fixed-size candidate sets, so its query time increases only slowly with $ k $. An exceptional case occurs for the LSB-tree on \textit{NUS}. The overall ratio improves with the increase of $k$. This is because many pairs have almost the same distances. 
When the result size increases, although the exact results are not found, the ratio of the distance of the $i$-th returned pair over that of the $i$-th exact pair decreases.

When considered across datasets, PM-LSH exhibits a consistent high accuracy. However, the query time of each algorithm varies substantially across the different datasets, which can be explained by three observations. (1) The query time is affected significantly by dataset cardinality $n$. For instance, the query times of PM-LSH, the LSB-tree, and ACP-P are subquadratic to $n$; the query time of M$k$CP is $O(n^2)$ in the worst case. (2) The query time is affected by dataset dimensionality $d$. All algorithms need to verify candidate pairs, and the cost is linear in $d$. (3) The data distribution also affects the query time, which is a key determining factor for when the algorithms terminate.

To sum up, PM-LSH has the smallest query time among all competitors. In addition, the accuracy is high.
Only the LSB-tree is able to achieve a competitive recall in some cases but incurs longer query time than PM-LSH.

\textbf{Recall-Time and OverallRatio-Time Curves}.
We proceed to study the relationship between the recall or overall ratio and the query time for $(c,k)$-ACP queries on all the datasets when varying their configurations to obtain different query times, such as $c$ for PM-LSH, $N$ for M$k$CP, $L$ for the LSB-tree, and repeat times for ACP-P. The results are shown in Figs. \ref{fig:cp_result_recall_time} and \ref{fig:cp_result_ratio_time}. As the query quality and the query time represent the key tradeoff, the algorithms focus on returning relatively good results with much smaller query times than those of exact CP algorithms. 
The results show that all algorithms return more accurate results when more query time is used. 
They also show that PM-LSH achieves superior efficiency and accuracy when compared to the LSB-tree, ACP-P, and M$k$CP. This can be explained as follows. First, PM-LSH has a better distance estimator than the LSB-tree and ACP-P, so PM-LSH outperforms them with the same number of retrieved points. Second, PM-LSH uses a radius filtering technique to generate candidate pairs, which reduces substantially the cost of generating candidate pairs and provides a well-designed condition to terminate the process early. Third, the hyper-ball and hyper-ring space partitioning help reduce unnecessary verification overhead. In addition, although M$k$CP also finds approximate closest pairs in a space partitioning tree, it indexes high-dimensional data directly, which makes pruning difficult. Therefore, its query time is much larger than those of the other methods.

\section{Related Work}\label{sec:relatedwork}

\subsection{LSH for Nearest Neighbor Search}

Locality-Sensitive Hashing (LSH) is a prominent approach to speeding up the processing of approximate nearest neighbor querying \cite{DBLP:conf/vldb/GionisIM99, DBLP:conf/compgeom/DatarIIM04, DBLP:conf/www/BawaCG05, DBLP:conf/vldb/LvJWCL07, DBLP:conf/cikm/DongWJCL08}. LSH was originally proposed by Indyk et al. \cite{DBLP:conf/stoc/IndykM98} for use in Hamming space, and it has since attracted substantial attention due to its excellent performance.
Datar et al. \cite{DBLP:conf/compgeom/DatarIIM04} propose an LSH function based on $p$-stable distributions in Euclidean space, which has become a mainstream method that yields low computation cost, a simple geometric interpretation, and a good quality guarantee. 
Since then, many LSH methods build on this work to choose hash functions \cite{DBLP:conf/vldb/LvJWCL07, DBLP:conf/edbt/HaghaniMA09, DBLP:conf/sigmod/TaoYSK09, DBLP:conf/sigmod/GanFFN12, DBLP:journals/pvldb/SunWQZL14, DBLP:journals/pvldb/HuangFZFN15}. 
In addition to the competitors introduced in Section \ref{sec:competitors}, other proposals deserve mention.
Based on a rigorous theoretical analysis, Panigrahy et al. \cite{DBLP:conf/soda/Panigrahy06} propose an entropy-based LSH, and Satuluri et al. \cite{DBLP:journals/pvldb/SatuluriP12} propose a BayesLSH.
The former tries to reduce the number of hash tables by using multiple perturbed queries, and the latter aims to reduce the query time by estimating the similarity between data and query objects based on Bayes rule. However, both yield limited performance improvements as the assumptions made on the underlying dataset are hard to satisfy and verify.
Another interesting proposal is LazyLSH \cite{DBLP:conf/sigmod/ZhengGTW16}, which supports queries in multiple $l_p$ spaces by using one index, thus effectively reducing the space overhead.
Another line of hashing-based methods is learning to hash (L2H) \cite{DBLP:journals/pami/WangZSSS18}, which is orthogonal to our work. LSH uses predefined hash functions without considering the underlying dataset, while L2H learns tailored, data dependent hash functions. Many learning algorithms have been proposed, such as iterative quantization (ITQ) \cite{DBLP:journals/pami/GongLGP13} and generate-to-probe QD ranking (GQR) \cite{DBLP:conf/sigmod/LiYZXCLNC18}. 

\subsection{High Dimensional Closest Pair Search}

Closest-Pair (CP) search is an important problem in the database domain. Early studies target mainly low-dimensional closest pair search \cite{DBLP:conf/sigmod/CorralMTV00,DBLP:conf/sigmod/HjaltasonS98,DBLP:conf/ssd/ShanZS03,DBLP:journals/dke/CorralMTV04,DBLP:journals/tkde/ShinML03,DBLP:journals/tkde/KimP10}. They adopt spatial index structures, such as the R-tree and Quadtree and their variants, to organize the data.
However, these methods fail to handle high-dimensional closest pair search due to the curse of dimensionality. Corral et al. \cite{DBLP:conf/adbis/CorralDMV05} propose a join method based on the VA-file, which is an array structure rather than a tree structure. Angiulli et al. \cite{DBLP:journals/dke/AngiulliP05} adopt the Z-curve to reduce the dimensionality and generate candidates in one-dimensional spaces. 
Tao et al. \cite{DBLP:journals/tods/TaoYSK10} propose an LSB-tree that uses a compound hash function to project points into a low-dimensional space. Next, they adopt the Z-curve to map the projected points into one-dimensional values that are indexed by a B-tree. 
Candidate point pairs are generated from the points with the same Z-values. However, $L=O(\sqrt{n})$ B-trees are required, thus causing a large space consumption. Mueen et al. \cite{DBLP:conf/sdm/MueenKZCW09} partition the data based on their distances to a pivot and thus map the high-dimensional data to a one-dimensional space. Other studies use LSH \cite{DBLP:journals/tkde/YuNLWY17,DBLP:journals/tkde/LiNXYH19} or random projection\cite{DBLP:conf/pakdd/CaiRZ18} to reduce the dimensionality. For instance, Cai et al. \cite{DBLP:conf/pakdd/CaiRZ18} project the data directly into a one-dimensional space. Nearby points in the projected space are considered as candidate point pairs. However, the distance estimation is inaccurate and leads to unnecessary verification overhead.

Unlike the previously covered methods that use dimension reduction, yet other studies organize the original data directly by means of novel index structures, such as the LTC index \cite{DBLP:journals/jda/ParedesR09}, the multi-ball \cite{kurasawa2011finding,DBLP:conf/sisap/FredrikssonB13}, and the eD-Index \cite{DBLP:conf/sisap/PearsonS14}. Specifically, Gao et al. \cite{DBLP:journals/vldb/GaoCLYC15} propose several efficient algorithms using the count M-tree.
However, these methods still suffer from the curse of dimensionality. 

In addition, distributed indexing based approaches \cite{DBLP:conf/kdd/WangMP13,DBLP:journals/tkde/LiNXYH19} are proposed to accelerate CP search. These enable in-memory processing of large scale datasets.
\section{Conclusion} \label{sec:conclusion}
We present a fast and accurate in-memory framework, called PM-LSH, for computing $(c,k)$-ANN and $(c,k)$-ACP queries with theoretical result quality guarantees. For NN queries, we first adopt the PM-tree to index the data points to be queried in a projected space. Second, in order to improve the distance estimation accuracy in the projected space, we develop a tunable confidence interval on the projected distance w.r.t. a given original distance. Finally, we propose an efficient algorithm to compute range queries using the PM-tree.
The experimental study using seven widely used datasets shows that PM-LSH is capable of outperforming five competitors in terms of
both query efficiency and result accuracy. Specifically, PM-LSH improves the query time by an average of 30\% when compared to the closest competitor. When all competitors are given approximately the same query time, PM-LSH improves the recall by about 10\% when compared to the closest competitor.

For computing CP queries, we also use the PM-tree to index the points in the projected space. We propose a radius filtering technique for finding closest pairs in the PM-tree. The experimental study shows that PM-LSH is capable of outperforming four competitors in terms of
both query efficiency and result accuracy. Specifically, PM-LSH improves the query time by an average of 40\% when compared to the closest competitor. When all the competitors are given approximately the same query time, PM-LSH improves the recall by about 50\% when compared to the closest competitor.

\section*{Acknowledgments}
This research is supported in part by the NSFC (Grants No. 61902134, 62011530437), the Hubei Natural Science Foundation (Grant No. 2020CFB871), and the Fundamental Research Funds for the Central Universities (HUST: Grants No. 2019kfyXKJC021, 2019kfyXJJS091).

\bibliographystyle{abbrv}
\bibliography{myRef}
\end{document}